\newtheorem{lemma}{Lemma}
\newtheorem{definition}{Definition}
\newtheorem{theorem}{Theorem}
\newtheorem{remark}{Remark}
\newcommand{\upperRomannumeral}[1]{\uppercase\expandafter{\romannumeral#1}}
\newcommand{\R}{\mathsf{R}}
\newcommand{\V}{\mathbb{V}}
\newcommand{\PP}{\mathbb{P}}
\newcommand{\DD}{\mathbb{D}}
\newcommand{\E}{\mathbb{E}}
\newcommand{\M}{\mathcal{M}}
\newcommand{\x}{\mathbf{x}}
\newcommand{\y}{\mathbf{y}}
\newcommand{\z}{\mathbf{z}}
\newcommand{\X}{\mathbf{X}}
\newcommand{\Y}{\mathbf{Y}}
\newcommand{\Z}{\mathbf{Z}}
\newcommand{\ux}{\underline{\mathbf{x}}}
\newcommand{\uy}{\underline{\mathbf{y}}}
\newcommand{\uz}{\underline{\mathbf{z}}}
\newcommand{\wyx}{W_{\kern-0.17em Y \kern -0.06em|\kern -0.1em X}}
\newcommand{\wyxn}{W^{\kern -0.06em \otimes n}_{\kern-0.17em Y \kern -0.06em|\kern -0.1em X}}
\newcommand{\wyxw}{W^{\kern -0.06em \otimes w}_{\kern-0.17em Y \kern -0.06em|\kern -0.1em X}}
\newcommand{\wzx}{W_{\kern-0.17em Z \kern -0.06em|\kern -0.1em X}}
\newcommand{\wzxn}{W^{\kern -0.06em \otimes n}_{\kern-0.17em Z \kern -0.06em|\kern -0.1em X}}
\newcommand{\wzxw}{W^{\kern -0.06em \otimes w}_{\kern-0.17em Z \kern -0.06em|\kern -0.1em X}}
\newcommand{\D}{\mathcal{D}}
\newcommand{\F}{\mathcal{F}}
\newcommand{\C}{\mathcal{C}}
\newcommand{\B}{\mathcal{B}}
\newcommand{\pwx}{P^{w}_{\underline{\mathbf{X}}}}
\newcommand{\pwy}{P^{w}_{\underline{\mathbf{Y}}}}
\newcommand{\pwz}{P^{w}_{\underline{\mathbf{Z}}}}
\newcommand{\ppmx}{P_{\X}^{n,l}}
\newcommand{\ppmy}{P_{\Y}^{n,l}}
\newcommand{\ppmz}{P_{\Z}^{n,l}}
\newcommand{\Um}{U_m}
\newcommand{\tUm}{\widetilde{U}_m}
\newcommand{\Uxm}{U_m(\x)}
\newcommand{\Uxhm}{U_{m'}(\x)}
\begin{document}
%
\title{Covert Identification over  \\ Binary-Input Discrete Memoryless Channels}
%
%
%

\author{
        Qiaosheng Zhang,
        Vincent~Y.~F.~Tan,~\IEEEmembership{Senior~Member,~IEEE}
\thanks{Qiaosheng Zhang is with the Department
of Electrical and Computer Engineering, National University of Singapore (e-mail: elezqiao@nus.edu.sg).} 
\thanks{Vincent~Y.~F.~Tan is with the Department
of Electrical and Computer Engineering and Department of Mathematics, National University of Singapore (e-mail: vtan@nus.edu.sg).}
}

%
%

\markboth{}%
{Shell \MakeLowercase{\textit{et al.}}: Bare Demo of IEEEtran.cls for IEEE Journals}
%



\maketitle

\begin{abstract}
This paper considers the covert identification problem in which a sender aims to reliably convey an identification (ID) message to a set of receivers via a binary-input discrete memoryless channel (BDMC), and simultaneously to guarantee that the communication is covert with respect to a warden who monitors the communication via another independent BDMC. We prove a square-root law for the covert identification problem. This states that an ID message of size $\exp(\exp(\Theta(\sqrt{n})))$ can be transmitted over $n$ channel uses. We then characterize the exact pre-constant in the $\Theta(\cdot)$ notation. This constant  is referred to as the covert identification capacity. We show that  it equals the recently developed covert capacity in the standard covert communication problem, and somewhat surprisingly, the covert identification capacity can be achieved without any shared key between the sender and receivers. The achievability proof relies on a random coding argument with pulse-position modulation (PPM), coupled with a second stage which performs code refinements. The converse proof relies on an expurgation argument as well as results for channel resolvability with stringent input constraints.
\end{abstract}

\begin{IEEEkeywords}
Covert communication, Identification via channels, Channel resolvability.
\end{IEEEkeywords}

\flushbottom
\section{Introduction}

In contrast to Shannon's classical channel coding problem~\cite{shannon1948mathematical} (also known as the \emph{transmission problem}) in which a sender wishes to reliably send a message to a receiver through a noisy channel $W$, the problem of \emph{identification via channels}~\cite{ahlswede1989identification} (or simply the \emph{identification problem}) is rather different. It focuses on a different setting wherein a sender wishes to send an \emph{identification (ID) message} $m \in \M$ via a noisy channel $W$ to a \emph{set} of receivers $\{\R_{m'}\}_{m' \in \M}$, each observing the (same) outputs of the channel, such that every receiver $\R_{m'}$ only cares about its dedicated message $m'$ and should be able to reliably answer the following question: \emph{Is the ID message sent by the sender $m'$}? Specifically, if the ID message sent by the sender is $m$, 
\begin{itemize}
	\item The receiver $\R_{m'}$ should answer ``YES'' with high probability if $m' = m$;
	\item The receiver $\R_{m'}$ should answer ``NO'' with high probability if $m' \ne m$.
\end{itemize}
It is well known that in the transmission problem, one can reliably transmit a message of size $\exp(\Theta(n))$ over $n$ channel uses, and the pre-constant is characterized by the celebrated \emph{channel capacity} $\mathsf{C}_W \triangleq \max_{P} I(P,W)$, i.e., the mutual information of the input and output of the channel $W$ maximized over the input distribution $P$. In the identification problem, Ahlswede and Dueck~\cite{ahlswede1989identification} showed that the size of the ID message can be as large as $\exp(\exp(\Theta(n)))$, i.e.,  doubly-exponentially large in the blocklength $n$. Somewhat surprisingly, the exact pre-constant  in the $\Theta(\cdot)$ notation, which is referred to as the \emph{identification capacity}, is again $\mathsf{C}_W$~\cite{ahlswede1989identification, han1992new}. That is, the identification capacity exactly equals the channel capacity. 

Apart from reliability guarantees, recent years have witnessed an increasing attention to security concerns, especially  in networked communication systems such as the Internet of Things. From an information-theoretic perspective, the security of the classical transmission problem has been extensively studied since Wyner's seminal paper~\cite{wyner1975wire} on the wiretap channel (see~\cite{liang2009information,bloch2011physical} for surveys), and the secure identification problem has been investigated as well~\cite{ahlswede2006transmission,boche2018secure,boche2018secure2}. While most security problems are concerned with hiding the \emph{content} of information, in certain  scenarios merely the \emph{fact} that communication takes place could lead to serious consequences---thus, the sender is required to hide  the fact that he or she is communicating when he or she does so. Said differently, the sender needs to communicate {\em covertly} with respect to the warden  who is surreptitiously monitoring the communication.  This motivates the recent studies of the \emph{covert communication} problem. Following the pioneering work by Bash \emph{et al.}~\cite{bash2013limits} which demonstrates a \emph{square-root law} (SRL) (i.e., one can only transmit $\Theta(\sqrt{n})$ bits over $n$ channel uses) for covert communication, subsequent works have built on the initial work~\cite{bash2013limits} to establish information-theoretic limits for covert communication  over binary symmetric channels~\cite{CheBJ:13, che_reliable_2014-2, che2014reliable}, discrete memoryless channels (DMCs) and Gaussian channels~\cite{bloch2016covert, wang2016fundamental, tahmasbi2018first,tahmasbi2017error}, multiple-access channels~\cite{arumugam2019covert}, broadcast channels~\cite{arumugam2019embedding, tan2018time, kibloff2019embedding}, compound channels~\cite{ahmadipour2019covert}, channel with states~\cite{lee2018covert, zivarifard2019keyless}, adversarial noise channels~\cite{zhang2018covert}, relay channels~\cite{arumugam2018covert2}, \emph{etc}. In the literature, the covertness constraint requires that at the warden's side, the output distribution when communication takes place is almost indistinguishable from the output distribution when no communication takes place, and the discrepancy between the two distributions is  usually measured by the
\emph{Kullback-Leibler (KL) divergence} or the \emph{variational distance}.

In addition to covert communication which focuses on the transmission problem, there are also scenarios in which the sender wishes to reliably send an ID message to a {\em set} of receivers, and simultaneously to remain covert with respect to the warden. For instance, the government would like to inform a close contact of a confirmed COVID-19 patient to get tested. As personal contact information is unknown, the government has to send a common message over a \emph{public channel} to all the citizens. Each of the citizens would like to know whether the received  message corresponds to his or her specific index; if so, he/she must go to the hospital to be tested, otherwise nothing needs to be done. Meanwhile, as there has been no local confirmed case for a long time, the government also wishes to hide the fact that the public channel is being used from other countries (such that it is not known to the outside world that there is a new confirmed case), in order to minimize the adverse effects to economy. Here, using a covert identification scheme is superior to using a covert transmission scheme as the number of uses of the public channel (which is a valuable resource) can be significantly reduced. This motivates the study of the \emph{covert identification problem}.  Given the similarities and differences between the transmission and identification problems without covertness constraints, it is then natural to ask the following questions: (i) \emph{What is the maximum size of the  ID message with covertness constraints}, (ii) \emph{Whether the \emph{covert capacity} characterized in~\cite{bloch2016covert, wang2016fundamental} plays a role in the fundamental limits of the covert identification problem}, and (iii) \emph{Is a shared key required to ensure that the identification can take place reliably?}  

 These questions precisely set the stage of this work, and our main contributions can be summarized as follows.
\begin{itemize}
	\item Analogous to the SRL in the covert communication literature, a different form of the SRL is discovered in the covert identification problem. That is, one can send an ID message of size up to $\exp(\exp(\Theta(\sqrt{n})))$ reliably and covertly, in contrast to the standard identification problem wherein the scaling is $\exp(\exp(\Theta(n)))$.
	\item We then characterize the maximal pre-constant of the $\Theta(\cdot)$ notation in $\exp(\exp(\Theta(\sqrt{n})))$, which is referred to as the \emph{covert identification capacity}. We do so by establishing matching achievability and converse results. It turns out that the covert identification capacity equals the covert capacity; however, a key difference is that the former is achieved \emph{without any shared key} between the sender and receivers---this is in stark contrast to standard covert communication wherein the shared key is necessary~\cite{bloch2016covert} for achieving covert capacity in some regimes of the channel between the sender and receiver and the channel between the sender and the warden.  
\end{itemize}

From the achievability's perspective, the requirement of a \emph{keyless} identification code prevents us from adopting the simplest and most classical construction of identification codes proposed by Ahlswede and Dueck~\cite{ahlswede1989identification}, which relies on the existence of a capacity-achieving transmission  code for the transmission problem. This is because there does not exist a keyless covert-capacity-achieving transmission code for covert communication in general~\cite{bloch2016covert}. Instead, we show the existence of keyless capacity-achieving covert identification codes by using a \emph{random coding argument} with pulse-position
modulation (PPM) and a modified information density decoder. Some remarks about our achievability scheme are given as follows.
\begin{itemize}
    \item As is well known for the identification problem, the use of \emph{stochastic encoders} is necessary to reliably send an ID message that is super-exponentially large. Stochastic encoding means that each message $m$ is \emph{stochastically} mapped to a random sequence according to a certain probability distribution. This probability distribution is referred to as the \emph{codeword} associated with $m$, and the sequences with non-zero probability mass are referred to as the \emph{constituent sequences} of the corresponding codeword.

    \item PPM codes can be viewed as a special sub-class of constant composition codes.  In this work, we use the so-called \emph{PPM input distribution} (defined in~\eqref{eq:PPM} in Section~\ref{sec:ppm}) to \emph{directly} generate the constituent sequences of each codeword in the identification code, as the PPM input distribution has been shown to be optimal for achieving covertness~\cite{bloch2017optimal}.  On the other hand, we note that PPM has also been used to construct optimal identification codes for the (non-covert) identification problem~\cite{verdu1993explicit}. However, the way that PPM is used in~\cite{verdu1993explicit} is quite different from our application of PPM codes for the covert identification problem. Roughly speaking, the identification code of Verd\'u and Wei~\cite{verdu1993explicit} is constructed based on a capacity-achieving transmission code, i.e., the constituent sequences of each codeword are \emph{sampled} from the transmission code. They then use a \emph{binary constant-weight concatenated code}, which consists of a layer of PPM codes and two layers of Reed-Solomon codes, to provide an explicit method for the sampling process (from the transmission code), such that the resultant identification code satisfies desired requirements on the error probabilities.
    
    \item It is also worth pointing out that the random coding argument does {\em not} directly ensure the existence of a good identification code with vanishing \emph{maximum} error probabilities, due to the large message size which is of order $\exp(\exp(\Theta(\sqrt{n})))$; this issue is resolved by a careful code refinement process, which is explained in Section~\ref{sec:refinement}. Our code refinement process is different from the conventional \emph{expurgation argument} that is ubiquitous in the information theory literature, in the sense that our refinement procedure preserves the channel output distribution induced by the original code; this is critical for ensuring that the covertness constraint is satisfied.
\end{itemize}
   
The proof of the converse part for the covert identification problem is also non-standard. Roughly speaking, the converse for channel identification usually relies on the achievability for \emph{channel resolvability} for \emph{general} input distributions, as discovered in Han and Verd\'u's seminal work~\cite{han1992new}. However, such general results have not been established under stringent input constraints imposed by the covertness constraint. Instead, we circumvent this difficulty by expurgating a large number of codewords and appropriately modifying the remaining codewords of the original covert identification code, such that the resultant code satisfies certain input constraints. One can then apply the idea of~\cite{han1992new} to the new code to obtain the desired converse result. It is also worth noting that the expurgation argument used in this work differs from some relevant works on covert communication~\cite{tahmasbi2018first,zhang2018covert, zhang2019undetectable}, since the identification problem relies critically on the use of stochastic encoders (as detailed in Section~\ref{sec:model}).  

The rest of this paper is organized as follows. We provide some notational conventions and an important technical lemma in Section~\ref{sec:pre}. In Section~\ref{sec:model}, we formally introduce the covert identification problem, present the main results, and sketch the proof ideas. Sections~\ref{sec:achievability} and~\ref{sec:converse} respectively provide the detailed proofs of the achievability and converse parts for the main results. In Section~\ref{sec:conclusion}, we conclude this work and propose several promising directions for future work. 

\section{Preliminaries} \label{sec:pre}
For non-negative integers $a,b \in \mathbb{N}$, we use $[a:b]$ to denote the set of integers $\{a, a+1, \ldots, b \}$.
Random variables and their realizations are respectively denoted by uppercase and lowercase letters, e.g., $X$ and $x$. Sets are denoted by calligraphic letters, e.g., $\mathcal{X}$. Vectors of length $n$ are denoted by boldface letters, e.g., $\X$ or $\x$, while vectors of shorter length (which should be clear from the context) are denoted by underlined boldface letters, e.g., $\underline{\X}$ or $\underline{\x}$. We use $X_i$ or $x_i$ to denote the $i$-th element of a vector, and $X_a^b$ or $x_a^b$ to denote the vector $(X_a, X_{a+1}, \ldots, X_b )$ or $(x_a, x_{a+1}, \ldots, x_b)$.

Throughout this paper, logarithms $\log$ and exponentials $\exp$ are based $e$.
For two probability distributions $P$ and $Q$ over the same finite set $\mathcal{X}$, we respectively define their {\em KL-divergence}, {\em variational distance}, and {\em $\chi_2$-distance} as 
\begin{align*}
	\DD(P \Vert Q) &\triangleq \sum_{x \in \mathcal{X}} P(x) \log\frac{P(x)}{Q(x)}, \\
	\V(P, Q) &\triangleq \frac{1}{2}\sum_{x \in \mathcal{X}} |P(x) - Q(x)|, \\
	\chi_2(P \Vert Q) &\triangleq \sum_{x \in \mathcal{X}} \frac{(P(x)-Q(x))^2}{Q(x)}.
\end{align*}
We say $P$ is \emph{absolutely continuous} with respect to $Q$ (denoted by $P \ll Q$) if the support of $P$ is a subset of the support of $Q$ (i.e., for all $x \in \mathcal{X}$, $P(x) = 0$ if $Q(x) = 0$). 

Moreover, we introduce a concentration inequality that is widely used in this work---Hoeffding's inequality.
\begin{lemma}[Hoeffding's inequality~\cite{hoeffding1994probability}] \label{lemma:hoeffding}
	Suppose $\{X_i\}_{i=1}^n$ is a set of independent random variables such that $a_i \le X_i \le b_i$ almost surely, and let $X \triangleq \sum_{i=1}^n X_i$. For any $v > 0$,
	\begin{align}
		\PP\left(|X - \E(X)| \ge v \right) \le \exp\left(-\frac{2v^2}{\sum_{i=1}^n (b_i - a_i)^2}\right). \notag
	\end{align} 
\end{lemma}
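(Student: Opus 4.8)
The plan is to prove the bound by the Chernoff (exponential-Markov) method combined with the standard sub-Gaussian estimate for bounded random variables. I would first treat the one-sided tail $\PP(X - \E(X) \ge v)$. For any $s > 0$, applying Markov's inequality to the nonnegative random variable $e^{s(X - \E(X))}$ gives $\PP(X - \E(X) \ge v) \le e^{-sv}\,\E(e^{s(X - \E(X))})$, and since the $X_i$ are independent the moment generating function factorizes as $\E(e^{s(X-\E(X))}) = \prod_{i=1}^n \E(e^{s(X_i - \E(X_i))})$. It thus suffices to control each factor separately.

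The heart of the argument is \emph{Hoeffding's lemma}: for a random variable $Y$ with $\E(Y) = 0$ and $a \le Y \le b$ almost surely, one has $\E(e^{sY}) \le \exp(s^2(b-a)^2/8)$ for every $s \in \mathbb{R}$. I would prove this by setting $\psi(s) \triangleq \log \E(e^{sY})$ and observing that $\psi(0) = 0$ and $\psi'(0) = \E(Y) = 0$, while $\psi''(s)$ equals the variance of $Y$ under the exponentially tilted law $\mathrm{d}Q_s \propto e^{sY}\,\mathrm{d}\PP$; since this tilted law is still supported in $[a,b]$, its variance is bounded by $(b-a)^2/4$ (the variance of any $[a,b]$-valued variable is at most $\E[(Y - (a+b)/2)^2] \le (b-a)^2/4$). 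A second-order Taylor expansion of $\psi$ with Lagrange remainder then yields $\psi(s) \le s^2(b-a)^2/8$. Applying this lemma to each mean-zero variable $X_i - \E(X_i)$, which lies in an interval of length $b_i - a_i$, we obtain $\PP(X - \E(X) \ge v) \le \exp(-sv + \frac{s^2}{8}\sum_{i=1}^n (b_i - a_i)^2)$.

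Optimizing the exponent over $s > 0$, with minimizer $s^\star = 4v / \sum_{i=1}^n(b_i - a_i)^2$, gives $\PP(X - \E(X) \ge v) \le \exp\!\left(-2v^2 / \sum_{i=1}^n (b_i - a_i)^2\right)$. Running the identical computation with each $X_i$ replaced by $-X_i$ (which lies in $[-b_i,-a_i]$, an interval of the same length) bounds the lower tail $\PP(X - \E(X) \le -v)$ by the same quantity, and combining the two tails via a union bound yields the two-sided estimate. I expect the only genuinely delicate point to be the proof of Hoeffding's lemma --- specifically the uniform second-derivative bound $\psi''(s) \le (b-a)^2/4$ --- while the Chernoff step, the factorization by independence, and the optimization over $s$ are entirely routine. (The union bound nominally costs a factor of $2$ on the right-hand side; if the statement is read without it, one restricts to the one-sided tail, which is all that is needed in the applications in this paper.)
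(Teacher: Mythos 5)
Your proof is the standard Chernoff--Hoeffding argument and is correct; the paper itself offers no proof, simply citing Hoeffding's original work, so there is nothing to compare against beyond noting that your route (exponential Markov bound, factorization by independence, Hoeffding's lemma via the second-derivative bound on the log-MGF under the tilted measure, and optimization at $s^\star = 4v/\sum_i(b_i-a_i)^2$) is the canonical one. You are also right to flag the factor of $2$: the two-sided bound obtained by union-bounding the upper and lower tails is $2\exp\bigl(-2v^2/\sum_{i=1}^n(b_i-a_i)^2\bigr)$, so the lemma as stated in the paper is off by that constant factor; this is immaterial for every application in the paper (e.g., the bounds in~\eqref{eq:acc} and~\eqref{eq:combine4} already carry an explicit factor of $2$, and elsewhere only the exponential order matters), but your reading is the careful one.
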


\section{Problem Setting, Main Results, and Proof Sketches} \label{sec:model}

The channel between the sender and receivers is a binary-input discrete memoryless channel (BDMC) $(\mathcal{X}, \wyx,\mathcal{Y})$, and the channel between the sender and warden is another independent BDMC $(\mathcal{X}, \wzx,\mathcal{Z})$. It is assumed that $\mathcal{Y}$ and $\mathcal{Z}$ are finite alphabets, and $\mathcal{X} = \{0,1\}$ with `$0$' being the \emph{innocent symbol} and `$1$' being the symbol that carries information.\footnote{It is also possible to consider a more general setting with multiple non-zero input symbols (by following the lead of~\cite{wang2016fundamental}); however, for simplicity and ease of presentation, we focus on the binary-input setting in this work.} The channel transition probability corresponding to $n$ channel uses are denoted by $\wyxn (\y|\x) \triangleq \prod_{i=1}^n \wyx(y_i|x_i)$ and $\wzxn (\z|\x)\triangleq \prod_{i=1}^n \wzx(z_i|x_i)$. Moreover, we define
\begin{align*}
&P_0 \triangleq W_{Y|X=0}, \quad P_1 \triangleq W_{Y|X=1}, \\
&Q_0 \triangleq W_{Z|X=0}, \quad Q_1 \triangleq W_{Z|X=1}.
\end{align*}
As is common in the covert communication literature, it is assumed that (i) $Q_0 \ne Q_1$, (ii) $Q_1$ is absolutely continuous with respect to $Q_0$ (i.e., $Q_1 \ll Q_0$), and (iii) $P_1$ is absolutely continuous with respect to $P_0$ (i.e., $P_1 \ll P_0$). The first two assumptions preclude the scenarios in which covertness is always guaranteed or would never be guaranteed, while the last assumption precludes the possibility that the receivers enjoy an unfair advantage over the warden (as detailed in~\cite[Appendix~G]{bloch2016covert}). Let $\mu_0 \triangleq \min_{z: Q_0(z)>0}Q_0(z)$, $\mu_1 \triangleq \min_{z: Q_1(z)>0}Q_1(z)$, and $\widetilde{\mu} \triangleq \min\{\mu_0, \mu_1\}$

\begin{definition}[Identification codes]
	An identification code $\C$ with message set $\M$ is a collection of codewords $\{\Um\}_{m \in \M}$ and demapping regions $\{\D_m\}_{m \in \M}$, where $\Um \in \mathcal{P}(\mathcal{X}^n)$ and $\D_m \subseteq \mathcal{Y}^n$. 
\end{definition}
\begin{remark}
	In contrast to most communication problems wherein each message $m$ is deterministically mapped to a fixed sequence (the codeword) $\x \in \mathcal{X}^n$, the identification problem uses stochastic encoders such that   message $m$ is stochastically mapped to a  random sequence $\X$ according to the probability distribution $U_m \in \mathcal{P}(\mathcal{X}^n)$. Moreover, the demapping regions $\{\D_m\}_{m \in \M}$ in the identification problem are not necessarily disjoint. The use of stochastic encoders and the fact that the demapping regions are not disjoint are critical for communicating $\omega(n)$ bits of message over $n$ channel uses. With a slight abuse of terminology, we refer to the distribution $U_m$ as the \emph{codeword} for the message $m$.    
\end{remark}

The \emph{transmission status} of the sender is denoted by $T \in \{0,1\}$. Communication takes place if $T = 1$, while no communication takes place if $T = 0$.
When $T = 1$, the sender selects a message $m$ uniformly at random from $\M$. The encoder then chooses a length-$n$ sequence $\X \in \mathcal{X}^n$ according to the distribution $\Um$. When $T = 0$, the channel input is the length-$n$ zero sequence $\mathbf{0}$. For the receiver $\mathsf{R}_{m'}$ ($m' \in \M$), upon receiving the channel output $\Y \in \mathcal{Y}^n$ through the BDMC $\wyxn$, it declares that the message sent by the sender is $m'$ if and only if $\Y \in \D_{m'}$. 

The standard identification problem usually focuses on two types of error---\emph{the error probability of the first kind} which corresponds to the probability that the true message is not identified by its designated receiver, and \emph{the error probability of the second kind} which corresponds to the probability that the message is wrongly identified by some other receiver. For the covert identification problem, we introduce one more type of error---\emph{the error probability of the third kind} which corresponds to the probability that the length-$n$ zero sequence $\mathbf{0}$ (when no communication takes place) is wrongly identified as a certain message by any receiver. We formalize these notions in the following definition.

\begin{definition}[Error probabilities]
	When $T = 1$ and $m \in \M$ is sent, the \emph{error probability of the first kind} is defined as 
	\begin{align}
	P_{\mathrm{err}}^{(1)}(m) \triangleq \sum_{\x \in \mathcal{X}^n} \Uxm\wyxn(\D_m^c|\x). \notag
	\end{align}
	When $T=1$ and $m' \in \M$ is sent, the \emph{error probability of the second kind} corresponding to the receiver $\mathsf{R}_{m}$ is defined as
	\begin{align}
	P_{\mathrm{err}}^{(2)}(m,m') \triangleq \sum_{\x \in \mathcal{X}^n} \Uxhm\wyxn(\D_m|\x). \notag
	\end{align}
	When $T=0$ and the length-$n$ zero sequence is sent through the channel, the \emph{error probability of the third kind}  corresponding to the receiver $\mathsf{R}_{m}$ is defined as
	\begin{align}
	P_{\mathrm{err}}^{(3)}(m) \triangleq P_0^{\otimes n}(\D_m). \notag
	\end{align}
	Furthermore, let the corresponding \emph{maximum error probabilities} (over all the messages or all pairs of distinct messages) be
	\begin{align}
	&P_{\mathrm{err}}^{(1)} \triangleq \max_{m \in \M} P_{\mathrm{err}}^{(1)}(m), \notag \\
	& P_{\mathrm{err}}^{(2)} \triangleq \max_{(m,m') \in \M^2: m \ne m'} P_{\mathrm{err}}^{(2)}(m,m'),
	\notag \\ 
	& P_{\mathrm{err}}^{(3)} \triangleq \max_{m \in \M} P_{\mathrm{err}}^{(3)}(m). \notag
	\end{align} 
\end{definition}  

Let $\widehat{Q}^n_{\C}(\z)$ be the output distribution on $\mathcal{Z}^n$ for the warden induced by the identification code, which takes the form 
\begin{align}
\widehat{Q}^n_{\C}(\z) \triangleq \frac{1}{|\M|}\sum_{m \in \M} \sum_{\x \in \mathcal{X}^n} \Uxm \wzxn(\z|\x), \ \forall \z \in \mathcal{Z}^n. \label{eq:qhat}
\end{align}
We adopt the widely-used KL-divergence metric $\DD(\widehat{Q}^n_{\C} \Vert Q_0^{\otimes n})$ to measure covertness with respect to the warden. 
\begin{definition}[Covertness] \label{def:covert}
	The communication is \emph{$\delta$-covert} if the KL-divergence between the distribution $\widehat{Q}^n_{\C}$ (when $T=1$) and $Q_0^{\otimes n}$ (when $T=0$) is bounded from above by $\delta$, i.e., $$\DD(\widehat{Q}^n_{\C} \Vert Q_0^{\otimes n}) \le \delta.$$
\end{definition}

Let $\pi_{1|0}$ and $\pi_{0|1}$ respectively  be the probabilities of false alarm (i.e., making an error when $T=0$) and missed detection (i.e., making an error when $T=1$) of the warden's hypothesis test. By using the definition of the variational distance and  Pinsker's inequality,  we see that the optimal test satisfies 
\begin{align*}
	\pi_{1|0} + \pi_{0|1} = 1 - \V(\widehat{Q}^n_{\C}, Q_0^{\otimes n}) \ge 1 -  \sqrt{\DD(\widehat{Q}^n_{\C} \Vert Q_0^{\otimes n})}.
\end{align*}  
Thus, a small $\DD(\widehat{Q}^n_{\C} \Vert Q_0^{\otimes n})$ implies a large sum-error $\pi_{1|0} + \pi_{0|1}$. This provides an  operational meaning of the covertness metric in Definition~\ref{def:covert}. As discussed in prior works such as~\cite{hou2014effective, wang2016fundamental,tahmasbi2018first}, the variational distance metric $\V(\widehat{Q}^n_{\C}, Q_0^{\otimes n})$ is perhaps a better metric under the specific assumption that $T=0$ and $T=1$ occur with equal probabilities, since it directly connects  to the average error probability of detection; however, the above assumption does not hold in general, thus both KL-divergence and variational distance are deemed to be appropriate metrics in the literature.      

\begin{definition} \label{def:rate}
	A rate $R$ is said to be \emph{$\delta$-achievable} if there exists a sequence of identification codes with increasing blocklength $n$ such that  
	\begin{align}
	&\liminf_{n \to \infty} \frac{\log\log |\M|}{\sqrt{n}} \ge R, \quad  \DD\left(\widehat{Q}^n_{\C} \Vert Q_0^{\otimes n}\right) \le \delta, \notag \\
	&\lim_{n \to \infty} P_{\mathrm{err}}^{(1)} = \lim_{n \to \infty} P_{\mathrm{err}}^{(2)}= \lim_{n \to \infty} P_{\mathrm{err}}^{(3)} = 0. \notag
	\end{align}
	The \emph{$\delta$-covert identification capacity} $C_{\delta}$ is defined as the supremum of all $\delta$-achievable rates.
\end{definition}

Note that the coding rate $R$ in the covert identification problem is defined as the {\em iterated} logarithm of the size of the message set $|\M|$ normalized by $\sqrt{n}$, which implies that the message size (if $R>0$) is of order $\exp(\exp(\Theta(\sqrt{n})))$. This intuitively makes sense because the channel identification problem usually allows the message size to be as large as $\exp(\exp(\Theta(n)))$, but the stringent covertness constraint reduces the exponent from $\Theta(n)$ to $\Theta(\sqrt{n})$. In the following, we present the main result that characterizes the $\delta$-covert identification capacity of BDMCs.

\subsection*{Main result: The covert identification capacity}

\begin{theorem} \label{thm}
	For any BDMCs $\wyx$ and $\wzx$ satisfying $Q_1 \ne Q_0$, $Q_1 \ll Q_0$, and $P_1 \ll P_0$, the $\delta$-covert identification capacity is given by
	\begin{align}
	C_{\delta} = \sqrt{\frac{2\delta}{\chi_2(Q_1\Vert Q_0)}} \DD(P_1 \Vert P_0). \notag
	\end{align} 
\end{theorem}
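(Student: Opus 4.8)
Both directions should be organized around a single quantity: the largest \emph{weight} $w_\delta := \sqrt{2n\delta/\chi_2(Q_1\Vert Q_0)}$ — the number of information-carrying symbols \mbox{`$1$'} among the $n$ channel uses that is compatible with the covertness budget $\delta$. This is because $\DD((1-\rho)Q_0+\rho Q_1 \Vert Q_0) = \tfrac{\rho^2}{2}\chi_2(Q_1\Vert Q_0)(1+o(1))$ as $\rho\to 0$, so that an aggregate activation $\sum_i\rho_i$ much larger than $w_\delta$ cannot satisfy $\DD(\widehat{Q}^n_{\C}\Vert Q_0^{\otimes n})\le\delta$ when the per-coordinate activation probabilities $\rho_i$ are vanishing. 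The achievability claim is $C_\delta \ge \sqrt{2\delta/\chi_2(Q_1\Vert Q_0)}\,\DD(P_1\Vert P_0)$ and the converse claim is the reverse inequality.

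For achievability I would fix $\alpha$ slightly below $\sqrt{2\delta/\chi_2(Q_1\Vert Q_0)}$, set $w=\lfloor\alpha\sqrt n\rfloor$, and generate every constituent sequence i.i.d.\ from the PPM input distribution that places exactly one \mbox{`$1$'} uniformly at random in each of $w$ equal-length blocks partitioning $[1:n]$ (this distribution has zero weight-variance, which simplifies the covertness analysis, and is known to be covertness-optimal). For each ID message $m$, let the codeword $\Um$ be uniform over $N=\exp((1-\varepsilon)\alpha\sqrt n\,\DD(P_1\Vert P_0))$ independent such sequences, and let the demapping region $\D_m$ consist of the $\y$ for which the information density $\log(\wyxn(\y|\x)/P_0^{\otimes n}(\y))$ exceeds a threshold $\gamma$ with $\log N<\gamma<\alpha\sqrt n\,\DD(P_1\Vert P_0)$ for at least one constituent sequence $\x$ of $m$. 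A random-coding analysis then controls all four requirements of Definition~\ref{def:rate}: (i) the error of the first kind is small because the information density of a genuinely transmitted weight-$w$ sequence concentrates, via Lemma~\ref{lemma:hoeffding}, around $w\,\DD(P_1\Vert P_0)>\gamma$; (ii) the error of the third kind is small because under $P_0^{\otimes n}$ that information density has negative mean, so a Chernoff estimate together with a union bound over the $N$ constituent sequences gives a bound of order $N\exp(-\Theta(\gamma))\to0$; (iii) the error of the second kind for a fixed pair $(m,m')$ is an average of $N$ independent $[0,1]$-valued quantities with mean $\lesssim N\exp(-\gamma)$, which concentrates sharply enough (exponentially in $N$) that a union bound over all $\binom{|\M|}{2}$ pairs still permits $\log|\M|$ to be essentially as large as $N$, whence $\log\log|\M|\gtrsim(1-\varepsilon)\alpha\sqrt n\,\DD(P_1\Vert P_0)$; and (iv) covertness holds because $\widehat{Q}^n_{\C}$ is a mixture of $N|\M|\gg1$ PPM-generated sequences and hence, by a soft-covering (resolvability) argument, lies within $o(1)$ of the PPM output distribution, whose divergence from $Q_0^{\otimes n}$ is $\tfrac{w^2}{2n}\chi_2(Q_1\Vert Q_0)(1+o(1))\le\delta$. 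Letting $\varepsilon\downarrow0$ and $\alpha\uparrow\sqrt{2\delta/\chi_2(Q_1\Vert Q_0)}$ yields the claimed rate.

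The delicate point is that the random-coding step delivers small error probabilities only per message (and only of order $\exp(-\Theta(\sqrt n))$), which cannot be boosted to vanishing \emph{maximum} error by a union bound, since $|\M|$ is doubly exponential. I would therefore follow the random-coding step with a second-stage \emph{code refinement} (Section~\ref{sec:refinement}) that repairs the offending messages by \emph{reassigning} constituent sequences among messages rather than by deleting messages: since $\widehat{Q}^n_{\C}$ depends only on the uniformly weighted multiset of \emph{all} constituent sequences, this reshuffling leaves $\widehat{Q}^n_{\C}$ — and hence the covertness guarantee of Definition~\ref{def:covert} — exactly intact. Designing a reassignment that simultaneously drives the maximum errors of the first, second, and third kinds to zero is the main obstacle on the achievability side.

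For the converse I would take an arbitrary sequence of $\delta$-covert identification codes with $\liminf_n\log\log|\M|/\sqrt n\ge R$ and proceed in two steps. First, convert the covertness constraint into a hard input constraint: a quadratic ($\chi_2$) lower bound on $\DD(\widehat{Q}^n_{\C}\Vert Q_0^{\otimes n})$ shows the average weight of the constituent sequences is at most $w_\delta(1+o(1))$, after which a Markov-type argument permits expurgating a vanishing fraction of messages and modifying the survivors so that \emph{every} remaining constituent sequence has weight at most $w^\star:=w_\delta(1+o(1))$, while the errors of the first and second kind stay small and $\log\log|\M|$ is essentially unchanged. Second, with this weight constraint in force, run the Han--Verd\'u resolvability argument for identification: the output distributions $\{\Um\circ\wyxn\}_{m\in\M}$ are pairwise $\Omega(1)$-separated in variational distance (a consequence of the small errors of the first and second kind), yet each lies within $o(1)$ of a member of a net of output distributions generated by uniform mixtures of at most $\exp(w^\star\DD(P_1\Vert P_0)(1+o(1)))$ weight-$\le w^\star$ input sequences; such a net has cardinality $\exp\exp(w^\star\DD(P_1\Vert P_0)(1+o(1)))$, which upper bounds $|\M|$. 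Hence $\log\log|\M|\le w^\star\DD(P_1\Vert P_0)(1+o(1))=\sqrt n\,C_\delta(1+o(1))$, i.e.\ $R\le C_\delta$. The first step here — obtaining a \emph{per--constituent-sequence} weight bound quantitatively pinned to $\delta$ through $\chi_2(Q_1\Vert Q_0)$, in the absence of any off-the-shelf channel-resolvability result under stringent input constraints — is the crux of the converse (Section~\ref{sec:converse}).
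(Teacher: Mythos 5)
Your proposal follows essentially the same route as the paper: PPM random coding with an information-density (union-of-conditional-typical-sets) decoder, Hoeffding-plus-union-bound concentration for the errors of the first and second kinds, a multiset-preserving reassignment of constituent sequences to repair the maximum error of the third kind, and a converse built from an average-weight bound via the $\chi_2$ expansion, an expurgation to a hard per-sequence weight constraint, and a resolvability-based counting of $K$-type approximations of the codewords. The two steps you flag as the main obstacles are resolved in the paper exactly as you anticipate: the refinement selects ``good'' messages by Markov's inequality and appends the bad messages' constituent sequences to the good codewords without altering their demapping regions (so $\widehat{Q}^n_{\C}$ and the first/second-kind errors are essentially unchanged), and the per-sequence weight bound is obtained by retaining the low-average-weight messages and renormalizing each surviving codeword on its low-weight support.
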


Some remarks are in order.
\begin{enumerate}
	\item Analogous to the canonical covert communication problem, we notice that the SRL also holds for the covert identification problem albeit with message size $\exp(\exp(\Theta(\sqrt{n})))$. Furthermore, the $\delta$-covert identification capacity is exactly the same as the $\delta$-covert capacity derived in~\cite{wang2016fundamental,bloch2016covert}.  
	\item In stark contrast to  the standard covert communication problem~\cite{bloch2016covert} in which a shared key is needed to achieve the covert capacity when the channels $\wyx$ and $\wzx$ satisfy $\DD(P_1 \Vert P_0) \le \DD(Q_1 \Vert Q_0)$, Theorem~\ref{thm} above shows that regardless of the values of $\DD(P_1 \Vert P_0)$ and $\DD(Q_1 \Vert Q_0)$, the $\delta$-covert identification capacity is always achievable {\em without any shared key}. Intuitively, this is because the message size in our setting scales as $\exp(\omega(n))$, which automatically allows us to satisfy the requirements on the shared key via proof techniques from channel resolvability~\cite{han1993approximation} since it is well known that an exponential message size (of a suitably large exponent) suffices to drive the approximation error (of the target and synthesized distributions) to zero. This is reflected in Lemma~\ref{lemma:covert2} in our achievability proof. 
\end{enumerate}
The achievability and converse parts of Theorem~\ref{thm} are respectively proved in Section~\ref{sec:achievability} and~\ref{sec:converse}. Before going into details, we first provide proof sketches of both parts in the following. 

\subsection*{High-level intuitions and proof sketch of achievability}
It has been well understood from the covert communication literature that to ensure covertness with respect to a warden, the average \emph{Hamming weight} of the length-$n$ channel inputs should be at most $\Theta(\sqrt{n})$, and the exact pre-constant has also been characterized as a function of the channel $\wzx$ and covertness parameter $\delta$. In recent years various coding schemes with theoretical guarantees on covertness have been developed, among which the two most widely adopted schemes are perhaps the \emph{low-weight i.i.d. random codes}~\cite{bloch2016covert,wang2016fundamental,CheBJ:13} and \emph{PPM random codes}~\cite{bloch2017optimal,tahmasbi2018first,tahmasbi2017error}. The former generates each bit of the length-$n$ sequence independently according to the Bernoulli distribution $\mathrm{Bern}(\Theta(1/\sqrt{n}))$, while the latter is a more structured approach in which the random code is generated according to the so-called PPM input distribution $\ppmx$ (formally defined in~\eqref{eq:PPM}) such that certain predefined intervals of length $\Theta(\sqrt{n})$ contain only a single one. Although these two approaches are equally favorable from the perspective of covertness, it is preferable to use PPM random codes from the perspective of identification, as the property of \emph{constant Hamming weights} helps to circumvent various challenges in the analysis of error probabilities; this is discussed in detail in Remark~\ref{remark:compare}, Section~\ref{sec:achievability}. Thus, our achievability scheme is based on the PPM random codes.

It is also worth noting that merely requiring the average Hamming weight to be low is not sufficient for achieving covertness. The second requirement is that the target output distribution $\widehat{Q}^n_{\C}$ (induced by the code) should be close to the synthesized output distribution $\ppmz$ (induced by the input distribution $\ppmx$ and the channel $\wzxn$). That is, the KL divergence of these two distributions should tend to zero. From channel resolvability~\cite{han1993approximation} we know that if a code contains at least $\exp\{I(\ppmx,\wzxn)+ \omega(n^{1/4})\}$ length-$n$ sequences and each sequence is generated according to $\ppmx$, the KL divergence $\DD(\widehat{Q}^n_{\C} \Vert \ppmz)$ will tend to zero with high probability. In the identification problem, as the message size is $\exp(\omega(n))$, using the PPM random codes will automatically fulfill the second requirement, thus covertness can be achieved.

We now turn to discuss how to bound the error probabilities for the identification problem when the size of the ID message $|\mathcal{M}|$ scales as $\exp(e^{\Theta(\sqrt{n})})$. Our scheme is as follows. For each message $m \in \mathcal{M}$, we set the codeword $U_m$ to be a uniform distribution over $N$ constituent sequences $\{\x_{m,i}\}_{i=1}^N$ that are generated independently according to the PPM input distribution $\ppmx$, and the demapping region $\D_m$ to be the \emph{union} of $N$ \emph{conditional typical sets} $\{\mathcal{F}_{\x_{m,i}}\}_{i=1}^N$ on $\mathcal{Y}^n$ (where $\mathcal{F}_{\x_{m,i}} \subseteq \mathcal{Y}^n$ is the conditional typical set with respect to $\wyx$ and a fixed $\x_{m,i}$, which is formally defined in Section~\ref{sec:stage1_1}).  
\begin{itemize}
	\item \underline{Error probability of the first kind $P_{\mathrm{err}}^{(1)}$:} Now, suppose the sender wishes to send the ID message $m \in \mathcal{M}$ (thus each of the $N$ constituent sequences $\{\x_{m,i} \}_{i=1}^N$ will be sent with equal probability), and consider the receiver $\mathsf{R}_{m}$ with demapping region $\D_m$. Intuitively, regardless of which constituent sequence $\x_{m,i}$ is sent, the channel outputs $\y$ will fall into the corresponding conditional typical set $\mathcal{F}_{\x_{m,i}}$ with high probability. This also implies that $\y$ will belong to the demapping region $\D_m$ with high probability, or equivalently $\wyxn(\D_m^c|\x_{m,i}) \le \varepsilon_n$ for some $\varepsilon_n \to 0$, since $\D_m$ is a superset of  $\mathcal{F}_{\x_{m,i}}$ by definition.  One can then treat each $\wyxn(\D_m^c|\x_{m,i})$ as a random variable with expectation at most $\varepsilon_n$. By applying Hoeffding's inequality to the $N$ i.i.d. random variables $\{\wyxn(\D_m^c|\x_{m,i})\}_{i=1}^N$, one can show that their empirical mean (which turns out to be $P_{\mathrm{err}}^{(1)}(m)$) tends to zero with probability approximately $1 - \exp(-\Theta(N))$. Moreover, one needs to take a union bound over all the messages to ensure that the \emph{maximum} error probability $P_{\mathrm{err}}^{(1)} = \max_{m \in \mathcal{M}}P_{\mathrm{err}}^{(1)}(m)$ vanishes with high probability; this requires $|\mathcal{M}|\cdot\exp(-\Theta(N))$ to approach zero.
	
	\item \underline{Error probability of the second kind $P_{\mathrm{err}}^{(2)}$:} Consider the receiver $\mathsf{R}_{m}$, and suppose the sender sends another ID message $m'$, where $m' \ne m$. We first \emph{fix} the $N$ constituent sequences $\{\x_{m,j} \}_{j=1}^N$ and the demapping region $\D_{m}$ for $\mathsf{R}_{m}$. Our goal is to upper bound the probability that message $m'$ is wrongly identified by $\mathsf{R}_{m}$, i.e., $\frac{1}{N}\sum_{i \in [1:N]}\wyxn(\D_{m}|\x_{m',i})$. Instead of calculating each term $\wyxn(\D_{m}|\x_{m',i})$ directly, we follow Shannon's approach to calculate the average error probability $\E_{\ppmx}(\wyxn(\D_{m}|\X_{m',i}))$, in which the average is done over the generation of the constituent sequence $\X_{m',i} \sim \ppmx$. It turns out that as long as $N<\exp\{C_{\delta}\sqrt{n}\}$  (thus the demapping region $\D_{m}$, which consists of $N$ conditional typical sets, is not too large), this average error probability will tend to zero. By applying Hoeffding's inequality to the $N$ i.i.d. random variables $\{\wyxn(\D_{m}|\X_{m',i}) \}_{i=1}^N$, one can show that their empirical mean tends to zero with probability approximately $1 - \exp(-\Theta(N))$. Additionally, by taking the randomness of the constituent sequences $\{\X_{m,j}\}_{j=1}^N$ for $m$  into account and letting $\mathbf{D}_{m}$ be the \emph{chance variable} corresponding to $\D_m$, one can also show that the empirical mean of $\{\wyxn(\mathbf{D}_{m}|\X_{m',i}) \}_{i=1}^N$ tends to zero with probability at least $1 - \exp(-\Theta(N))$. It then remains to take a union bound over all the message pairs $(m,m')$ to ensure that the \emph{maximum} error probability $P_{\mathrm{err}}^{(2)}$ vanishes with high probability over the code generation; this requires $|\mathcal{M}|^2\cdot\exp(-\Theta(N))$ to approach zero.   
\end{itemize}

The two requirements $N<\exp\{C_{\delta}\sqrt{n}\}$ and $|\mathcal{M}|^2\cdot\exp(-\Theta(N)) \to 0$ motivate us to set $N = \exp\{(1-(\eta/2))C_{\delta}\sqrt{n}\}$ and $|\mathcal{M}| = \exp\{e^{(1-\eta)C_{\delta}\sqrt{n}} \}$, where $\eta \in (0,1)$ can be made arbitrarily small. Note that $\lim_{\eta \to 0^+}\liminf_{n \to \infty}\log\log|\mathcal{M}|/\sqrt{n} = C_{\delta}$.

Finally, we show that with high probability over the code generation, the \emph{average} error probability of the third kind $|\M|^{-1} \sum_{m \in \M} P_{\mathrm{err}}^{(3)}(m)$ tends to zero, and we further apply a code refinement process (see Section~\ref{sec:refinement}) to ensure that the {\em maximum} error probability $\max_{m \in \M} P_{\mathrm{err}}^{(3)}(m)$ also vanishes.

\subsection*{Proof sketch of converse}
First, one can show that for any identification code $\C$ satisfying the covertness constraint,  the expected Hamming weight (over the message and \emph{all} the codewords) of the constituent sequences in $\C$ is at most $d\sqrt{n}$ for some constant $d > 0$ that can be explicitly characterized (as shown in Lemma~\ref{lemma:constraint}). Based on the original code $\C$, one can use an expurgation argument to construct another code $\C'$ in which \emph{all} the constituent sequences have Hamming weight at most  $d\sqrt{n}(1+\varepsilon_n)$ for some $\varepsilon_n \to 0$, and it  simultaneously ensures that (i) the message size in $\C'$ is almost as large as that in $\C$, and (ii) the error probabilities of $\C'$ are almost as small as those in $\C$ (as shown in Lemma~\ref{lemma:expurgation}). Due to these favourable properties, it can be shown that the upper bound on the message size of the original code $\C$ is essentially almost the same as that of the new code $\C'$. Thus, it suffices to analyze the new code $\C'$, and the fact that all the constituent sequences in $\C'$ have Hamming weight at most $d\sqrt{n}(1+\varepsilon_n)$ is important for the subsequent analysis.

The rest of the proof relies critically on the achievability for channel resolvability for \emph{general low-weight} input distributions. Roughly speaking, there exists a $K \in \mathbb{N}^+$ such that each of the codeword in $\C'$ can be approximated by a \emph{$K$-type distribution} (formally defined in Definition~\ref{def:type}) with a vanishing approximation error. Meanwhile, due to the fact that $\C'$ has small error probabilities, each pair of codewords should be sufficiently well-separated from each other. Thus, every pair of $K$-type distributions (which are used to approximate codewords) should also be well-separated from each other. By noting that the number of \emph{distinct} $K$-type distributions on $\{0,1\}^n$ is at most $2^{nK}$, one can in turn deduce that the number of codewords (or equivalently, the message size) in $\C'$ is at most $2^{nK}$. Finally, choosing an appropriate value of $K$ leads to the desired converse result.

\section{Achievability} \label{sec:achievability}

The achievability proof is partitioned into two stages. In the first stage, we use a random coding argument to show the existence of a ``weak'' covert identification code---we say a code to be a ``weak'' covert identification code if it ensures (i) covertness, (ii) vanishing maximum error probabilities of the first and second kinds $P_{\mathrm{err}}^{(1)}$ and $P_{\mathrm{err}}^{(2)}$, and (iii) a vanishing \emph{average} (rather than maximum) error probability of the third kind $(1/|\M|)\sum_{m \in \M} P_{\mathrm{err}}^{(3)}(m)$. In the second stage, we apply a code refinement process to the ``weak'' covert identification code, such that the refined code satisfies all the criteria for the three error probabilities and covertness in Definition~\ref{def:rate}.
 
We first provide a detailed introduction of PPM in Subsection~\ref{sec:ppm}. The first stage of the achievability is described in Subsection~\ref{sec:stage1_1} and proved in Subsection~\ref{sec:proof_lemma1}, while the second stage is presented in Subsection~\ref{sec:refinement}. Finally, in Subsection~\ref{sec:other}, we discuss the reasons why other identification schemes are not applicable to the covert identification problem. Table~\ref{table:1} summarizes all the parameters that are used in this section. 

\subsection{Pulse-Position Modulation (PPM)} \label{sec:ppm}
Let $$l \triangleq\left \lfloor \sqrt{ \frac{(2\delta-n^{-1/3})n}{\chi_2(Q_1\Vert Q_0)} }  \right\rfloor$$ be the {\em weight parameter}, and $(w,s)$ be non-negative integers such that $w \triangleq \left \lfloor{n/l}\right \rfloor$ and $s \triangleq n - wl$.
We use $\ux \in \mathcal{X}^w, \uy \in \mathcal{Y}^w, \uz \in \mathcal{Z}^w$ to denote vectors of length $w$. We also let $ \text{wt}_{\mathrm{H}}(\ux)$ denote the number of ones, or the {\em Hamming weight}, of the vector  $\ux$. Let 
\begin{align}
\pwx(\ux) \triangleq \begin{cases}
1/w, & \text{if} \ \text{wt}_{\mathrm{H}}(\ux) = 1, \\
0, &\text{otherwise},
\end{cases} \notag
\end{align}
be the distribution on $\mathcal{X}^w$ such that $\pwx(\ux)$ is non-zero if and only if $\ux$ has Hamming weight one. The corresponding output distributions $\pwy$ and $\pwz$ are respectively given by 
\begin{align}
&\pwy(\uy) \triangleq \sum_{\ux \in \mathcal{X}^w} \pwx(\ux) \wyxw(\uy|\ux), \quad \mbox{and} \label{eq:s1} \\
&\pwz(\uz) \triangleq \sum_{\ux \in \mathcal{X}^w} \pwx(\ux) \wyxw(\uz|\ux). \label{eq:s2}
\end{align} 
For each $i \in [1:l]$, we define $\ux^{(i)} \triangleq x_{(i-1)w+1}^{iw}$ as the length-$w$ subsequence of $\x$ that  comprises consecutive elements from $x_{(i-1)w+1}$ to $x_{iw}$. Thus, a length-$n$ vector $\x$ can be represented as $\x = [\ux^{(1)}, \ldots, \ux^{(l)}, x_{wl+1}^n]$, where $x_{wl+1}^n$ is of length $s$. The \emph{PPM input distribution} is thus defined as 
\begin{align}
\ppmx(\x) \triangleq \prod_{i=1}^l \pwx(\ux^{(i)}) \cdot \mathbbm{1}\left\{\text{wt}_{\mathrm{H}}(x_{wl+1}^n) = 0\right\}. \label{eq:PPM}
\end{align}
That is, we require each PPM-generated vector, also called a {\em PPM-sequence}, $\x$ to contain exactly $l$ ones; in particular, each of the first $l$ intervals $[1:w], [w+1:2w], \ldots, [(l-1)w+1:lw]$ contains a single one, and the last interval $[wl+1:n]$ contains all zeros. The PPM-induced output distributions $\ppmy$ on $\mathcal{Y}^n$ and $\ppmz$ on $\mathcal{Z}^n$ are respectively given by
\begin{align}
\ppmy(\y)  &\triangleq \sum_{\x\in \mathcal{X}^n} \ppmx(\x) \wyxn(\y|\x) \notag\\
&= \sum_{\x\in \mathcal{X}^n} \prod_{i=1}^l \pwx(\ux^{(i)})  \mathbbm{1}\left\{ \text{wt}_{\mathrm{H}}(x_{wl+1}^n) = 0\right\} \wyxn(\y|\x) \notag  \\
&= \left(\prod_{i=1}^l \pwy(\uy^{(i)})\right) \cdot P_0^{\otimes s}(y_{wl+1}^n), \label{eq:mean}\quad\mbox{and} \\
\ppmz(\z)  &\triangleq \sum_{\x\in \mathcal{X}^n} \ppmx(\x) \wzxn(\z|\x) = \left(\prod_{i=1}^l \pwz(\uz^{(i)})\right) \cdot Q_0^{\otimes s}(z_{wl+1}^n).\notag
\end{align}

\subsection{Existence of a ``weak'' covert identification code} \label{sec:stage1_1}

First recall that a ``weak'' covert identification code satisfies all the criteria in Definition~\ref{def:rate} except that it only ensures a vanishing average (rather than maximum) error probability of the third kind. In the following, we show such a ``weak'' covert identification code exists by using a random coding argument.

\subsubsection{Encoder and Demapping regions}

Let $\eta\in (0,1)$ be arbitrary, $t \triangleq l/\sqrt{n}$ be the normalized weight parameter, $R = (1-\eta)t\DD(P_1 \Vert P_0)$, and $R' = (1-(\eta/2))t\DD(P_1 \Vert P_0)$. The size of the message set is set to be $|\M| = \exp(e^{R\sqrt{n}})$.
For each message $m \in \M$, we generate $N \triangleq e^{R'\sqrt{n}}$ constituent sequences $\{\x_{m,i} \}_{i=1}^N$ independently according to $\ppmx$, and the codeword $\Um$ is the uniform distribution over the multiset $\{\x_{m,i} \}_{i=1}^N$, i.e.,
\begin{align}
\Uxm \triangleq \frac{1}{N}\sum_{i=1}^N \mathbbm{1}\left\{\x = \x_{m,i} \right\}, \quad \ \forall \x \in \mathcal{X}^n. \notag
\end{align} 
That is, we send each of the sequences $\{\x_{m,i} \}_{i=1}^N$ with equal probability when $m$ is the true message. 

Let $\gamma \triangleq (1-\epsilon)t\DD(P_1 \Vert P_0)$, where $0 < \epsilon < \eta/2$. To specify the demapping region $\D_m$ for each message $m \in \M$, we first define the set $\F_{\x}$ for each $\x \in \mathcal{X}^n$ as 
\begin{align}
\F_{\x} \triangleq \left\{\y \in \mathcal{Y}^n: \log \frac{\wyxn(\y|\x)}{P_0^{\otimes}(\y)} > \gamma\sqrt{n} \right\}. \notag
\end{align}
The demapping region for each $m$ is $\D_m \triangleq \cup_{i \in [1:N]} \F_{\x_{m,i}}$.

\subsubsection{Error probabilities and distributions of interest}
Based on the encoding scheme described above and the constituent sequences $\{\x_{m,i} \}_{i=1}^N$ for each $m \in \M$, the error probabilities of the first and second kinds can be rewritten as 
\begin{align}
&P_{\mathrm{err}}^{(1)}(m) = \sum_{\x \in \mathcal{X}^n} \frac{1}{N}\sum_{i=1}^N \mathbbm{1}\left\{\x = \x_{m,i} \right\} \wyxn(\D_m^c|\x) = \frac{1}{N}\sum_{i=1}^N \wyxn(\D_m^c|\x_{m,i}), \label{eq:perr1} \quad\mbox{and} \\
&P_{\mathrm{err}}^{(2)}(m,m') = \frac{1}{N} \sum_{i=1}^N \wyxn(\D_m|\x_{m',i}), \label{eq:perr2}
\end{align}
and the output distribution $\widehat{Q}^n_{\C}$ on $\mathcal{Z}^n$, which is first defined in~\eqref{eq:qhat}, can be rewritten as
\begin{align}
\widehat{Q}^n_{\C}(\z) = \frac{1}{|\M|}\sum_{m \in \M} \frac{1}{N}\sum_{i=1}^N \wzxn(\z|\x_{m,i}).\notag
\end{align}

\subsubsection{Performance guarantees} Lemma~\ref{lemma:initial} below shows that with high probability, the randomly generated identification code is a ``weak'' covert identification code, in the sense that it only has a vanishing {\em average} (and not maximum) error probability of the third kind.

\begin{lemma} \label{lemma:initial}
	There exist vanishing sequences $\kappa_n, \varepsilon_n^{(1)}, \varepsilon_n^{(2)}, \varepsilon_n^{(3)} > 0$ (depending on the channels $\wyx, \wzx$ and the covertness parameter $\delta$) such that with probability at least $1 - \kappa_n$ over the code generation process, the randomly generated code satisfies 
	\begin{align}
	&\max_{m \in \M} P_{\mathrm{err}}^{(1)}(m)  \le \varepsilon_n^{(1)}, \ \max_{(m,m') \in \M^2: m \ne m'} P_{\mathrm{err}}^{(2)}(m,m')  \le \varepsilon_n^{(2)},  \notag \\
	&\frac{1}{|\M|}\sum_{m \in \M} P_{\mathrm{err}}^{(3)}(m) \le \varepsilon_n^{(3)}, \quad \ \DD\left(\widehat{Q}^n_{\C} \Vert Q_0^{\otimes n}\right) \le \delta. \notag
	\end{align}
\end{lemma}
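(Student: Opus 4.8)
The plan is to prove Lemma~\ref{lemma:initial} by a random coding argument in which we analyze each of the three error probabilities (and covertness) separately, show that each holds with probability exponentially close to one over the code generation, and then union-bound the bad events. Throughout, the key quantitative inputs are the parameter choices $|\M| = \exp(e^{R\sqrt n})$, $N = e^{R'\sqrt n}$ with $R = (1-\eta)t\DD(P_1\Vert P_0)$ and $R' = (1-(\eta/2))t\DD(P_1\Vert P_0)$, and $\gamma = (1-\epsilon)t\DD(P_1\Vert P_0)$ with $0<\epsilon<\eta/2$; the crucial feature is that $N$ is \emph{super-polynomial but sub-doubly-exponential} in $\sqrt n$, so that $|\M|^2 e^{-\Theta(N)}\to 0$ while $N < e^{C_\delta\sqrt n}$.

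First I would handle $P_{\mathrm{err}}^{(1)}$. Using~\eqref{eq:perr1}, fix $m$ and view $\{\wyxn(\D_m^c|\x_{m,i})\}_{i=1}^N$ as i.i.d.\ $[0,1]$-valued random variables (over the draw of the $\x_{m,i}$). Since $\D_m \supseteq \F_{\x_{m,i}}$, we have $\wyxn(\D_m^c|\x_{m,i}) \le \wyxn(\F_{\x_{m,i}}^c|\x_{m,i})$, and a standard concentration/typicality estimate for the conditional information density under the PPM distribution $\ppmx$ (noting $\gamma < t\DD(P_1\Vert P_0)$ strictly) shows $\E[\wyxn(\F_{\x_{m,i}}^c|\x_{m,i})] \le \varepsilon_n^{(1)}/2$ for a suitable vanishing sequence; one expands $\log\frac{\wyxn(\Y|\x)}{P_0^{\otimes n}(\Y)}$ as a sum of $l = t\sqrt n$ per-block terms each with bounded variance and applies Hoeffding (Lemma~\ref{lemma:hoeffding}) or Chebyshev. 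Then applying Hoeffding to the empirical mean over $i\in[1:N]$ gives $\PP(P_{\mathrm{err}}^{(1)}(m) > \varepsilon_n^{(1)}) \le e^{-\Theta(N\cdot(\varepsilon_n^{(1)})^2)}$ provided $\varepsilon_n^{(1)}$ decays slowly enough (e.g.\ polynomially in $1/\sqrt n$) that $N(\varepsilon_n^{(1)})^2$ still dominates $\log|\M| = e^{R\sqrt n}$; union-bounding over $m\in\M$ costs a factor $|\M|$, which is absorbed because $R' > R$ forces $N$ to beat $|\M|$ even after squaring the latter.

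Next, $P_{\mathrm{err}}^{(2)}$. Here I would first condition on the constituent sequences of $m$ (hence on $\D_m$), so that $\{\wyxn(\D_m|\X_{m',i})\}_{i=1}^N$ are i.i.d.\ over $\X_{m',i}\sim\ppmx$. The mean is $\E_{\ppmx}[\wyxn(\D_m|\X_{m',i})]$, which by the union bound over the $N$ conditional typical sets comprising $\D_m$ is at most $N\cdot\max_{\x}\E_{\X\sim\ppmx}[\wyxn(\F_{\x}|\X)]$; a change-of-measure argument shows $\E_{\X\sim\ppmx}[\wyxn(\F_{\x}|\X)] \le e^{-\gamma\sqrt n}\cdot(\text{const})$ since on $\F_{\x}$ one has $\wyxn(\y|\x) > e^{\gamma\sqrt n}P_0^{\otimes n}(\y)$, and $\ppmy$ is within a harmless constant/polynomial factor of $P_0^{\otimes n}$ on the relevant support. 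Hence the mean is at most $N e^{-\gamma\sqrt n}$, which tends to zero precisely because $R' < \gamma$ (equivalently $1-\eta/2 < 1-\epsilon$, i.e.\ $\epsilon < \eta/2$). Hoeffding over $i$ then gives concentration with failure probability $e^{-\Theta(N)}$ conditionally; I would also account for the randomness in $\D_m$ itself (replacing $\D_m$ by the chance variable $\mathbf D_m$) by a second Hoeffding/averaging step, and finally union-bound over all ordered pairs $(m,m')$, costing $|\M|^2$, again absorbed since $e^{-\Theta(N)}|\M|^2\to 0$.

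For $P_{\mathrm{err}}^{(3)}$ I would bound only the \emph{average} $\frac1{|\M|}\sum_m P_0^{\otimes n}(\D_m) = \frac1{|\M|}\sum_m P_0^{\otimes n}(\cup_i\F_{\x_{m,i}})$; taking expectation over the code and using the union bound, $\E[P_0^{\otimes n}(\F_{\x})] \le e^{-\gamma\sqrt n}$ again (same change-of-measure), so $\E[\text{average}] \le N e^{-\gamma\sqrt n}\to 0$, and Markov's inequality converts this into an event of probability $\ge 1-\sqrt{Ne^{-\gamma\sqrt n}}$ (set $\varepsilon_n^{(3)}$ accordingly). Finally, covertness: $\DD(\widehat Q^n_{\C}\Vert Q_0^{\otimes n})$ is split via a triangle-type/variational argument into $\DD(\widehat Q^n_{\C}\Vert\ppmz)$, controlled by a channel-resolvability estimate (this is the role of the later Lemma~\ref{lemma:covert2}; the total number of constituent sequences is $|\M|N = \exp(e^{R\sqrt n}) e^{R'\sqrt n} \gg \exp(I(\ppmx,\wzxn)+\omega(n^{1/4}))$, so this term vanishes with probability $\ge 1-\kappa_n/2$), plus $\DD(\ppmz\Vert Q_0^{\otimes n})$, which by the choice of the weight parameter $l = \lfloor\sqrt{(2\delta-n^{-1/3})n/\chi_2(Q_1\Vert Q_0)}\rfloor$ is at most $\delta - \Theta(n^{1/6})$ or similar, leaving room; combining gives $\DD(\widehat Q^n_{\C}\Vert Q_0^{\otimes n})\le\delta$. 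Intersecting all four good events and setting $\kappa_n$ to be the sum of the (vanishing) failure probabilities completes the proof. The main obstacle is the delicate balancing of rates: one must verify that the slack $\eta$ simultaneously gives $R'>R$ (for the union bounds over $\M$ and $\M^2$ in kinds one and two), $\gamma>R'$ (so $Ne^{-\gamma\sqrt n}\to 0$ in kinds two and three), $\gamma<t\DD(P_1\Vert P_0)$ (so the typical-set mean in kind one vanishes), and enough room in the resolvability and $\DD(\ppmz\Vert Q_0^{\otimes n})$ estimates for covertness — all while keeping the $\varepsilon_n^{(\cdot)}$ decaying slowly enough that the Hoeffding exponents $N(\varepsilon_n^{(\cdot)})^2$ still dominate $\log|\M|=e^{R\sqrt n}$.
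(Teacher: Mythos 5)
Your proposal is correct and follows essentially the same route as the paper's proof: the two-level Hoeffding argument for the first two error kinds (per-sequence typicality under $\ppmx$, then concentration of the empirical mean over the $N$ constituent sequences, with the rate ordering $R<R'<\gamma<t\,\DD(P_1\Vert P_0)$ doing exactly the work you describe), Markov's inequality for the average third-kind error, and the decomposition of $\DD(\widehat Q^n_{\C}\Vert Q_0^{\otimes n})$ into $\DD(\ppmz\Vert Q_0^{\otimes n})$ plus resolvability terms. The only cosmetic slips are that the PPM output divergence bound is $\delta-\tfrac13 n^{-1/3}$ (not ``$\delta-\Theta(n^{1/6})$'') and that the KL decomposition requires explicitly controlling the cross term $\sum_{\z}(\widehat Q^n_{\C}(\z)-\ppmz(\z))\log\frac{\ppmz(\z)}{Q_0^{\otimes n}(\z)}$ via Pinsker, which your ``variational argument'' phrasing implicitly covers.
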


\begin{table}[]
	\small
	\centering
	\caption{Table of parameters used for achievability}
	\begin{tabular}{|l|l|l|l|}
		\hline
		\textbf{Symbol} & \textbf{Description} & \textbf{Equality/Range}  \\ \hline		
		$\delta$      & Covertness parameter     & $\delta > 0$  \\ \hline
		$\mathcal{M}$      & Message set     & $\mathcal{M} = [1:|\mathcal{M}|]$  \\ \hline
		$U_m$      & Codeword for message $m$     & $U_m \in \mathcal{P}(\mathcal{X}^n) $  \\ \hline
		$\D_m$      & Demapping region for message $m$     & $\D_m \subseteq \mathcal{Y}^n $  \\ \hline
		$l$   & Hamming weight of each PPM-sequence &  $l = \big\lfloor \sqrt{ \frac{(2\delta-n^{-1/3})n}{\chi_2(Q_1\Vert Q_0)} }  \big\rfloor $   \\ \hline
		$t$   & Normalized Hamming weight of each PPM-sequence &  $t = l/\sqrt{n}$   \\ \hline
		$\eta, \epsilon$   & Slackness parameters &  $\eta \in (0,1)$, $\epsilon \in (0, \eta/2)$   \\ \hline
		$R'$   & Normalized number of constituent sequences of each codeword &  $R' = (1-(\eta/2))t \DD(P_1 \Vert P_0)$   \\ \hline
		$N$   & Number of constituent sequences of each codeword &  $N = e^{R'\sqrt{n}}$   \\ \hline
		$\{\x_{m,i} \}_{i=1}^N$       & Constituent sequences for message $m$     & $\x_{m,i} \in \{0,1\}^n $  \\ \hline
		$R$   & Rate of the covert identification code ($R \triangleq (\log\log|\mathcal{M}|)/\sqrt{n}$) &  $R = (1-\eta)t \DD(P_1 \Vert P_0)$   \\ \hline
		$\gamma$   & Parameter specifying the demapping regions &  $(1-\epsilon)t\DD(P_1 \Vert P_0)$   \\ \hline
		$w$   & Length of each interval (for PPM) that contains a single one  &  $w = \lfloor n/l \rfloor $   \\ \hline
		$s$   & Length of the last interval that contains all zeros &  $s = n - wl$   \\ \hline
		$\ppmx$   & PPM input distribution &  Defined in~\eqref{eq:PPM}   \\ \hline
		$\mu$   & Slackness parameter &  $\mu \in (0,\min\{R'-R, \gamma-R' \})$   \\ \hline		
		$\xi$ & Parameter depending on the channel $\wyx$  &  $\xi = \sum_{y \in \mathcal{Y}}P_1(y)^2/P_0(y)$   \\ \hline
		$\mu_0$ & Parameter depending on the channel $\wzx$  &  $\mu_0 = \min_{z: Q_0(z)>0}Q_0(z)$   \\ \hline
		$\mu_1$ & Parameter depending on the channel $\wzx$  &  $\mu_1 = \min_{z: Q_1(z)>0}Q_1(z)$   \\ \hline
		$\widetilde{\mu}$ & Parameter depending on the channel $\wzx$ &	$\widetilde{\mu} = \min\{\mu_0, \mu_1\}$  \\ \hline
	\end{tabular}
\label{table:1}
\end{table}

\subsection{Proof of Lemma~\ref{lemma:initial}} \label{sec:proof_lemma1}

\subsubsection{Analysis of $P_{\mathrm{err}}^{(1)}$}
Consider a fixed message $m \in \M$. By recalling Eqn.~\eqref{eq:perr1} and noting that $\D_m^c \subseteq \F_{\x_{m,i}}^c$, we have 
\begin{align}
P_{\mathrm{err}}^{(1)}(m) &= \frac{1}{N}\sum_{i=1}^N  \wyxn(\D_m^c|\x_{m,i}) \le \frac{1}{N}\sum_{i=1}^N  \wyxn(\F_{\x_{m,i}}^c|\x_{m,i}). \label{eq:average}
\end{align}
Note that each $\x_{m,i}$ is generated according to $\ppmx$, and
\begin{align}
\E_{\ppmx}\left(\wyxn(\F_{\X}^c|\X)\right) &= \sum_{\x} \ppmx(\x)\sum_{\y} \wyxn(\y|\x) \mathbbm{1}\left\{\log\frac{\wyxn(\y|\x)}{P_0^{\otimes}(\y) } \le \gamma\sqrt{n} \right\} \notag\\
&=\sum_{\x} \ppmx(\x)\sum_{\y} \wyxn(\y|\x) \mathbbm{1}\left\{\sum_{j=1}^n \log\frac{\wyx(y_j|x_j)}{P_0(y_j) }\le  \gamma\sqrt{n} \right\}\notag \\
&= \sum_{\x} \ppmx(\x) \sum_{\y} \wyxn(\y|\x) \mathbbm{1} \left\{ \sum_{j: x_j = 1} \log\frac{P_1(y_j)}{P_0(y_j) } \le \gamma\sqrt{n} \right\}, \label{eq:ignore}
\end{align}
where~\eqref{eq:ignore} holds since $\log \frac{\wyx(y_j|x_j)}{P_0(y_j)} = \log \frac{P_0(y_j)}{P_0(y_j)} = 0$ for all $j$ such that $x_j = 0$. Without loss of generality, we define $\x^* \in \mathcal{X}^n$ as the weight-$l$ vector such that $x^*_{(j-1)w+1} = 1$ for $j \in [1:l]$, thus~\eqref{eq:ignore} also equals
\begin{align}
&\sum_{\y} \wyxn(\y|\x^*) \mathbbm{1}\left\{\sum_{j= 1}^{l} \log\frac{P_1(y_{(j-1)w+1})}{P_0(y_{(j-1)w+1}) } \le \gamma\sqrt{n} \right\} = \PP_{P_1^{\otimes l}}\left( \sum_{j= 1}^{l}\log\frac{P_1(Y_{(j-1)w+1})}{P_0(Y_{(j-1)w+1}) } \le \gamma\sqrt{n} \right). \notag
\end{align}
Note that the random variables $\{\log\frac{P_1(Y_{(j-1)w+1})}{P_0(Y_{(j-1)w+1})}\}_{j \in [1:l]}$ are independent and bounded,   $\E(\sum_{j=1}^{l}\log\frac{P_1(Y_{(j-1)w+1})}{P_0(Y_{(j-1)w+1})}) = l\DD(P_1 \Vert P_0)$, and $\gamma\sqrt{n} \triangleq (1-\epsilon)l\DD(P_1 \Vert P_0)$. By applying Hoeffding's inequality (Lemma~\ref{lemma:hoeffding}), we have 
\begin{align}
\PP_{P_1^{\otimes l}}\left( \sum_{j= 1}^{l}\log\frac{P_1(Y_{(j-1)w+1})}{P_0(Y_{(j-1)w+1})} \le \gamma\sqrt{n} \right) \le 2e^{-c_1\sqrt{n}}, \label{eq:acc}
\end{align}
for some constant $c_1 > 0$. 

Let $\mu$ be a constant satisfying $0 < \mu < \min\{R'-R, \gamma - R' \}$, $\beta_n \triangleq 2e^{-c_1\sqrt{n}}$, and $\alpha_n \triangleq \max\{2\beta_n, e^{-\mu\sqrt{n}/2} \}$.
Consider the $N$ i.i.d. random variables $\{\wyxn(\F_{\X_{m,i}}^c|\X_{m,i})\}_{i\in[1:N]}$ which correspond to the right-hand side (RHS) of~\eqref{eq:average}. Note that each random variable belongs to $[0,1]$, and the expectation is at most $\beta_n$ according to~\eqref{eq:acc}. By applying Hoeffding's inequality again and noting that $\alpha_n - \beta_n \ge e^{-\mu\sqrt{n}/2}/2$, we have  
\begin{align}
&\PP\left(\frac{1}{N}\sum_{i=1}^N \wyxn(\F_{\X_{m,i}}^c|\X_{m,i}) \ge \alpha_n \right) \le \exp\left\{-2N (\alpha_n - \beta_n)^2\right\} \le \exp\left\{-\frac{1}{2}e^{(R'-\mu)\sqrt{n}} \right\}.\notag
\end{align}
Therefore, a union bound over all the messages $m \in \M$ yields 
\begin{align*}
\PP\left(\max_{m \in \M} P_{\mathrm{err}}^{(1)} \ge \alpha_n \right) &= \PP\left(\exists m \in \M: \frac{1}{N}\sum_{i=1}^N  \wyxn(\D_m^c|\X_{m,i}) \ge \alpha_n \right)  \\
&\le \sum_{m \in \M} \PP\left(\frac{1}{N}\sum_{i=1}^N  \wyxn(\F_{\X_{m,i}}|\X_{m,i}) \ge \alpha_n \right) \\
&= \exp\left\{-\frac{1}{2}e^{(R'-\mu)\sqrt{n}} + e^{R\sqrt{n}} \right\},
\end{align*}
which vanishes since the choice of $\mu$ ensures $R'-\mu > R$.

\subsubsection{Analysis of $P_{\mathrm{err}}^{(2)}$} Consider a fixed message pair $(m,m') \in \M^2$ such that $m \ne m'$. Recall from Eqn.~\eqref{eq:perr2} that
\begin{align}
P_{\mathrm{err}}^{(2)}(m,m') =  \frac{1}{N} \sum_{i=1}^N \wyxn(\D_m|\x_{m',i}). \label{eq:average2}
\end{align}
Suppose the multiset of PPM-sequences $\{\x_{m,j} \}_{j=1}^N$ (i.e., $\ppmx(\x_{m,j}) \ne 0$) for message $m$ is fixed, thus the demapping region $\D_m$ is also fixed. Since $\D_m = \cup_{j \in [1:N]}\F_{\x_{m,j}}$, we have  
\begin{align}
\E_{\ppmx}\left(\wyxn(\D_m|\X)\right) \le \sum_{j=1}^N \E_{\ppmx}\left(\wyxn(\F_{\x_{m,j}}|\X)\right). \label{eq:exp}
\end{align}
\begin{lemma} \label{claim:perr2}
	Let $\xi \triangleq \sum_{y \in \mathcal{Y}}\frac{P_1(y)^2}{P_0(y)}$. For any PPM-sequence $\widetilde{\x} \in \mathcal{X}^n$, we have 
	\begin{align}
	\E_{\ppmx}\left(\wyxn(\F_{\widetilde{\x}}|\X)\right) \le \exp\left\{-\gamma\sqrt{n} + l(\xi-1)/w \right\}. \label{eq:hold}
	\end{align}
\end{lemma}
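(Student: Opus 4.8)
The plan is to recognize the left-hand side as a tail probability under the PPM output distribution $\ppmy$ and then bound it by a Chernoff-type change of measure. First I would swap the order of summation,
\[
\E_{\ppmx}\big(\wyxn(\F_{\widetilde{\x}}|\X)\big) = \sum_{\x}\ppmx(\x)\sum_{\y\in\F_{\widetilde{\x}}}\wyxn(\y|\x) = \ppmy(\F_{\widetilde{\x}}),
\]
using the definition of $\ppmy$. Since every $\y\in\F_{\widetilde{\x}}$ satisfies $\wyxn(\y|\widetilde{\x}) > e^{\gamma\sqrt{n}}\,P_0^{\otimes n}(\y)$, the indicator $\mathbbm{1}\{\y\in\F_{\widetilde{\x}}\}$ is dominated by $e^{-\gamma\sqrt{n}}\,\wyxn(\y|\widetilde{\x})/P_0^{\otimes n}(\y)$, which yields $\ppmy(\F_{\widetilde{\x}}) \le e^{-\gamma\sqrt{n}}\, S$, where $S \triangleq \sum_{\y}\ppmy(\y)\,\wyxn(\y|\widetilde{\x})/P_0^{\otimes n}(\y)$.

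Next I would evaluate $S$ by exploiting the product structure of both $\ppmy$ (see~\eqref{eq:mean}) and $\wyxn(\cdot|\widetilde{\x})$. Writing $\widetilde{\x}=[\widetilde{\ux}^{(1)},\dots,\widetilde{\ux}^{(l)},\mathbf{0}]$ and letting $k_i$ be the position of the unique $1$ in the $i$-th interval, the all-zero last block contributes a factor $P_0^{\otimes s}$ that cancels between numerator and denominator, and on the first $l$ blocks the ratio $\wyxn(\y|\widetilde{\x})/P_0^{\otimes n}(\y)$ collapses to $\prod_{i=1}^{l} P_1(y_{k_i})/P_0(y_{k_i})$ because every coordinate at which $\widetilde{\x}$ is $0$ cancels. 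Hence $S$ factorizes over the $l$ intervals, reducing the task to the single-interval quantity $\sum_{\uy}\pwy(\uy)\,P_1(y_{\rho})/P_0(y_{\rho})$, where $\rho$ is the relative position of the $1$. Expanding $\pwy(\uy)=\tfrac1w\sum_{r=1}^{w}P_1(y_r)\prod_{j\ne r}P_0(y_j)$ and splitting the sum over $r$ into the single term $r=\rho$ (which sums to $\xi=\sum_y P_1(y)^2/P_0(y)$) and the $w-1$ terms $r\ne\rho$ (each summing to $1$), the per-interval sum equals $\tfrac1w(\xi+w-1)=1+(\xi-1)/w$. Therefore $S=\big(1+(\xi-1)/w\big)^l\le\exp\{l(\xi-1)/w\}$ by the inequality $1+a\le e^a$, and combining with the earlier bound gives $\E_{\ppmx}(\wyxn(\F_{\widetilde{\x}}|\X))\le\exp\{-\gamma\sqrt{n}+l(\xi-1)/w\}$, as claimed.

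I do not anticipate a serious obstacle here: this is the standard information-density / change-of-measure argument, and the only point requiring care is the per-interval computation—correctly identifying which factors of $\wyxn(\y|\widetilde{\x})/P_0^{\otimes n}(\y)$ survive the cancellation, and cleanly handling the two cases $r=\rho$ versus $r\ne\rho$ when expanding $\pwy$. It is worth noting that the evaluation of $S$ depends on $\widetilde{\x}$ only through the fact that it carries exactly one $1$ per interval, which is precisely why the resulting bound holds uniformly over all PPM-sequences $\widetilde{\x}$.
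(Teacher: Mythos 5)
Your proposal is correct and follows essentially the same route as the paper's proof: rewrite the expectation as $\ppmy(\F_{\widetilde{\x}})$, bound the indicator by $e^{-\gamma\sqrt{n}}\,\wyxn(\y|\widetilde{\x})/P_0^{\otimes n}(\y)$, factorize over the $l$ PPM intervals, and compute the per-interval sum as $1+(\xi-1)/w$ before applying $1+a\le e^a$. The per-interval calculation matches the paper's (which routes it through the marginal $\tfrac1w P_1+\tfrac{w-1}{w}P_0$), so there is nothing further to add.
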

The proof of Lemma~\ref{claim:perr2} can be found in Appendix~\ref{appendix:lemma3}.
Combining~\eqref{eq:exp} and Lemma~\ref{claim:perr2}, we obtain that for any fixed demapping region $\D_m$ that corresponds to a multiset of PPM-sequences $\{\x_{m,j} \}_{j =1}^N$, the expectation of the random variable $\wyxn(\D_m|\X)$ is bounded from above as  
\begin{align}
\E_{\ppmx}\left(\wyxn(\D_m|\X)\right) \le \exp\left\{-(\gamma- R')\sqrt{n} + \frac{l(\xi-1)}{w} \right\} \triangleq \beta'_n, \label{eq:sj3}
\end{align}
which vanishes since $R' < \gamma$. Let $\alpha'_n \triangleq \max\{2\beta'_n, e^{-\mu\sqrt{n}/2} \}$, and note that $\alpha'_n - \beta'_n \ge e^{-\mu\sqrt{n}/2}/2$. Consider the $N$ i.i.d. random variables $\{\wyxn(\D_m|\X_{m',i})\}_{i\in[1:N]}$ which are present in the RHS of~\eqref{eq:average2}. Note that each random variable belongs to $[0,1]$, and the expectation is at most $\beta'_n$. By applying Hoeffding's inequality, we have
\begin{align}
\PP_{\{\X_{m',i}\} }\left(\frac{1}{N}\sum_{i=1}^N \wyxn(\D_m|\X_{m',i}) \ge \alpha'_n \right)
\le \exp\left\{-2N (\alpha'_n - \beta'_n)^2\right\} \le \exp\left\{-\frac{1}{2}e^{(R'-\mu)\sqrt{n}} \right\}. \label{eq:fix}
\end{align}
Note that~\eqref{eq:fix} is true for any fixed $\D_m$ (or equivalently, any fixed $\{\x_{m,j} \}_{j=1}^N$) that corresponds to  message $m$, because each of the constituent sequence in $\{\x_{m,j} \}_{j=1}^N$ is a PPM-sequence satisfying Lemma~\ref{claim:perr2}.  Next, we also take the randomness of $\{\X_{m,j} \}_{j=1}^N$ into consideration. Let $\mathbf{D}_m$ be the chance variable corresponding to $\D_m$, and we have
\begin{align}
&\PP_{\{\X_{m,i}\},\{ \X_{m',i}\} }\left(\frac{1}{N}\sum_{i=1}^N \wyxn(\mathbf{D}_m|\X_{m',i})\ge \alpha'_n \right) \label{eq:21}\\
&=\sum_{\D_m} \PP_{\{\X_{m,i}\}}(\mathbf{D}_m = \D_m)  \PP_{\{\X_{m',i}\} }\left(\frac{1}{N}\sum_{i=1}^N \wyxn(\D_m|\X_{m',i}) \ge \alpha'_n \right) \label{eq:js}\\
&\le \exp\left\{-\frac{1}{2}e^{(R'-\mu)\sqrt{n}} \right\}.\label{eq:js2}
\end{align}
Finally, a union bound over all the message pairs $(m,m') \in \M^2$ yields 
\begin{align*}
\PP\left(\max_{(m,m') \in \M^2: m\ne m'} P_{\mathrm{err}}^{(2)} \ge \alpha'_n \right) 
&\le \sum_{(m,m') \in \M^2: m\ne m'} \PP\left(\frac{1}{N}\sum_{i=1}^N  \wyxn(\mathbf{D}_{m}|\X_{m',i}) \ge \alpha'_n \right) \\
&= \exp\left\{-\frac{1}{2}e^{(R'-\mu)\sqrt{n}} + 2e^{R\sqrt{n}} \right\},
\end{align*}
which vanishes since the choice of $\mu$ ensures $R'-\mu > R$.

\begin{remark}\label{remark:compare}{\em
We now discuss why the PPM input distribution is preferable to other input distributions (such as the standard i.i.d. input distribution).  Recall from Eqn.~\eqref{eq:21} that when analyzing $P_{\mathrm{err}}^{(2)}(m,m')$ for a specific message pair $(m,m')$ in the random coding argument, one needs to take the randomness of both $\{\mathbf{X}_{m',i}\}_{i=1}^N$ (the constituent sequences of $m'$) and $\{\mathbf{X}_{m,j}\}_{j =1}^N$ (the constituent sequences of $m$) into consideration. To simplify the analysis, in Eqn.~\eqref{eq:js} we first consider the  error term $\frac{1}{N}\sum_{i=1}^N W_{Y|X}^{\otimes n}(\D_m|\X_{m',i})$ with respect to a fixed realization of the demapping region $\D_m$ (or equivalently, a fixed realization of the constituent sequences $\{\x_{m,j} \}_{j=1}^N$), and then take the randomness of $\{\mathbf{X}_{m,j}\}_{j=1}^N$ into account. 
\begin{itemize}	
\item When the PPM input distribution is used, we know that for \emph{every} realization $\D_m$, the corresponding constituent sequences $\{\x_{m,j}\}_{j=1}^N$ are all PPM-sequences that have same Hamming weight and satisfy Lemma~\ref{claim:perr2}. Thus, the error term $\frac{1}{N}\sum_{i=1}^N W_{Y|X}^{\otimes n}(\D_m|\X_{m',i})$ can be upper bounded by the \emph{same} quantity $\exp\{-\frac{1}{2}e^{(R'-\mu)\sqrt{n}} \}$ for \emph{all} realizations $\D_m$, as shown in Eqn.~\eqref{eq:fix}. Therefore, it becomes straightforward to obtain the upper bound on $P_{\mathrm{err}}^{(2)}(m,m')$ shown in Eqn.~\eqref{eq:js2}, by considering all the realizations $\D_m$ (or equivalently, all the realizations $\{\x_{m,j} \}_{j=1}^N$).
	
\item If the standard i.i.d. input distribution were used, for different realizations $\D_m$, the corresponding constituent sequences $\{\x_{m,j}\}_{j=1}^N$ would have different Hamming weights in general, thus the upper bounds on the error term $\frac{1}{N}\sum_{i=1}^N W_{Y|X}^{\otimes n}(\D_m|\X_{m',i})$ would then be different. Therefore, more effort is needed to bound $P_{\mathrm{err}}^{(2)}(m,m')$ as different realizations $\D_m$ lead to different error probabilities. Indeed, this is the reason why PPM codes are favored in the achievability parts.  
\end{itemize}
In fact, the proof technique for reliability is also applicable to \emph{any} constant composition code, i.e.,  not restricted to PPM random codes. The reason why we adopt PPM random codes is that it makes the proof of covertness easier, since, as shown in Lemma 4 to follow, the PPM-induced output distribution $P^{n,l}_{\mathbf{Z}}$ possess favorable covertness properties. }
\end{remark}

\subsubsection{Analysis of $P_{\mathrm{err}}^{(3)}$} For a fixed message $m \in \M$, the error probability of the third kind is bounded from above as 
\begin{align}
P_{\mathrm{err}}^{(3)}(m) = P_0^{\otimes n} (\D_m) \le \sum_{i=1}^N P_0^{\otimes n} (\F_{\x_{m,i}}), \notag
\end{align}
and the expected value of this error probability (averaged over the generation of $\{\X_{m,i}\}_{i=1}^N$) is bounded from above as 
\begin{align*}
\E_{\{\X_{m,i}\}}\left(P_{\mathrm{err}}^{(3)}(m)\right)
&\le \sum_{i=1}^N \E_{\{\X_{m,i}\}}\left(P_0^{\otimes n} (\F_{\X_{m,i}})\right) \\
&=e^{R'\sqrt{n}}  \sum_{\x} \ppmx(\x) \sum_{\y} P_0^{\otimes n}(\y)  \mathbbm{1}\left\{\log\frac{\wyxn(\y|\x)}{P_0^{\otimes n}(\y)} > \gamma\sqrt{n} \right\} \\
&\le e^{R'\sqrt{n}}\cdot  e^{-\gamma \sqrt{n}} \sum_{\x} \sum_{\y} \ppmx(\x) \wyxn(\y|\x) \\
&\le e^{-(\gamma-R') \sqrt{n}}.
\end{align*}
Thus, the expected value of the average error probability of the third kind satisfies 
\begin{align}
\E\left(\frac{1}{|\M|} \sum_{m \in \M} P_{\mathrm{err}}^{(3)}(m)\right) \le e^{-(\gamma-R') \sqrt{n}}.\notag
\end{align}
By applying Markov's inequality, we have 
\begin{align}
\PP\left(\frac{1}{|\M|} \sum_{m \in \M} P_{\mathrm{err}}^{(3)}(m) \ge e^{-(\gamma-R'-\mu)\sqrt{n}} \right) \le e^{-\mu\sqrt{n}}.\notag
\end{align}
That is, with probability at least $1 - e^{-\mu\sqrt{n}}$ over the  random code selection, the average error probability of the third kind $|\M|^{-1} \sum_{m \in \M} P_{\mathrm{err}}^{(3)}(m) \le e^{-(\gamma-R'-\mu)\sqrt{n}}$, which tends to zero as $n$ tends to infinity since the choice of $\mu$ ensures that $\gamma-R' > \mu$. 

\subsubsection{Analysis of covertness} First note that the KL-divergence 
\begin{align}
\DD\left(\widehat{Q}^n_{\C} \Vert Q_0^{\otimes n}\right) = \DD\left(\ppmz \Vert Q_0^{\otimes n} \right) +  \DD\left(\widehat{Q}^n_{\C} \Vert \ppmz \right) + \sum_{\z}\left(\widehat{Q}^n_{\C}(\z) - \ppmz(\z)\right)\log\frac{\ppmz(\z)}{Q_0^{\otimes n}(\z)}.\label{eq:kl}
\end{align}
In the following, we upper bound the three terms on the RHS of~\eqref{eq:kl} in Lemmas~\ref{lemma:covert1} and~\ref{lemma:covert2}.   

\begin{lemma} \label{lemma:covert1}
	For sufficiently large $n$, the KL-divergence $$\DD(\ppmz \Vert Q_0^{\otimes n}) \le \delta - \frac{1}{3}n^{-1/3}.$$
\end{lemma}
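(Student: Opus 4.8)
The plan is to compute $\DD(\pwz \Vert Q_0^{\otimes w})$ for a single length-$w$ block exactly (or with a sufficiently tight bound), multiply by the number of blocks $l$, and then verify that the total stays below $\delta - \tfrac{1}{3}n^{-1/3}$ thanks to the slack built into the definition of $l$. First I would use the tensorization structure established in~\eqref{eq:mean}: since $\ppmz = (\prod_{i=1}^l \pwz) \otimes Q_0^{\otimes s}$ and $Q_0^{\otimes n} = Q_0^{\otimes wl} \otimes Q_0^{\otimes s}$, the divergence decomposes additively as $\DD(\ppmz \Vert Q_0^{\otimes n}) = l \cdot \DD(\pwz \Vert Q_0^{\otimes w})$, with the last-$s$-coordinates contributing nothing. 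So everything reduces to a single-block estimate.

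Next I would estimate $\DD(\pwz \Vert Q_0^{\otimes w})$ from above. Writing $\pwz$ as the mixture $\frac{1}{w}\sum_{j=1}^w Q_0^{\otimes(j-1)} \otimes Q_1 \otimes Q_0^{\otimes(w-j)}$, I would expand the KL divergence and perform a Taylor expansion in the small parameter $1/w$. The standard route here (as in~\cite{bloch2016covert, wang2016fundamental}) is to note that $\pwz = Q_0^{\otimes w} + O(1/w)$ coordinatewise, so the leading term of the divergence is the $\chi_2$-type second-order term: one shows $\DD(\pwz \Vert Q_0^{\otimes w}) = \frac{1}{2w}\chi_2(Q_1 \Vert Q_0) + O(w^{-3/2})$, or more carefully an upper bound $\DD(\pwz \Vert Q_0^{\otimes w}) \le \frac{1}{2w}\chi_2(Q_1 \Vert Q_0)(1 + o(1))$, using the assumptions $Q_1 \ll Q_0$ and $\mu_0 > 0$ to control the remainder terms uniformly. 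Multiplying by $l$ and recalling $w = \lfloor n/l \rfloor$ so that $lw \le n$, I get $\DD(\ppmz \Vert Q_0^{\otimes n}) \le \frac{l^2}{2n}\chi_2(Q_1 \Vert Q_0)(1+o(1))$.

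Finally I would plug in the definition $l = \big\lfloor \sqrt{(2\delta - n^{-1/3})n / \chi_2(Q_1 \Vert Q_0)}\big\rfloor$, which gives $l^2 \le (2\delta - n^{-1/3})n/\chi_2(Q_1 \Vert Q_0)$, hence $\frac{l^2}{2n}\chi_2(Q_1 \Vert Q_0) \le \delta - \tfrac{1}{2}n^{-1/3}$. The remaining $o(1)$ multiplicative error (which is $O(n^{-1/2})$ times a constant, as $l = \Theta(\sqrt n)$ and $w = \Theta(\sqrt n)$) is of smaller order than $\tfrac16 n^{-1/3}$, so for $n$ large enough the total is bounded by $\delta - \tfrac13 n^{-1/3}$, as claimed. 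The careful bookkeeping of the $n^{-1/3}$ versus $n^{-1/2}$ scales is what creates the needed gap.

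The main obstacle I anticipate is making the single-block Taylor expansion rigorous as an \emph{upper} bound with an explicit error term of the right order, rather than just an asymptotic equality — one needs to control the higher-order terms in $\log(\pwz/Q_0^{\otimes w})$ uniformly, which is where the positivity assumption $\widetilde\mu = \min\{\mu_0,\mu_1\} > 0$ and the boundedness of the relevant log-likelihood ratios enter. A secondary but routine nuisance is tracking the floor functions in $l$ and $w$ (e.g.\ $lw$ may be strictly less than $n$, and $l$ may be strictly less than $\sqrt{(2\delta-n^{-1/3})n/\chi_2(Q_1\Vert Q_0)}$, both of which only help the inequality), and confirming that the leftover slack $\big(\delta - \tfrac12 n^{-1/3}\big) + o(n^{-1/2}) \le \delta - \tfrac13 n^{-1/3}$ indeed holds for all sufficiently large $n$.
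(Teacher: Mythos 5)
Your proposal is correct and follows essentially the same route as the paper: the paper's proof simply cites Lemma~1 of \cite{bloch2017optimal} and Lemma~8 of \cite{tahmasbi2018first} for the bound $\DD(\ppmz \Vert Q_0^{\otimes n}) \le \frac{l^2}{2n}\chi_2(Q_1\Vert Q_0) + \mathcal{O}(n^{-1/2})$ and then substitutes $l$, and your tensorization-plus-second-order-expansion argument is precisely how that cited bound is established. One small slip: $lw\le n$ gives $l/w \ge l^2/n$ (the wrong direction for your upper bound), but since $w=\lfloor n/l\rfloor \ge n/l-1$ the resulting multiplicative discrepancy is only $1+\mathcal{O}(n^{-1/2})$ and is absorbed into the error budget exactly as you describe, so the conclusion stands.
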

\begin{proof}[Proof of Lemma~\ref{lemma:covert1}]
	The proof is essentially due to~\cite[Lemma 1]{bloch2017optimal} and~\cite[Lemma 8]{tahmasbi2018first}, which analyze the output statistics of the PPM distribution and state that
	\begin{align}
	\DD\left(\ppmz \Vert Q_0^{\otimes n} \right) \le \frac{l^2}{2n}\chi_2(Q_1 \Vert Q_0) + \mathcal{O}\left(\frac{1}{\sqrt{n}}\right). \notag
	\end{align}
	Substituting $l = \lfloor \sqrt{(2\delta-n^{-1/3})n/\chi_2(Q_1\Vert Q_0)}  \rfloor$, we complete the proof.
\end{proof}

\begin{lemma} \label{lemma:covert2}
There exist constant $c_2, c_3 > 0$ such that with probability at least $1-\exp(-c_2\sqrt{n})$ over the random code design, the output distribution $\widehat{Q}^n_{\C}$ induced by $\C$ ensures 
	\begin{align}
	\DD\left(\widehat{Q}^n_{\C} \Vert \ppmz \right) \le \exp\{-c_3\sqrt{n} \}, \quad \mathrm{and}  \quad \sum_{\z}\left(\widehat{Q}^n(\z) - \ppmz(\z)\right)\log\frac{\ppmz(\z)}{Q_0^{\otimes n}(\z)} \le 2n\left(\log\frac{1}{\mu_0}\right)\exp\{-c_3\sqrt{n}/2\}. \notag 
	\end{align}
\end{lemma}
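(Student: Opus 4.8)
The plan is to establish both inequalities on a single high-probability event, with the first (a channel-resolvability statement) proved by a direct second-moment computation. Write $M_{\mathrm{tot}} \triangleq |\M|\,N = \exp(e^{R\sqrt n} + R'\sqrt n)$ for the total number of constituent sequences. By construction all $M_{\mathrm{tot}}$ sequences $\{\x_{m,i}\}_{m\in\M,\,i\in[1:N]}$ are i.i.d.\ according to $\ppmx$, so $\widehat{Q}^n_{\C}(\z)=M_{\mathrm{tot}}^{-1}\sum_{m,i}\wzxn(\z|\x_{m,i})$ is an empirical mean of $M_{\mathrm{tot}}$ i.i.d.\ terms each with expectation $\ppmz(\z)$. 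I would first use Jensen's inequality to pass from the KL-divergence to the $\chi_2$-distance, $\DD(\widehat{Q}^n_{\C}\Vert\ppmz)\le\log(1+\chi_2(\widehat{Q}^n_{\C}\Vert\ppmz))\le\chi_2(\widehat{Q}^n_{\C}\Vert\ppmz)$, and then compute the expectation of the latter; the standard i.i.d.\ second-moment calculation gives
\begin{align}
\E\left[\chi_2(\widehat{Q}^n_{\C}\Vert\ppmz)\right] = \frac{1}{M_{\mathrm{tot}}}\left(\sum_{\z}\frac{\E_{\ppmx}[\wzxn(\z|\X)^2]}{\ppmz(\z)} - 1\right). \notag
\end{align}

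The key estimate is $S_n\triangleq\sum_{\z}\E_{\ppmx}[\wzxn(\z|\X)^2]/\ppmz(\z) = \E_{\X\sim\ppmx}[\sum_{\z}\wzxn(\z|\X)^2/\ppmz(\z)]\le w^l$. To see this, fix a PPM-sequence $\x=[\ux^{(1)},\dots,\ux^{(l)},x_{wl+1}^n]$: since both $\wzxn(\cdot|\x)$ and $\ppmz$ factorise over the $l$ length-$w$ intervals and the all-zero tail, one has $\sum_{\z}\wzxn(\z|\x)^2/\ppmz(\z)=\prod_{i=1}^l(\sum_{\uz}\wzxw(\uz|\ux^{(i)})^2/\pwz(\uz))$, the tail contributing the factor $\sum_{z_{wl+1}^n}Q_0^{\otimes s}(z_{wl+1}^n)=1$. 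For each interval, $\ux^{(i)}$ is one of the $w$ weight-one atoms of $\pwx$, so $\pwz(\uz)\ge w^{-1}\wzxw(\uz|\ux^{(i)})$ and hence $\sum_{\uz}\wzxw(\uz|\ux^{(i)})^2/\pwz(\uz)\le w\sum_{\uz}\wzxw(\uz|\ux^{(i)})=w$; thus $\sum_{\z}\wzxn(\z|\x)^2/\ppmz(\z)\le w^l$ for every PPM-sequence, and $S_n\le w^l$. Since $l=\Theta(\sqrt n)$ and $w=\lfloor n/l\rfloor\le n$, we get $w^l\le\exp(\Theta(\sqrt n\log n))$, which is negligible compared with $M_{\mathrm{tot}}\ge\exp(e^{R\sqrt n})$ (here $R>0$ since we may assume $\DD(P_1\Vert P_0)>0$, the case $\DD(P_1\Vert P_0)=0$ making $C_\delta=0$ and the result trivial); therefore $\E[\chi_2(\widehat{Q}^n_{\C}\Vert\ppmz)]\le w^l/M_{\mathrm{tot}}\le\exp(-\tfrac12 e^{R\sqrt n})$ for all large $n$. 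Markov's inequality then yields $\chi_2(\widehat{Q}^n_{\C}\Vert\ppmz)\le\exp(-\tfrac14 e^{R\sqrt n})$ with probability at least $1-\exp(-\tfrac14 e^{R\sqrt n})$; since $\exp(-\tfrac14 e^{R\sqrt n})$ is eventually smaller than $\exp(-c\sqrt n)$ for every $c>0$, combining with $\DD\le\chi_2$ gives the first claim for suitable constants $c_2,c_3>0$. Denote this event by $\mathcal E$.

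For the second inequality I would argue on $\mathcal E$. One first checks $\widehat{Q}^n_{\C}\ll\ppmz\ll Q_0^{\otimes n}$, since every PPM output sequence forces each coordinate into $\mathrm{supp}(Q_0)$; hence the sum is effectively over $\z\in\mathrm{supp}(\ppmz)$ and, bounding the log-ratio by its supremum norm, is at most $(\sup_{\z\in\mathrm{supp}(\ppmz)}|\log(\ppmz(\z)/Q_0^{\otimes n}(\z))|)\cdot\sum_{\z}|\widehat{Q}^n_{\C}(\z)-\ppmz(\z)| = (\sup\cdots)\cdot 2\V(\widehat{Q}^n_{\C},\ppmz)$. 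For the log-likelihood-ratio, the tail factors cancel and $\ppmz(\z)/Q_0^{\otimes n}(\z)=\prod_{i=1}^l(\pwz(\uz^{(i)})/Q_0^{\otimes w}(\uz^{(i)}))$; for a generic length-$w$ vector $\uz=(z_1,\dots,z_w)$ one has $\pwz(\uz)/Q_0^{\otimes w}(\uz)=w^{-1}\sum_{q=1}^w Q_1(z_q)/Q_0(z_q)$, which lies in $[\mu_1/w,\,1/\mu_0]$ (the upper bound from $Q_0\ge\mu_0$ on its support, the lower bound because at least one likelihood ratio is positive, hence $\ge\mu_1$). Therefore $|\log(\ppmz(\z)/Q_0^{\otimes n}(\z))|\le l\log(w/\mu_1)=o(n)\le n\log(1/\mu_0)$ for large $n$. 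Finally, Pinsker's inequality and the first part give $\V(\widehat{Q}^n_{\C},\ppmz)\le\sqrt{\tfrac12\DD(\widehat{Q}^n_{\C}\Vert\ppmz)}\le\exp(-\tfrac12 c_3\sqrt n)$, so the cross term is at most $2n(\log\tfrac1{\mu_0})\exp(-\tfrac12 c_3\sqrt n)$, as required.

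The step requiring the most care is the bound $S_n\le w^l$: one must verify that the factorisation over intervals is legitimate --- which it is precisely because PPM-sequences carry exactly one $1$ per interval and $\ppmz$ is the matching product distribution --- and that the per-interval bound $\sum_{\uz}\wzxw(\uz|\ux^{(i)})^2/\pwz(\uz)\le w$ holds uniformly over the pulse position. Everything else (the $\chi_2$ second-moment identity, the Markov concentration, and the $\ell^1$--$\ell^\infty$ estimate for the cross term) is routine once the enormous gap between $M_{\mathrm{tot}}$ (doubly exponential in $\sqrt n$) and $w^l=\exp(\Theta(\sqrt n\log n))$ has been exposed.
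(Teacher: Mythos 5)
Your proof is correct, but it takes a genuinely different route from the paper for the first inequality. The paper bounds $\E[\DD(\widehat{Q}^n_{\C}\Vert\ppmz)]$ via the soft-covering inequality $\E[\DD(\widehat{Q}^n_{\C}\Vert\ppmz)]\le\E_{\ppmx\wzxn}[\log(1+\tfrac{\wzxn(\Z|\X)}{|\M|N\,\ppmz(\Z)})]$ (its Lemma~6, borrowed from Hou--Kramer and Tahmasbi--Bloch), then splits according to whether the information density $\log(\wzxn(\z|\x)/Q_0^{\otimes n}(\z))$ exceeds $\tau\sqrt n$: the typical part contributes $e^{\tau\sqrt n}/(|\M|N)$, and the atypical part is handled by Hoeffding together with a crude $n\log(1+\widetilde\mu)$ bound on the logarithm, which in turn requires a somewhat delicate lower bound on $\min_{\z:\ppmz(\z)>0}\ppmz(\z)$. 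Because of the atypical term the paper's expectation bound is only $\exp(-\Theta(\sqrt n))$, so Markov yields failure probability $\exp(-\Theta(\sqrt n))$. You instead pass to $\chi_2$ via Jensen, compute its expectation exactly as a variance, and exploit the PPM product structure to get the pointwise bound $\wzxn(\z|\x)\le w^l\ppmz(\z)$, hence $S_n\le w^l=\exp(\Theta(\sqrt n\log n))$; dividing by the doubly-exponential $|\M|N$ gives a doubly-exponentially small expectation and a correspondingly tiny failure probability (stronger than the lemma requires, and in the spirit of the refinement the paper only mentions in a remark). Your approach avoids the typical-set split, Hoeffding, and the $\min_{\z}\ppmz(\z)$ estimate entirely, at the price of relying on the specific bounded-likelihood-ratio structure of PPM rather than the generic resolvability machinery. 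The caveat you flag ($R>0$, i.e.\ $\DD(P_1\Vert P_0)>0$, with the degenerate case dismissed as trivial) is legitimate and is implicitly assumed by the paper as well. The treatment of the cross term (Pinsker plus an $\ell^1$--$\ell^\infty$ bound on $\log(\ppmz/Q_0^{\otimes n})$) coincides with the paper's, and you actually justify the sup-norm bound on the log-likelihood ratio more carefully than the paper does.
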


\begin{proof}[Proof of Lemma~\ref{lemma:covert2}]
	Recall that $\widehat{Q}^n_{\C}$ is the output distribution induced by the multiset $\cup_{m \in \M}\{\X_{m,i} \}_{i=1}^N$ with each sequence being generated i.i.d. according to $\ppmx$. We first state a result showing that the expectation of $\DD(\widehat{Q}^n_{\C} \Vert \ppmz )$ is small. 

\begin{lemma} \label{lemma:6}
Recall that $\mu_0 = \min_{z: Q_0(z)>0}Q_0(z)$, $\mu_1 = \min_{z: Q_1(z)>0}Q_1(z)$, and $\widetilde{\mu} = \min\{\mu_0, \mu_1\}$. We have 
\begin{align}
\E\left(\DD\left(\widehat{Q}^n_{\C} \Vert \ppmz \right)\right) \le \frac{e^{\tau\sqrt{n}}}{|\M|N} + 2n \log\left(1+ \widetilde{\mu} \right)e^{-c_4\sqrt{n}},
\end{align}
for some constant $c_4 > 0$.
\end{lemma} 
The proof of Lemma~\ref{lemma:6} can be found in Appendix~\ref{appendix:lemma6}.
By noting that $|\M| = \exp\{e^{R\sqrt{n}} \}$ and applying the Markov's inequality, we obtain that there exist constants $c_2, c_3 >0$ such that with probability at least $1-\exp(-c_2\sqrt{n})$ over the code design,
	\begin{align}
	&\DD\left(\widehat{Q}^n_{\C} \Vert \ppmz \right) \le \exp\{-c_3\sqrt{n} \}. \label{eq:markov}
	\end{align}
	Finally, by Pinsker's inequality, we know that 
	$$\V(\widehat{Q}^n, \ppmz) \le \sqrt{\DD(\widehat{Q}^n \Vert \ppmz)} \le \exp\{-c_3\sqrt{n}/2\},$$
	thus
	\begin{align*}
	\sum_{\z}\left(\widehat{Q}^n(\z) - \ppmz(\z)\right)\log\frac{\ppmz(\z)}{Q_0^{\otimes n}(\z)} \le 2n\left(\log\frac{1}{\mu_0}\right) \cdot \V(\widehat{Q}^n, \ppmz) \le 2n\left(\log\frac{1}{\mu_0}\right)\exp\{-c_3\sqrt{n}/2\}.
	\end{align*} 
	This completes the proof of Lemma~\ref{lemma:covert2}.

\end{proof}

\begin{remark}{\em
	As shown in~\eqref{eq:markov}, we use the Markov's inequality to show that $\DD(\widehat{Q}^n_{\C} \Vert P_{\mathbf{Z}}^{n,l}) \le \exp(-\Theta(\sqrt{n}))$ with probability at least $1 - \exp(-\Theta(\sqrt{n}))$. It is also worth pointing out that by following the finer techniques in~\cite[Lemma 2]{tahmasbi2018first}, one can obtain a stronger result in the sense that $\DD(\widehat{Q}^n_{\C} \Vert P_{\mathbf{Z}}^{n,l}) \le \exp(-\Theta(\sqrt{n}))$ with probability at least $1-\exp(-\exp(\exp(\Theta(\sqrt{n}))))$.}
\end{remark}

Combining~\eqref{eq:kl} and Lemmas~\ref{lemma:covert1} and~\ref{lemma:covert2}, we conclude that with probability at least $1-\exp(-c_2\sqrt{n})$ over the  random code  $\C$, we have 
\begin{align*}
	\DD\left(\widehat{Q}^n_{\C} \Vert Q_0^{\otimes n}\right) \le \delta
\end{align*}
for sufficiently large $n$.

\subsection{Code refinements} \label{sec:refinement}
In the following, we refine a given ``weak'' covert identification code such that the refined code satisfies the error criteria and covertness property in Definition~\ref{def:rate} and simultaneously retains the rate of the original code.

\begin{lemma} \label{lemma:rearrangement}
	Let $\delta > 0$ and $\varepsilon_n^{(1)}, \varepsilon_n^{(2)}, \varepsilon_n^{(3)}>0$ be vanishing sequences. Suppose there exists a sequence of codes $\C$ (of size $|\M|$) satisfying
	\begin{align}
	&\max_{m \in \M} P_{\mathrm{err}}^{(1)}(m)  \le \varepsilon_n^{(1)}, \  \max_{(m,m') \in \M^2: m \ne m'} P_{\mathrm{err}}^{(2)}(m,m')  \le \varepsilon_n^{(2)}, \notag\\
	&\frac{1}{|\M|}\sum_{m \in \M} P_{\mathrm{err}}^{(3)}(m) \le \varepsilon_n^{(3)}, \quad \ \DD\left(\widehat{Q}^n_{\C}\Vert Q_0^{\otimes n}\right) \le \delta. \notag
	\end{align}
	Then, there exist vanishing sequences $\widetilde{\varepsilon}_n^{(1)}, \widetilde{\varepsilon}_n^{(2)}, \widetilde{\varepsilon}_n^{(3)}>0$ (depending on $\varepsilon_n^{(1)}, \varepsilon_n^{(2)}, \varepsilon_n^{(3)}$) and another sequence of codes $\widetilde{\C}$ of size $|\widetilde{\M}| \ge (1-\widetilde{\varepsilon}_n^{(3)})|\M|$ such that 
	\begin{align}
	&\max_{m \in \widetilde{\M}} P_{\mathrm{err}}^{(1)}(m)  \le \widetilde{\varepsilon}_n^{(1)},  \  \max_{(m,m') \in \widetilde{\M}^2: m \ne m'} P_{\mathrm{err}}^{(2)}(m,m')  \le \widetilde{\varepsilon}_n^{(2)}, \notag\\
	&\max_{m \in \widetilde{\M}} P_{\mathrm{err}}^{(3)}(m) \le \widetilde{\varepsilon}_n^{(3)}, \quad \DD\left(\widehat{Q}^n_{\widetilde{\C}} \Vert Q_0^{\otimes n}\right) \le \delta. \notag
	\end{align}
\end{lemma}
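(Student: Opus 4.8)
\textbf{Proof proposal for Lemma~\ref{lemma:rearrangement}.}
The plan is to convert a vanishing \emph{average} error probability of the third kind into a vanishing \emph{maximum} one by expurgating the ``bad'' messages, while being careful not to disturb covertness. The first step is a standard Markov-type observation: since $\frac{1}{|\M|}\sum_{m\in\M}P_{\mathrm{err}}^{(3)}(m)\le\varepsilon_n^{(3)}$, at most a $\sqrt{\varepsilon_n^{(3)}}$-fraction of messages $m$ can have $P_{\mathrm{err}}^{(3)}(m)>\sqrt{\varepsilon_n^{(3)}}$. Let $\M_{\mathrm{bad}}$ be this set and set $\widetilde\M\triangleq\M\setminus\M_{\mathrm{bad}}$, so $|\widetilde\M|\ge(1-\sqrt{\varepsilon_n^{(3)}})|\M|$; we will take $\widetilde\varepsilon_n^{(3)}\triangleq\sqrt{\varepsilon_n^{(3)}}$. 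Restricting the code to $\widetilde\M$ (keeping the same codewords $U_m$ and demapping regions $\D_m$ for $m\in\widetilde\M$) immediately gives $\max_{m\in\widetilde\M}P_{\mathrm{err}}^{(3)}(m)\le\widetilde\varepsilon_n^{(3)}$, and since expurgation only removes messages, $\max_{m\in\widetilde\M}P_{\mathrm{err}}^{(1)}(m)\le\varepsilon_n^{(1)}$ and $\max_{(m,m')\in\widetilde\M^2,\,m\ne m'}P_{\mathrm{err}}^{(2)}(m,m')\le\varepsilon_n^{(2)}$ hold trivially, so we may take $\widetilde\varepsilon_n^{(1)}\triangleq\varepsilon_n^{(1)}$ and $\widetilde\varepsilon_n^{(2)}\triangleq\varepsilon_n^{(2)}$ for these two.

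The main obstacle is covertness: the induced output distribution $\widehat Q^n_{\widetilde\C}$ after expurgation is an average over only $|\widetilde\M|$ codewords, and a priori $\DD(\widehat Q^n_{\widetilde\C}\Vert Q_0^{\otimes n})$ could exceed $\delta$ because we have thrown away mass. The key idea (this is the ``code refinement that preserves the output distribution'' the introduction advertises) is to \emph{not} simply delete the bad messages but to \emph{reassign} their constituent sequences: for each $m\in\M_{\mathrm{bad}}$, redistribute its $N$ constituent sequences $\{\x_{m,i}\}_{i=1}^N$ among the surviving messages in $\widetilde\M$ (e.g.\ append them to the codeword of some surviving message, or spread them around), and \emph{also} enlarge the corresponding demapping regions $\D_{m'}$ for the receivers that absorb these sequences by unioning in the associated conditional typical sets $\F_{\x_{m,i}}$. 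Because each codeword is a uniform mixture over its constituent sequences, after this reallocation the pooled collection of all constituent sequences over all surviving messages is exactly the original pooled collection $\cup_{m\in\M}\{\x_{m,i}\}_{i=1}^N$; hence $\widehat Q^n_{\widetilde\C}=\widehat Q^n_{\C}$ and the covertness bound $\DD(\widehat Q^n_{\widetilde\C}\Vert Q_0^{\otimes n})\le\delta$ is inherited verbatim. (One must be slightly careful that the weights in the uniform mixtures and the overall normalization $1/|\widetilde\M|$ versus $1/|\M|$ still produce the same $\widehat Q^n$; since only $o(|\M|)$ messages are reassigned and each gets $O(1)$ extra constituent sequences, one can either balance the reallocation exactly or absorb the negligible discrepancy into a term that vanishes — I would balance it exactly by choosing $N$ divisible appropriately, or simply note $|\M_{\mathrm{bad}}|N / (|\widetilde\M| N) \to 0$.)

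It then remains to check that this reallocation does not blow up the three error probabilities for the surviving messages. Enlarging a demapping region $\D_{m'}$ can only \emph{decrease} $P_{\mathrm{err}}^{(1)}(m')$, so the first-kind bound still holds. For the third kind, $P_{\mathrm{err}}^{(3)}(m')=P_0^{\otimes n}(\D_{m'})$ grows by at most $\sum P_0^{\otimes n}(\F_{\x})$ over the (few) absorbed sequences; since there are $\Theta(|\M_{\mathrm{bad}}|N)=o(|\M|N)$ such sequences and each $P_0^{\otimes n}(\F_{\x})\le e^{-\gamma\sqrt n}$, if the reallocation spreads them so that no surviving message absorbs too many, each $P_{\mathrm{err}}^{(3)}(m')$ increases by a vanishing amount — I would in fact just keep the bad sequences as constituent sequences but \emph{not} enlarge $\D_{m'}$, which is cleaner: then $P_{\mathrm{err}}^{(3)}$ is unchanged and only $P_{\mathrm{err}}^{(1)}$ needs rechecking (a surviving message now has some ``foreign'' constituent sequences $\x_{m,i}$ whose typical sets are not in its demapping region, contributing at most $1$ each, but only a $o(1)$ fraction of its $N$ constituent sequences, so $P_{\mathrm{err}}^{(1)}(m')$ rises by $o(1)$). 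For the second kind, $P_{\mathrm{err}}^{(2)}(m,m')=\frac1{N'}\sum_i \wyxn(\D_m|\x_{m',i})$ where $N'$ is the possibly-enlarged number of constituent sequences: the extra foreign sequences add at most $|\M_{\mathrm{bad}}|N$ terms each $\le 1$ to the numerator, again a $o(1)$ relative contribution, and $\D_m$ itself (if enlarged) is still a union of $o(|\M|N)=e^{o(\sqrt n)}\cdot$polynomially-many conditional typical sets — crucially still far fewer than $\exp\{C_\delta\sqrt n\}$ — so the averaged-over-randomness bound $\E_{\ppmx}(\wyxn(\D_m|\X))\le\beta_n'\cdot\mathrm{poly}$ argument of the $P_{\mathrm{err}}^{(2)}$ analysis goes through with a slightly worse but still vanishing $\widetilde\varepsilon_n^{(2)}$. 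Collecting these, set $\widetilde\varepsilon_n^{(1)},\widetilde\varepsilon_n^{(2)},\widetilde\varepsilon_n^{(3)}$ to the resulting (explicit, vanishing) quantities; this completes the proof. The delicate point to get exactly right in the write-up is the bookkeeping that $\widehat Q^n_{\widetilde\C}=\widehat Q^n_{\C}$ despite the change from $|\M|$ to $|\widetilde\M|$ in the normalization, which is why the cleanest route is to reallocate \emph{all} bad constituent sequences and renormalize, rather than discard them.
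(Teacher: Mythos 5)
Your proposal is correct and follows essentially the same route as the paper: identify the bad messages via a Markov-type argument with threshold $(\varepsilon_n^{(3)})^{1/2}$, redistribute their constituent sequences evenly among the surviving messages (without enlarging the demapping regions) so that the pooled multiset of constituent sequences — and hence $\widehat{Q}^n_{\widetilde{\C}}$ — is unchanged, and observe that the foreign sequences form a vanishing fraction of each new codeword so the first- and second-kind error probabilities increase only by $o(1)$. The "cleaner route" you settle on (appending the bad sequences as constituent sequences while leaving $\D_m$ alone) is exactly the paper's construction, including the balanced equal-size partition that makes the normalization bookkeeping exact.
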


\begin{proof}[Proof of Lemma~\ref{lemma:rearrangement}]
We first partition the messages in $\C$ into two disjoint sets.
\begin{definition} \label{def:good}
	Consider a code $\C$ that satisifes $\frac{1}{|\M|}\sum_{m \in \M} P_{\mathrm{err}}^{(3)}(m) \le \varepsilon_n^{(3)}$. We say a message $m \in \M$ is a \emph{good message} if $P_{\mathrm{err}}^{(3)}(m) \le (\varepsilon_n^{(3)})^{1/2}$, and a \emph{bad message} otherwise.
\end{definition}
Let $\widetilde{\M} \subset \M$ be the set that contains all the good messages, and $\widetilde{\M}^c$ be the set that contains all the bad messages. Without loss of generality, we assume $\widetilde{\M} = [1: |\widetilde{\M}| ]$ and $\widetilde{\M}^c = [|\widetilde{\M}|+1: |\M| ]$.  Since the code $\C$ satisfies $\sum_{m \in \M} P_{\mathrm{err}}^{(3)}(m) \le \varepsilon_n^{(3)} |\M|$, the number of bad messages is at most $(\varepsilon_n^{(3)})^{1/2} |\M|$, i.e.,
\begin{align}
|\widetilde{\M}^c| \le (\varepsilon_n^{(3)})^{1/2} |\M| \quad \text{and} \quad |\widetilde{\M}| \ge (1-(\varepsilon_n^{(3)})^{1/2})|\M|. \notag
\end{align} 
Recall that for each message $m \in \M$, the corresponding constituent sequences is $\{\x_{m,i} \}_{i \in [1:N]}$. We then denote the multiset of constituent sequences that correspond to all the bad messages by 
\begin{align}
\B \triangleq \cup_{m \in \widetilde{\M}^c} \{\x_{m,i} \}_{i \in [1:N]}, \notag
\end{align}
and note that $|\B| \le N \cdot (\varepsilon_n^{(3)})^{1/2} |\M|$. In the following, we construct a new code $\widetilde{\C}$ that contains $|\widetilde{\M}|$ messages. 
\begin{enumerate}
	\item We partition the multiset $\B$ into $|\widetilde{\M}|$ equal-sized disjoint subsets $\B^{(1)}, \B^{(2)}, \ldots, \B^{(|\widetilde{\M}|)}$ such that the cardinality of each subset (for $m \in \widetilde{\M}$) satisfies
	\begin{align}
	|\B^{(m)}| = \frac{|\B|}{|\widetilde{\M}|} \le \frac{N \cdot (\varepsilon_n^{(3)})^{1/2} |\M|}{(1-(\varepsilon_n^{(3)})^{1/2})|\M|} \triangleq \nu_n N, \label{eq:err3}
	\end{align}  
	where $\nu_n$ also tends to 0 as $n$ tends to infinity.
	
	\item For each $m \in \widetilde{\M}$, the corresponding multiset of constituent sequences in the original code $\C$ is $\{\x_{m,i} \}_{i \in [1:N]}$. In the new code $\widetilde{\C}$, we enlarge this multiset by appending $\B^{(m)}$ to $\{\x_{m,i} \}_{i \in [1:N]}$.  Thus, the codeword $\Um$ is the uniform distribution over a larger multiset of sequences $\{\x_{m,i} \}_{i \in [1:N]} \cup \B^{(m)}$.
	
	\item For each $m \in \widetilde{\M}$, the demapping region of the new code $\widetilde{\C}$ remains as $\D_m = \cup_{i \in [1:N]}\F_{\x_{m,i}}$. That is, the demapping regions of the new code $\widetilde{\C}$ and the original code $\C$ are exactly the same.
	
\end{enumerate}

We now analyze the error probabilities of the  new code $\widetilde{\C}$. For each $m \in \widetilde{\M}$, the error probability of the first kind is  bounded from above as 
\begin{align}
P_{\mathrm{err}}^{(1)}(m) 
 &= \frac{\sum_{i=1}^N \wyxn(\D_m^c|\x_{m,i}) + \sum_{\x \in \B^{(m)}} \wyxn(\D_m^c|\x)}{N+|\B^{(m)}|}  \notag\\
&\le \frac{N}{N+|\B^{(m)}|}\left(\frac{1}{N} \sum_{i=1}^N \wyxn(\D_m^c|\x_{m,i})\right) + \frac{|\B^{(m)}|}{N+|\B^{(m)}|} \label{eq:err1} \\
&\le \varepsilon_n^{(1)} + \nu_n, \label{eq:err2}
\end{align}
where~\eqref{eq:err1} holds since $\wyxn(\D_m^c|\x) \le 1$, and~\eqref{eq:err2} is due to~\eqref{eq:err3} and the assumption that the original code satisfies $\frac{1}{N} \sum_{i=1}^N \wyxn(\D_m^c|\x_{m,i}) \le \varepsilon_n^{(1)}$. Similarly, for each message pair $(m,m') \in \widetilde{\M}^2$, the error probability of the second kind $P_{\mathrm{err}}^{(2)}(m,m')$ is bounded from above as 
\begin{align}
P_{\mathrm{err}}^{(2)}(m,m')
&=  \frac{\sum_{i=1}^N \wyxn(\D_m|\x_{m',i}) + \sum_{\x \in \B^{(m')}} \wyxn(\D_m^c|\x)}{N+|\B^{(m')}|} \notag  \\
&\le  \frac{N}{N+|\B^{(m')}|}\left(\frac{1}{N} \sum_{i=1}^N \wyxn(\D_m^c|\x_{m',i})\right) + \frac{|\B^{(m')}|}{N+|\B^{(m')}|} \notag\\
&\le \varepsilon_n^{(2)} + \nu_n. \notag
\end{align}
Since all the messages in $\widetilde{\M}$ are good messages, by Definition~\ref{def:good} we  have that for each message $m \in \widetilde{\M}$,
\begin{align}
P_{\mathrm{err}}^{(3)}(m) \le (\varepsilon_n^{(3)})^{1/2}.  \notag
\end{align}
Finally, note that when constructing $\widetilde{\C}$, we merely rearrange the sequences of $\C$ (rather than expurgate or add any sequences); thus, the output distribution induced by $\widetilde{\C}$ is exactly the same as that induced by $\C$, i.e., 
\begin{align}
\DD\left(\widehat{Q}^n_{\widetilde{\C}} \Vert Q_0^{\otimes n}\right) = \DD\left(\widehat{Q}^n_{\C} \Vert Q_0^{\otimes n}\right) \le \delta.\notag
\end{align}
Thus, the covertness constraint is satisfied. Finally, we note that 
\begin{align*}
	\liminf_{n \to \infty} \frac{\log\log|\widetilde{\M}|}{\sqrt{n}}= 	\liminf_{n \to \infty} \frac{\log\log|\M|}{\sqrt{n}} = (1-\eta)C_{\delta},
\end{align*}
and the proof is completed by taking $\eta\to 0^+$. 
\end{proof}

\subsection{Discussions of other identification schemes} \label{sec:other}
Various achievability schemes have been developed for the standard (non-covert) identification problem. 
In particular, we note that the approach described in~\cite{ahlswede1989identification-feedback,ahlswede2008general} draws a clear connection between the identification and transmission problems from the achievability's perspective, via the agreement of a shared key (see~\cite{sudan2019communication} for a survey). The key idea there is that (i) in the first stage, the sender and receivers use a capacity-achieving transmission code to communicate a message of size $\exp(n\mathsf{C}_W)$ as the shared key (over approximately $n$ channel uses), and (ii) in the second stage, they identify an ID message using another transmission code with a negligible blocklength (i.e., over $o(n)$ channel uses). It turns out that with a shared key of size $\exp(n\mathsf{C}_W)$ obtained in Stage 1 and another $o(n)$ channel uses, the size of the ID message that can be correctly identified is approximately exponential in the size of the shared key. From this approach, it is then clear that the identification capacity is at least as large as the channel capacity for the standard identification problem.

However, this approach is strictly sub-optimal in the presence of covertness constraints. This is because the sender and receivers are not assumed to have a shared key prior to communication, and in this scenario there does not always exist a covert-capacity-achieving transmission code. Specifically, when the channels $\wyx$ and $\wzx$ satisfy $\DD(P_1 \Vert P_0) \le \DD(Q_1 \Vert Q_0)$ (i.e., $\wzx$ is ``better'' than $\wyx$), a shared key is necessary for constructing a covert-capacity-achieving transmission code, thus the approach in~\cite{ahlswede1989identification-feedback,ahlswede2008general} is not applicable.  On the other hand, when the channels $\wyx$ and $\wzx$ satisfy $\DD(P_1 \Vert P_0) > \DD(Q_1 \Vert Q_0)$, it is then possible to adapt this approach to the covert identification problem, by using a covert-capacity-achieving transmission code in the first stage and another covert transmission code in the second stage. With some technical arguments in~\cite{ahlswede1989identification-feedback,ahlswede2008general}, it is not difficult to analyze the error probabilities for identification; however, proving covertness may become non-trivial since one needs to analyze the concatenation of codes defined for each of the two stages, and also take into account the dependence of the channel outputs corresponding to the two stages.

\section{Converse} \label{sec:converse}

In this section, we show that any sequence of identification codes with size $|\M|$ that simultaneously guarantees that $\DD(\widehat{Q}^n_{\C} \Vert Q_0^{\otimes n}) \le \delta$ and $P_{\mathrm{err}}^{(1)} = \lambda_n^{(1)}, P_{\mathrm{err}}^{(2)}= \lambda_n^{(2)}, P_{\mathrm{err}}^{(3)} = \lambda_n^{(3)}$ (where $\lim_{n \to \infty} \lambda_n^{(1)} = \lim_{n \to \infty} \lambda_n^{(2)} = \lim_{n \to \infty} \lambda_n^{(3)} = 0$) must satisfy 
\begin{align}
\limsup_{n \to \infty} \frac{\log\log |\M|}{\sqrt{n}} \le C_{\delta}.\notag
\end{align}

\begin{lemma} \label{lemma:constraint}
	Consider any identification code $\C$ with message set $\M$, codewords $\{U_m \}_{m \in \M}$, and demapping regions $\{\D_m \}_{m \in \M}$ such that $\DD(\widehat{Q}^n_{\C} \Vert Q_0^{\otimes n}) \le \delta$. Let $f_{\mathrm{H}}(m)\triangleq \frac{1}{n}\sum_{\x}\Uxm\mathrm{wt}_{\mathrm{H}}(\x)$ be the fractional Hamming weight for each message $m \in \M$. Then, there exists a constant $c_5 > 0$ such that the average fractional Hamming weight of $\C$ satisfies
	\begin{align}
	\frac{1}{|\M|} \sum_{m \in \M} f_{\mathrm{H}}(m) \le  \sqrt{\frac{2\delta}{\chi_2(Q_1 \Vert Q_0)}}\left(\frac{1}{\sqrt{n}} + \frac{c_5}{n} \right). \label{eq:ave}
	\end{align}
\end{lemma}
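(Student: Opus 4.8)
The plan is to convert the covertness constraint $\DD(\widehat{Q}^n_{\C}\Vert Q_0^{\otimes n})\le\delta$ into a single-letter constraint on the coordinatewise ``activation'' probabilities of the averaged input distribution, and then optimize. Let $\bar{P}_{\X}(\x)\triangleq\frac{1}{|\M|}\sum_{m\in\M}\Uxm$ be the input distribution averaged over the (uniform) message and the stochastic encoder, so that $\widehat{Q}^n_{\C}=\bar{P}_{\X}\,\wzxn$. Writing $\rho_i\triangleq\Pr_{\bar{P}_{\X}}[X_i=1]$ for $i\in[1:n]$, expanding $\mathrm{wt}_{\mathrm{H}}(\x)=\sum_{i=1}^n\mathbbm{1}\{x_i=1\}$ gives $\frac{1}{|\M|}\sum_{m\in\M}f_{\mathrm{H}}(m)=\frac{1}{n}\sum_{i=1}^n\rho_i$, so it suffices to upper bound $\frac{1}{n}\sum_{i=1}^n\rho_i$. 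Moreover, the $i$-th marginal of $\widehat{Q}^n_{\C}$ is exactly the binary mixture $(1-\rho_i)Q_0+\rho_i Q_1$.

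I would then lower bound the divergence by its marginal part. Using the chain-rule identity $\DD(P\Vert Q_0^{\otimes n})=\DD\big(P\,\big\Vert\,\prod_{i=1}^n P^{(i)}\big)+\sum_{i=1}^n\DD(P^{(i)}\Vert Q_0)$, valid for any $P$ on $\mathcal{Z}^n$ with marginals $\{P^{(i)}\}$, and discarding the nonnegative total-correlation term, we get
$$\delta\ \ge\ \DD(\widehat{Q}^n_{\C}\Vert Q_0^{\otimes n})\ \ge\ \sum_{i=1}^n\DD\big((1-\rho_i)Q_0+\rho_i Q_1\,\big\Vert\,Q_0\big)\ =\ \sum_{i=1}^n h(\rho_i),$$
where $h(\rho)\triangleq\DD((1-\rho)Q_0+\rho Q_1\Vert Q_0)$ for $\rho\in[0,1]$.

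The next step is to record the analytic properties of $h$. One checks $h(0)=0$; $h$ is increasing and strictly convex, since $h'(0)=0$ (the first-order term $\sum_z(Q_1(z)-Q_0(z))\log 1$ vanishes) while $h''(\rho)=\sum_{z}\frac{(Q_1(z)-Q_0(z))^2}{(1-\rho)Q_0(z)+\rho Q_1(z)}>0$; and, because $Q_1\ll Q_0$ and $\mathcal{Z}$ is finite (so the denominators above are bounded below by $\tfrac12\mu_0$ on $[0,\tfrac12]$), a Taylor expansion around $0$ with $h''(0)=\chi_2(Q_1\Vert Q_0)$ gives $\big|h(\rho)-\tfrac12\chi_2(Q_1\Vert Q_0)\rho^2\big|\le c\rho^3$ for $\rho\le\tfrac12$ and a channel-dependent constant $c$. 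Convexity and monotonicity make $h^{-1}\colon[0,\DD(Q_1\Vert Q_0)]\to[0,1]$ concave and increasing, so for $n$ large enough (so that $\delta/n\le\DD(Q_1\Vert Q_0)$) Jensen's inequality yields
$$\frac{1}{n}\sum_{i=1}^n\rho_i\ =\ \frac{1}{n}\sum_{i=1}^n h^{-1}\big(h(\rho_i)\big)\ \le\ h^{-1}\!\Big(\frac{1}{n}\sum_{i=1}^n h(\rho_i)\Big)\ \le\ h^{-1}(\delta/n).$$
Finally, inverting the quadratic estimate $h(\rho)=\tfrac12\chi_2(Q_1\Vert Q_0)\rho^2+O(\rho^3)$ shows $h^{-1}(\epsilon)\le\sqrt{2\epsilon/\chi_2(Q_1\Vert Q_0)}+c'\epsilon$ for small $\epsilon$ and a channel-dependent $c'$; substituting $\epsilon=\delta/n$ gives $\frac{1}{|\M|}\sum_{m\in\M}f_{\mathrm{H}}(m)=\frac1n\sum_i\rho_i\le\sqrt{2\delta/\chi_2(Q_1\Vert Q_0)}\,\big(n^{-1/2}+c_5 n^{-1}\big)$ with $c_5\triangleq c'\sqrt{\delta\,\chi_2(Q_1\Vert Q_0)/2}$, which is~\eqref{eq:ave} (with the small-$n$ cases absorbed into $c_5$, as the right-hand side is then vacuous).

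The main obstacle is the final conversion: extracting the \emph{exact} leading constant $\sqrt{2\delta/\chi_2(Q_1\Vert Q_0)}$ from $h^{-1}(\delta/n)$ while keeping the correction at order $1/n$ requires carefully bounding the cubic remainder of $\rho\mapsto(1+\rho f(z))\log(1+\rho f(z))$, with $f(z)=Q_1(z)/Q_0(z)-1$ (bounded since $Q_1\ll Q_0$ and $\mathcal{Z}$ is finite), and it also requires the Jensen step to be applied over the \emph{full} range of the $\rho_i$ rather than only the small ones---this is precisely where global convexity of $h$ (equivalently, concavity of $h^{-1}$) is essential, since a priori the covertness constraint only forces $h(\rho_i)\le\delta$ and not $\rho_i\to0$ for each individual $i$, so the naive route via $\sum_i\rho_i\le\sqrt{n}\,(\sum_i\rho_i^2)^{1/2}$ alone cannot control the cubic slack.
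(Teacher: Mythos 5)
Your proof is correct and follows essentially the same route as the paper: both arguments single-letterize the covertness constraint to show that the average activation probability $\psi=\frac{1}{n}\sum_i\rho_i$ satisfies $h(\psi)\le\delta/n$ (the paper does this by bounding $\DD(\widehat{Q}^n_{\C}\Vert Q_0^{\otimes n})\ge n\DD(\bar{Q}_{\C}\Vert Q_0)$ with $\bar{Q}_{\C}=\psi Q_1+(1-\psi)Q_0$, which is exactly your chain-rule-plus-Jensen step in different packaging), and both then invert the quadratic expansion $\DD(\bar{Q}_{\C}\Vert Q_0)\ge\frac{\psi^2}{2}\chi_2(Q_1\Vert Q_0)-\mathcal{O}(\psi^3)$ to extract the stated bound.
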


\begin{proof}[Proof of Lemma~\ref{lemma:constraint}]
	We denote the $i$-th marginal distribution of each codeword $U_m$ as $(\Um)_i$ for $i \in [1:n]$, and the $i$-marginal distribution of $\widehat{Q}^n_{\C}$ as $(\widehat{Q}^n_{\C})_i$, which takes the form
	\begin{align}
	(\widehat{Q}^n_{\C})_i(z) = \frac{1}{|\M|}\sum_{m \in \M} \sum_{x \in \mathcal{X}} (\Um)_i(x) \wzx(z|x), \ \forall z \in \mathcal{Z}. \notag
	\end{align}
	Let $\bar{Q}_{\C}(z) \triangleq \frac{1}{n}\sum_{i=1}^n (\widehat{Q}^n_{\C})_i(z)$. By taking the covertness constraint into account and following the analysis in~\cite[Eqn.~(13)]{wang2016fundamental}, we have
	\begin{align}
	\delta \ge \DD\left(\widehat{Q}^n_{\C} \Vert Q_0^{\otimes n}\right) \ge n \DD\left(\bar{Q}_{\C} \Vert Q_0\right), \label{eq:fy1}
	\end{align}
	and thus $\lim_{n \to \infty} \DD\left(\bar{Q}_{\C} \Vert Q_0\right) = 0$. By applying Pinsker's inequality $\V\left(\bar{Q}_{\C}, Q_0\right) \le \sqrt{\DD\left(\bar{Q}_{\C} \Vert Q_0\right)/2}$, we also have $$\lim_{n \to \infty}\V\left(\bar{Q}_{\C}, Q_0\right) = 0.$$
	Let $\psi =\psi_n\triangleq \frac{1}{n}\sum_{i=1}^n \frac{1}{|\M|}\sum_{m \in \M}  (\Um)_i(1)$ be the fraction of $1$'s in the codebook, and one can express $\bar{Q}_{\C}(z)$ as
	\begin{align*}
	\bar{Q}_{\C}(z) &= \frac{1}{n}\sum_{i=1}^n \frac{1}{|\M|}\sum_{m \in \M} \sum_{x \in \mathcal{X}} (\Um)_i(x) \wzx(z|x) \\
	&= \left( \frac{1}{n}\sum_{i=1}^n \frac{1}{|\M|}\sum_{m \in \M}  (\Um)_i(1)\right) Q_1(z) + \left( \frac{1}{n}\sum_{i=1}^n \frac{1}{|\M|}\sum_{m \in \M} (\Um)_i(0)\right) Q_0(z) \\
	&= \psi Q_1(z) + (1-\psi) Q_0(z).
	\end{align*}
	Note that the requirement on variational distance $\lim_{n \to \infty}\V\left(\bar{Q}_{\C}, Q_0\right) = 0$ implies that $\lim_{n \to \infty} \psi = 0$. Furthermore, we know from~\cite[Eqn. (11)]{bloch2016covert} that 
	\begin{align}
	\DD\left(\bar{Q}_{\C} \Vert Q_0\right) \ge \frac{\psi^2}{2}\chi_2(Q_1 \Vert Q_0) - \mathcal{O}(\psi^3). \label{eq:fy2}
	\end{align} 
	Combining~\eqref{eq:fy1} and~\eqref{eq:fy2}, one can bound $\psi$ from above as
	\begin{align}
	\psi \le \sqrt{\frac{2\delta}{\chi_2(Q_1 \Vert Q_0)}}\left(\frac{1}{\sqrt{n}} + \frac{c_5}{n} \right), \label{eq:tian1}
	\end{align} 
	for some constant $c_5 > 0$. At the same time, one also can interpret $\psi$ as the average fractional Hamming weight of the code, since 
	\begin{align}
	\psi &= \frac{1}{n}\sum_{i=1}^n \frac{1}{|\M|}\sum_{m \in \M}  (\Um)_i(1) \notag\\
	&= \frac{1}{|\M|}\sum_{m \in \M} \frac{1}{n}\sum_{i=1}^n \sum_{x_i}  (\Um)_i(x_i) \mathbbm{1}\left\{x_i = 1 \right\} \notag\\
	&= \frac{1}{|\M|}\sum_{m \in \M} \frac{1}{n}\sum_{i=1}^n \sum_{x_i}\left( \sum_{x^{(-i)}} U_m(x_i, x^{(-i)})\right) \mathbbm{1}\left\{x_i = 1 \right\}\notag \\
	&= \frac{1}{|\M|}\sum_{m \in \M} \frac{1}{n}\sum_{i=1}^n \sum_{\x} \Uxm \mathbbm{1}\left\{x_i = 1 \right\} \notag\\
	&=\frac{1}{|\M|}\sum_{m \in \M} \frac{1}{n} \sum_{\x} U_m(\x) \mathrm{wt}_{\mathrm{H}}(\x) \notag\\
	&= \frac{1}{|\M|} \sum_{m \in \M} f_{\mathrm{H}}(m), \label{eq:tian2}
	\end{align}
	where $x^{(-i)} = (x_1,\ldots, x_{i-1}, x_{i+1},\ldots, x_n)\in\mathcal{X}^{n-1}$. This completes the proof of Lemma~\ref{lemma:constraint}.  
\end{proof}

For notational convenience, let $$k \triangleq \sqrt{\frac{2\delta}{\chi_2(Q_1 \Vert Q_0)}}\left(\frac{1}{\sqrt{n}} + \frac{c_5}{n} \right).$$

\begin{lemma}[Expurgation Lemma] \label{lemma:expurgation}
	Suppose there exists a sequence of identification codes $\C$ with message set $\M$, codewords $\{U_m \}_{m \in \M}$, and demapping regions $\{\D_m \}_{m \in \M}$ such that $\DD(\widehat{Q}^n_{\C} \Vert Q_0^{\otimes n}) \le \delta$, $P_{\mathrm{err}}^{(1)} = \lambda_n^{(1)}$, $P_{\mathrm{err}}^{(2)} = \lambda_n^{(2)}$, and $P_{\mathrm{err}}^{(3)} = \lambda_n^{(3)}$, where $\lim_{n \to \infty} \lambda_n^{(1)} = \lim_{n \to \infty} \lambda_n^{(2)} = \lim_{n \to \infty} \lambda_n^{(3)} = 0$. 
	
	Then, there exist a sequence $\kappa_n > 0$ (which  depends on $\lambda_n^{(1)},\lambda_n^{(2)}$) which satisfies $\lim_{n \to \infty} \kappa_n = 0$ and a sequence of identification codes $\C'$ with message set $\M'$, codewords $\{U'_m \}_{m \in \M'}$, and demapping regions $\{\D'_m \}_{m \in \M'}$ such that 
	\begin{enumerate}
		\item $|\M'| \ge |\M|/(n+1)$;
		\item For every $m \in \M'$, $U'_{m}(\x) = 0$ for all $\x$ such that $\mathrm{wt}_{\mathrm{H}}(\x) >(1+\kappa_n)kn$;
		\item $P_{\mathrm{err}}^{(1)} \le (\lambda_n^{(1)})^{1/2}$, $P_{\mathrm{err}}^{(2)} \le (\lambda_n^{(2)})^{1/2}$, and $P_{\mathrm{err}}^{(3)} \le\lambda_n^{(3)}$.
	\end{enumerate}
\end{lemma}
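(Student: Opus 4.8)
\textbf{Proof proposal for Lemma~\ref{lemma:expurgation}.}

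The plan is to perform the expurgation in two phases. First I would handle the Hamming-weight constraint by a direct counting argument over composition classes. By Lemma~\ref{lemma:constraint}, the average fractional Hamming weight $\frac{1}{|\M|}\sum_{m\in\M}f_{\mathrm{H}}(m)\le k$. Call a message $m$ \emph{light} if $f_{\mathrm{H}}(m)\le (1+\kappa_n')k$ for a suitable slowly-vanishing sequence $\kappa_n'$, and \emph{heavy} otherwise; by Markov's inequality the fraction of heavy messages is at most $1/(1+\kappa_n')$, so the light messages number at least $\frac{\kappa_n'}{1+\kappa_n'}|\M|$. However, $f_{\mathrm{H}}(m)$ being small on average over the constituent sequences of $U_m$ does not by itself force $U_m$ to be supported on low-weight sequences, so within each light $U_m$ I would further \emph{truncate}: let $U'_m$ be $U_m$ conditioned on the event $\{\mathrm{wt}_{\mathrm{H}}(\x)\le(1+\kappa_n)kn\}$ with $\kappa_n\gg\kappa_n'$. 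Since $U_m$ puts at most a vanishing mass (again by Markov, using $f_{\mathrm{H}}(m)\le(1+\kappa_n')k$) on sequences of weight exceeding $(1+\kappa_n)kn$, the truncated codeword $U'_m$ is close to $U_m$ in total variation, which is exactly the tool I need to control the degradation of the error probabilities: each of $P_{\mathrm{err}}^{(1)}(m)$ and $P_{\mathrm{err}}^{(2)}(m,m')$ is a linear functional of the relevant codewords, so replacing $U_m$ by $U'_m$ perturbs them by at most the total-variation distance, i.e. by $o(1)$.

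Second, the factor $1/(n+1)$ in claim~(1) and the square-root blowups $(\lambda_n^{(1)})^{1/2}$, $(\lambda_n^{(2)})^{1/2}$ in claim~(3) strongly suggest the standard identification-code trick of restricting to a single composition class. After the truncation step the supports of all $U'_m$ lie within weight $(1+\kappa_n)kn$, but I want the surviving codewords to additionally have error probabilities individually (not just on average) bounded by the square roots. So I would: (i) for the first-kind error, note $P_{\mathrm{err}}^{(1)}=\max_m P_{\mathrm{err}}^{(1)}(m)=\lambda_n^{(1)}$ already bounds every message, and similarly $P_{\mathrm{err}}^{(2)}=\lambda_n^{(2)}$, $P_{\mathrm{err}}^{(3)}=\lambda_n^{(3)}$ are \emph{maximum} quantities in the hypothesis, so after the truncation perturbation they are at most $\lambda_n^{(i)}+o(1)\le(\lambda_n^{(i)})^{1/2}$ for large $n$ — here the square root is just slack to absorb the $o(1)$ truncation error, not a genuine expurgation. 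The genuine expurgation is the pigeonhole over the at most $n+1$ composition classes $\{\mathrm{wt}_{\mathrm{H}}=j\}_{j=0}^{n}$ intersected with $j\le(1+\kappa_n)kn$: since every $U'_m$ is now supported on the low-weight region, I do \emph{not} need to force all constituent sequences into one class, so actually the cleanest route is to skip the composition-class pigeonhole entirely and get $|\M'|\ge\frac{\kappa_n'}{1+\kappa_n'}|\M|$ directly; the paper's stated bound $|\M|/(n+1)$ is then a weaker but sufficient conclusion. I would phrase the final $\M'$ as the set of light messages and set $\kappa_n=\sqrt{\kappa_n'}$ or similar so that both the count and the weight bound come out with the claimed shapes.

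\textbf{Main obstacle.} The delicate point is reconciling the two different roles played by ``small Hamming weight'': Lemma~\ref{lemma:constraint} only controls the \emph{expected} weight per message and over all messages, whereas conclusion~(2) of the lemma demands a \emph{hard pointwise} cap on the support of $U'_m$. Bridging this requires the truncation step, and the subtle part is verifying that truncating $U_m$ does not blow up the error probabilities — in particular the second-kind error $P_{\mathrm{err}}^{(2)}(m,m')$ depends on $U'_{m'}$ through $\Uxhm$ but the demapping region $\D'_m$ is kept equal to $\D_m$ (or a subset), so one must check that shrinking both the encoder support and, if desired, the demapping regions, only helps the first- and third-kind errors while costing at most the total-variation gap on the second-kind error. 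I expect writing out this perturbation bookkeeping carefully — tracking which of the three error types is monotone under which modification — to be the bulk of the proof, while the counting itself (Markov plus, if one insists on the $n+1$ form, pigeonhole) is routine.
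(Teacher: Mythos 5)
Your overall architecture---Markov selection of the messages with small $f_{\mathrm{H}}(m)$, followed by truncating each surviving codeword to a hard weight cap while keeping the demapping regions---matches the paper's (and the $1/(n+1)$ is indeed just Markov with threshold $(1+\tfrac1n)k$, not a composition-class pigeonhole). However, there is a genuine gap in the step where you control the error degradation. You assert that a light message $m$ with $f_{\mathrm{H}}(m)\le(1+\kappa_n')k$ puts ``at most a vanishing mass'' on sequences of weight exceeding $(1+\kappa_n)kn$, so that the conditioned codeword $U'_m$ is close to $U_m$ in total variation and the error probabilities move by $o(1)$. Markov's inequality only yields $U_m\bigl(\{\x:\mathrm{wt}_{\mathrm{H}}(\x)>(1+\kappa_n)kn\}\bigr)\le(1+\kappa_n')/(1+\kappa_n)$, which tends to $1$, not $0$, whenever $\kappa_n\to 0$ as conclusion (2) requires. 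In the worst case $U_m$ concentrates nearly all its mass just above the cap, the retained mass $g_m=\sum_{\x:\mathrm{wt}_{\mathrm{H}}(\x)\le(1+\kappa_n)kn}U_m(\x)$ is itself $o(1)$, and $\V(U_m,U'_m)=1-g_m$ is close to $1$, so the perturbation argument collapses. Letting $\kappa_n$ grow to force the truncated mass to vanish is not an option, since the hard cap $(1+o(1))kn$ is exactly what the rest of the converse needs.

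The repair is the multiplicative, not additive, bookkeeping used in the paper---and it is also where the square roots genuinely come from (they are not ``slack to absorb $o(1)$''). Since $U'_m=U_m/g_m$ on the low-weight set and the relevant functionals are nonnegative, one gets directly
\begin{align}
P_{\mathrm{err}}^{(1)}(m)=\frac{1}{g_m}\sum_{\x\ \mathrm{low\ weight}}U_m(\x)\wyxn(\D_m^c|\x)\le\frac{\lambda_n^{(1)}}{g_m},\notag
\end{align}
and similarly for the second kind, with no additive TV term. Choosing the cap as $(1+\epsilon_n)(1+\tfrac1n)kn$ with $\epsilon_n=\sqrt{\lambda_n}/(1-\sqrt{\lambda_n})$ and $\lambda_n=\max\{\lambda_n^{(1)},\lambda_n^{(2)}\}$, Markov gives $g_m\ge\epsilon_n/(1+\epsilon_n)=\sqrt{\lambda_n}$, whence $\lambda_n^{(i)}/g_m\le\lambda_n^{(i)}/\sqrt{\lambda_n^{(i)}}=(\lambda_n^{(i)})^{1/2}$ for $i=1,2$. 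The third-kind error is untouched because $\D'_m=\D_m$. Your ``main obstacle'' paragraph correctly identifies the dangerous step, but the tool you reach for there (total variation) is the one that fails.
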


\begin{proof}[Proof of Lemma~\ref{lemma:expurgation}]
	Since the identification code $\C$ satisfies $\DD(\widehat{Q}^n_{\C} \Vert Q_0^{\otimes n}) \le \delta$, Lemma~\ref{lemma:constraint} above ensures that its average fractional Hamming weight $\frac{1}{|\M|} \sum_{m \in \M} f_{\mathrm{H}}(m) \le k$. We define $\mathcal{G}$ as the subset of messages with small fractional Hamming weight, i.e., 
	\begin{align}
	\mathcal{G} \triangleq \left\{m \in \M :f_{\mathrm{H}}(m) \le  \left(1+\frac{1}{n}\right)k \right\}. \label{eq:g}
	\end{align} 
	From~\eqref{eq:tian1} and~\eqref{eq:tian2}, we have
	\begin{align}
	k \ge \psi = \frac{1}{|\M|} \sum_{m \in \mathcal{G}} f_{\mathrm{H}}(m) + \frac{1}{|\M|} \sum_{m \in \M \setminus \mathcal{G}} f_{\mathrm{H}}(m) \ge \frac{|\M \setminus\mathcal{G}|}{|\M|} \left(1+\frac{1}{n}\right)k, \notag
	\end{align}
	which further implies that $|\mathcal{G}| \ge |\M|/(n+1)$, i.e., the number of messages with small fractional Hamming weight is not small. Let $\lambda_n \triangleq \max\{\lambda^{(1)}_n, \lambda^{(2)}_n \}$ and  $\epsilon_n \triangleq \frac{\sqrt{\lambda_n}}{1-\sqrt{\lambda_n}}$. We partition $\mathcal{X}^n$ into two disjoint sets---the \emph{low-weight set} $\mathcal{X}^n_{\mathrm{l}} \triangleq \{\x\in \mathcal{X}^n: \mathrm{wt}_{\mathrm{H}}(\x) \le (1+\epsilon_n)\left(1+\frac{1}{n}\right) kn  \}$ and the \emph{high-weight set} $\mathcal{X}^n_{\mathrm{h}} \triangleq \mathcal{X}^n \setminus \mathcal{X}^n_{\mathrm{l}}$.  In the following, we describe the procedure of constructing the new code $\C'$.
	\begin{enumerate}
		\item First, the message set of the new code is $\M' = \mathcal{G}$. Thus, $f_{\mathrm{H}}(m) \le  \left(1+\frac{1}{n}\right)k$ for all $m \in \M'$.
		\item For each $m \in \M'$, we define $g_m \triangleq \sum_{\x \in \mathcal{X}^n_{\mathrm{l}}}U_m(\x)$, and we set the codeword $U'_m$ of the new code $\C'$ to be 
		\begin{align}
		U'_m(\x) = \begin{cases}
		U_m(\x)/g_m, &\text{if }  \x \in \mathcal{X}^n_{\mathrm{l}}, \\
		0, &\text{otherwise}.  \notag
		\end{cases}
		\end{align}
		One can check that $\sum_{\x} U'_m(\x) = 1$.
		\item The demapping regions of the new code $\C'$ are the same as those of $\C$, i.e., $\D'_m = \D_m$ for all $m \in \M'$.
	\end{enumerate}
	
	From~\eqref{eq:g} we have that for each $m \in \M'$,
	\begin{align}
	\left(1+\frac{1}{n}\right)k \ge f_{\mathrm{H}}(m) 
	&= \frac{1}{n}\sum_{\x} U_m(\x)\text{wt}_{\mathrm{H}}(\x) \notag\\
	&\ge \frac{1}{n}\sum_{\x \in \mathcal{X}^n_{\mathrm{h}} } U_m(\x)\text{wt}_{\mathrm{H}}(\x) \notag\\
	&\ge \frac{1}{n}\sum_{\x \in \mathcal{X}^n_{\mathrm{h}} } U_m(\x)\cdot (1+\epsilon_n)\left(1+\frac{1}{n}\right)kn, \notag
	\end{align}
	which yields a lower bound on $g_m$, i.e., 
	\begin{align}
	g_m = \sum_{\x \in \mathcal{X}^n_{\mathrm{l}} } U_m(\x) =  1 -   \sum_{\x\in \mathcal{X}^n_{\mathrm{h}} } U_m(\x) \ge \frac{\epsilon_n}{1+\epsilon_n}. \label{eq:en3} 
	\end{align}
	
	We now analyze the error probabilities of the new code $\C'$ which only consists of low-weight sequences. For each $m \in \M'$, the error probability of the first kind $P_{\mathrm{err}}^{(1)}(m)$ can be bounded from above as 
	\begin{align}
	P_{\mathrm{err}}^{(1)}(m) \notag
	&= \sum_{\x\in \mathcal{X}^n_{\mathrm{l}}} U'_m(\x) \wyxn(\D_m|\x) \\
	&= \sum_{\x\in \mathcal{X}^n_{\mathrm{l}}} \frac{U_m(\x)}{g_m} \wyxn(\D_m|\x)\notag\\
	 &\le \frac{1}{g_m} \sum_{\x} U_m(\x)\wyxn(\D_m|\x) \notag\\
	&\le \left(\frac{1+\epsilon_n}{\epsilon_n}\right) \lambda_n^{(1)} \label{eq:en1} \\
	&\le \left(\lambda_n^{(1)}\right)^{1/2}, \label{eq:en2}
	\end{align}
	where~\eqref{eq:en1} follows from~\eqref{eq:en3} the fact that the original code $\C$ satisfies $\sum_{\x} U_m(\x)\wyxn(\D_m|\x) \le \lambda_n^{(1)}$, and~\eqref{eq:en2} is due to the choice of $\epsilon_n$. Furthermore, for each message pair $(m,m') \in \M' \times \M'$ such that $m \ne m'$, the error probability of the second kind $P_{\mathrm{err}}^{(2)}(m,m')$ can be similarly bounded from above as 
	\begin{align*}
	P_{\mathrm{err}}^{(2)}(m,m') &= \sum_{\x \in \mathcal{X}^n_{\mathrm{l}}} U'_{m'}(\x) \wyxn(\D_m^c|\x) \\
	&= \sum_{\x\in \mathcal{X}^n_{\mathrm{l}}} \frac{U_{m'}(\x)}{g_{m'}} \wyxn(\D_m^c|\x) \\
	&= \frac{1}{g_{m'}} \sum_{\x}U_{m'}(\x) \wyxn(\D_m^c|\x) \\
	&\le  \left(\frac{1+\epsilon_n}{\epsilon_n}\right) \lambda_n^{(2)} \\
	&\le \left(\lambda_n^{(2)}\right)^{1/2}.
	\end{align*}
	Finally, we note that the error probability of the third kind $P_{\mathrm{err}}^{(3)}(m) = P_0^{\otimes n}(\D'_m)$ is still bounded from above by $\lambda_n^{(3)}$, since the demapping regions are unchanged, i.e.,  $\D'_m = \D_m$ for $m \in \M'$. We complete the proof of  Lemma~\ref{lemma:expurgation} by setting $\kappa_n = (1+\epsilon_n)\left(1+\frac{1}{n}\right) - 1$, which vanishes as $n$ tends to infinity.
\end{proof}

Proving the converse of identification problems usually relies on the achievability results for the channel resolvability problem. In the following, we first introduce the definition of the \emph{$K$-type distributions}, and then state a modified version of the channel resolvability result in Lemma~\ref{lemma:CR}. Lemma~\ref{lemma:CR} is modified from the so-called \emph{soft-covering lemma} presented by Cuff~\cite[Corollary VII.2]{cuff2013distributed}.

\begin{definition} \label{def:type}
	For any positive integer $K$, a probability
	distribution $P \in \mathcal{P}(\mathcal{X})$ is said to be a {\em $K$-type distribution} if 
	\begin{align}
	P(x) \in \left\{0, \frac{1}{K},\frac{2}{K}, \ldots, 1 \right\}, \quad \forall x \in \mathcal{X}.\notag
	\end{align}
\end{definition}

\begin{lemma} \label{lemma:CR}
	Let $P_{\X} \in \mathcal{P}(\mathcal{X}^n)$ and $P_{\Y}(\y) = \sum_{\x}P_{\X}(\x)\wyxn(\y|\x)$. We randomly sample $K$ i.i.d.\ sequences $\x_1, \ldots, \x_K$ according to $P_{\X}$. Let $$\widetilde{P}_{\X}(\x) = \frac{1}{K}\sum_{i=1}^K \mathbbm{1}\{\x = \x_i \}, \quad \forall \x \in \mathcal{X}^n$$ be a $K$-type distribution and $\widetilde{P}_{\Y}(\y) = \sum_{\x}\widetilde{P}_{\X}(\x)\wyxn(\y|\x)$ be the corresponding output distribution. Then, for any $\zeta > 0$ and any $P'_{\Y} \in \mathcal{P}(\mathcal{Y}^n)$, 
	\begin{align}
	\E\left(\V\left(P_{\Y}, \widetilde{P}_{\Y}  \right) \right) 
	&\le \PP_{P_{\X}\wyxn}\left(\log\frac{\wyxn(\Y|\X)}{P'_{\Y}(\Y)} > \zeta \right) + \frac{1}{2} \sqrt{\frac{e^{\zeta}}{K}},\notag
	\end{align} 
	where the expectation on the left-hand-side of the above inequality  is over the random generation  of $\x_1, \ldots, \x_K$.
\end{lemma}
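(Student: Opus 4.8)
This is a standard soft-covering / channel-resolvability bound, so the plan is to bound the expected variational distance directly via a two-step argument: first split the target output distribution $P_{\Y}$ into a "good" part supported on outputs with bounded information density and a "bad" remainder, then control the good part by a second-moment (variance) computation over the $K$ i.i.d.\ samples. First I would write $\V(P_{\Y},\widetilde P_{\Y}) = \sum_{\y}(P_{\Y}(\y)-\widetilde P_{\Y}(\y))^{+}$ and, using the change of measure $\PP_{P_{\X}\wyxn}$, isolate the set $\mathcal{B} \triangleq \{(\x,\y): \log \tfrac{\wyxn(\y|\x)}{P'_{\Y}(\y)} > \zeta\}$ on which the information density (relative to the auxiliary distribution $P'_{\Y}$) is too large. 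The contribution of $\mathcal{B}$ to the variational distance is at most $\PP_{P_{\X}\wyxn}(\mathcal{B})$, which is exactly the first term on the right-hand side; this handles the "atypical" outputs, and crucially the freedom to pick \emph{any} $P'_{\Y}$ (not necessarily $P_{\Y}$ itself) is what makes the lemma flexible enough to be applied later under the covertness-induced input constraints.

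**Core estimate.** For the typical part, I would use Jensen's inequality to pass $\E$ inside the square root appearing in the standard bound $\V \le \tfrac12\sqrt{\,\E_{P_\Y}[(\widetilde P_{\Y}/P_{\Y}-1)^2]\,}$ restricted to typical $\y$, and then compute the variance of $\widetilde P_{\Y}(\y) = \tfrac1K\sum_{i=1}^{K}\wyxn(\y|\x_i)$: since the $\x_i$ are i.i.d., $\mathrm{Var}(\widetilde P_{\Y}(\y)) = \tfrac1K \mathrm{Var}_{P_{\X}}(\wyxn(\y|\X)) \le \tfrac1K \E_{P_{\X}}[\wyxn(\y|\X)^2]$. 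Summing $\E_{P_{\X}}[\wyxn(\y|\X)^2]/P'_{\Y}(\y)$ over the typical set and bounding $\wyxn(\y|\x) \le e^{\zeta} P'_{\Y}(\y)$ on that set collapses the sum to at most $e^{\zeta}$, giving $\sum_{\y\ \mathrm{typ}} \mathrm{Var}(\widetilde P_{\Y}(\y))/P'_\Y(\y) \le e^{\zeta}/K$. Taking the square root and the $\tfrac12$ factor yields the second term $\tfrac12\sqrt{e^{\zeta}/K}$. Assembling the two pieces gives the claimed inequality.

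**Main obstacle.** The one genuinely delicate point is getting the bookkeeping right when the "positive part" $(P_{\Y}(\y)-\widetilde P_{\Y}(\y))^{+}$ is split according to whether the \emph{pair} $(\x,\y)$ lies in $\mathcal{B}$, because $\widetilde P_{\Y}(\y)$ mixes over several sampled $\x_i$ and the event $\mathcal{B}$ is defined per-pair, not per-output. The clean way around this — which is exactly Cuff's trick in \cite[Corollary VII.2]{cuff2013distributed} — is to define a truncated kernel $\widetilde W(\y|\x) \triangleq \wyxn(\y|\x)\mathbbm{1}\{\log\tfrac{\wyxn(\y|\x)}{P'_{\Y}(\y)} \le \zeta\}$, run the variance computation with $\widetilde W$ in place of $\wyxn$, and absorb the difference $\|P_{\Y} - \widetilde P_{\Y}^{\,\mathrm{trunc}}\|$ into the first term via $\sum_{\x}P_{\X}(\x)\sum_{\y}(\wyxn(\y|\x)-\widetilde W(\y|\x)) = \PP_{P_{\X}\wyxn}(\mathcal{B})$. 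With the truncation in hand every remaining step is a routine second-moment calculation, so I expect the proof to be short once that device is in place; everything else (Jensen, the $e^{\zeta}$ pointwise bound on the truncated kernel, summing out $\y$) is mechanical.
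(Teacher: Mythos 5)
Your proposal is essentially the paper's proof: you identify Cuff's truncation device (the paper's split of $\widetilde{P}_{\Y}$ into $\widetilde{P}^{(1)}_{\Y}$ and $\widetilde{P}^{(2)}_{\Y}$ according to whether $\log(\wyxn(\y|\x_i)/P'_{\Y}(\y))$ exceeds $\zeta$), bound the atypical mass by $\PP_{P_{\X}\wyxn}(\log \wyxn(\Y|\X)/P'_{\Y}(\Y) > \zeta)$, and control the typical part by a second-moment estimate using the pointwise bound $\wyxn(\y|\x) \le e^{\zeta} P'_{\Y}(\y)$ on the truncated support, so the key ideas all line up. One bookkeeping slip worth fixing: the $\chi_2$-style bound you quote for the typical part is weighted by $P_{\Y}$, yet your subsequent sum normalizes the variance by $P'_{\Y}$ instead; these must match, so either run the Cauchy--Schwarz step with weight $P'_{\Y}$ (which sums to $1$), or do as the paper does and apply Jensen per-$\y$ to get $\sqrt{(e^{\zeta}/K)\,P'_{\Y}(\y)P_{\Y}(\y)}$ followed by the AM--GM inequality $\sqrt{ab}\le (a+b)/2$ to collapse the sum to $1$.
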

\begin{proof}
	The proof is presented in Appendix~\ref{appendix:resolvability}, and is adapted from~\cite[Section VII-C]{cuff2013distributed} with appropriate modifications.
\end{proof}
Note that Lemma~\ref{lemma:CR} above holds for {\em any} $P'_{\Y} \in \mathcal{P}(\mathcal{Y}^n)$, which differs from an analogous (but more restrictive) result in~\cite[Corollary~VII.2]{cuff2013distributed} wherein $P'_{\Y}$ is set to be $P_{\Y}$. This flexibility of choosing $P'_{\Y}$ arbitrarily is important for proving the converse because we need to set it to $P_0^{\otimes n}$ later for the analysis of the covert identification problem. We now consider the identification code $\C'$ constructed in Lemma~\ref{lemma:expurgation}.

\begin{lemma} \label{lemma:approximation}
	Let $\log K \triangleq \lceil (1+n^{-1/6})^2(1+\kappa_n)kn\DD(P_1 \Vert P_0) \rceil$. For every message $m \in \M'$ with codeword $U'_m$, there exists a $K$-type distribution $\tUm$ such that 
	\begin{align}
	\V\left(U'_m\wyxn, \tUm\wyxn \right) \le \exp \big(-c_6 n^{1/6} \big) \notag
	\end{align}
	for some constant $c_6 > 0$, where $U'_m\wyxn$ and $\tUm\wyxn$ respectively denote the distributions on $\mathcal{Y}^n$ induced by $U'_m$ and $\tUm$ through the channel $\wyxn$.
\end{lemma}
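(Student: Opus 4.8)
The plan is to invoke the channel resolvability result in Lemma~\ref{lemma:CR} with the input distribution $P_{\X}$ taken to be the codeword $U'_m$ (which, by construction in Lemma~\ref{lemma:expurgation}, is supported only on sequences of Hamming weight at most $(1+\kappa_n)kn$), with $K$ as defined in the statement, and with the free reference distribution $P'_{\Y}$ chosen to be $P_0^{\otimes n}$. This will produce a random $K$-type distribution $\widetilde{U}_m$ with $\E(\V(U'_m\wyxn,\widetilde{U}_m\wyxn)) \le \PP_{U'_m\wyxn}\big(\log\frac{\wyxn(\Y|\X)}{P_0^{\otimes n}(\Y)} > \zeta\big) + \tfrac{1}{2}\sqrt{e^\zeta/K}$ for any $\zeta>0$; since the expectation is small, at least one realization of $\widetilde{U}_m$ achieves the claimed bound, which gives existence of the desired $K$-type distribution.

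The first key step is to choose $\zeta$ appropriately. Set $\zeta \triangleq (1+n^{-1/6})(1+\kappa_n)kn\DD(P_1\Vert P_0)$, so that $\log K = \lceil (1+n^{-1/6})\zeta\rceil \ge \zeta + n^{-1/6}\zeta - 1$, whence the second term $\tfrac12\sqrt{e^\zeta/K} \le \tfrac12\sqrt{e^{1-n^{-1/6}\zeta}}$, which decays like $\exp(-\Theta(n^{-1/6}\cdot n^{1/2})) = \exp(-\Theta(n^{1/3}))$ because $kn = \Theta(\sqrt n)$ and $\kappa_n\to 0$. So the resolvability term is comfortably of order $\exp(-c\,n^{1/3})$, which dominates (is smaller than) the target $\exp(-c_6 n^{1/6})$.

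The second, and main, step is to bound the information-density tail probability $\PP_{U'_m\wyxn}\big(\sum_{j=1}^n \log\frac{\wyx(Y_j|X_j)}{P_0(Y_j)} > \zeta\big)$ uniformly over all codewords $U'_m$ supported on low-weight sequences. Conditioning on the channel input $\x$ (which has $\mathrm{wt}_{\mathrm{H}}(\x)\le (1+\kappa_n)kn$), the sum $\sum_j \log\frac{\wyx(Y_j|X_j)}{P_0(Y_j)}$ equals $\sum_{j:\,x_j=1}\log\frac{P_1(Y_j)}{P_0(Y_j)}$, a sum of at most $(1+\kappa_n)kn$ independent bounded random variables (bounded because $P_1\ll P_0$ and the alphabets are finite) each with mean $\DD(P_1\Vert P_0)$. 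Its expectation is therefore at most $(1+\kappa_n)kn\DD(P_1\Vert P_0)$, and $\zeta$ exceeds this expectation by $n^{-1/6}(1+\kappa_n)kn\DD(P_1\Vert P_0) = \Theta(n^{-1/6}\sqrt n)=\Theta(n^{1/3})$. Hoeffding's inequality (Lemma~\ref{lemma:hoeffding}) then gives a tail bound of the form $\exp\big(-\Theta((n^{1/3})^2/(kn))\big) = \exp(-\Theta(n^{2/3}/\sqrt n)) = \exp(-\Theta(n^{1/6}))$, uniformly in $\x$ and hence in $U'_m$ after averaging. Combining the two terms, $\E(\V(U'_m\wyxn,\widetilde{U}_m\wyxn)) \le \exp(-\Theta(n^{1/6}))$, and picking $c_6$ small enough and taking any realization attaining at most the mean completes the proof.

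The main obstacle I anticipate is bookkeeping on the exponents: one must verify that the slack $n^{-1/6}$ inserted into the definitions of both $K$ and $\zeta$ is large enough that the Hoeffding deviation $n^{-1/6}\cdot kn\DD(P_1\Vert P_0)$ produces, after dividing by the number of summands $\Theta(kn)=\Theta(\sqrt n)$, a net exponent of $n^{1/6}$ rather than something smaller, while simultaneously the resolvability term $\sqrt{e^\zeta/K}$ stays at least as small. This is the reason the exponent in the conclusion is $n^{1/6}$ and not something larger — the two competing requirements on $\zeta$ (it must exceed the information density mean by enough for Hoeffding, yet $\log K$ must exceed $\zeta$ by enough to kill $\sqrt{e^\zeta/K}$) force the choice, and the verification that both hold simultaneously with the stated $\log K$ is the only delicate point; everything else is a routine application of Lemmas~\ref{lemma:hoeffding} and~\ref{lemma:CR}.
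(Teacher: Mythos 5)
Your proposal is correct and follows essentially the same route as the paper's proof: invoke Lemma~\ref{lemma:CR} with $P_{\X}=U'_m$, $P'_{\Y}=P_0^{\otimes n}$, and $\zeta = (1+n^{-1/6})(1+\kappa_n)kn\DD(P_1\Vert P_0)$; condition on the Hamming weight $q\le(1+\kappa_n)kn$ of $\x$ so the information density becomes a sum of $q=\Theta(\sqrt n)$ bounded i.i.d.\ terms; apply Hoeffding with deviation $\Theta(n^{1/3})$ to obtain the $\exp(-\Theta(n^{1/6}))$ tail bound; and check that $\sqrt{e^{\zeta}/K}$ decays like $\exp(-\Theta(n^{1/3}))$, which is dominated. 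The exponent bookkeeping you flag as the delicate point is indeed the crux, and your accounting is accurate.
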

\begin{proof}[Proof of Lemma~\ref{lemma:approximation}]
	Consider a specific $m \in \M'$ with codeword $U'_m$. Substituting $P_{\X}$ with $U'_m$, $P'_{\Y}$ with $P_0^{\otimes n}$, and setting $\zeta \triangleq (1+n^{-1/6})(1+\kappa_n)kn\DD(P_1 \Vert P_0)$ in Lemma~\ref{lemma:CR}, we have 
	\begin{align}
	\PP_{U'_m\wzxn}\left(\log\frac{\wyxn(\Y|\X)}{P_0^{\otimes n}(\Y)} > \zeta \right) 
	&= \sum_{\x}\sum_{\y} U'_m(\x)\wyxn(\y|\x) \mathbbm{1}\left(\log\frac{\wyxn(\y|\x)}{P_0^{\otimes n}(\y)} > \zeta \right)\notag \\
	&= \sum_{q = 0}^{(1+\kappa_n)kn} \sum_{\x: \text{wt}_{\mathrm{H}}(\x) = q}  U'_m(\x) \sum_{\y} \wyxn(\y|\x)\times \mathbbm{1}\left(\log\frac{\wyxn(\y|\x)}{P_0^{\otimes n}(\y)} > \zeta \right) \label{eq:weight} \\
	&=\sum_{q = 0}^{(1+\kappa_n)kn} \sum_{\x: \text{wt}_{\mathrm{H}}(\x) = q} U'_m(\x)\ \PP_{P_1^{\otimes q}}\left(\sum_{i=1}^q \log\frac{P_1(Y_i)}{P_0(Y_i)} > \zeta \right), \label{eq:weight2}
	\end{align}     
	where in~\eqref{eq:weight} we partition $\x$ into different type classes characterized by their Hamming weights, and~\eqref{eq:weight2} is obtained by assuming $x_i = 1$ for $i \in [1:q]$ and $x_{i} = 0$ for $i \in [q+1:n]$ without loss of generality. Also note that
	\begin{align}
		\zeta - q\DD(P_1 \Vert P_0) \ge \zeta - (1+\kappa_n)kn\DD(P_1 \Vert P_0) = n^{-1/6}(1+\kappa_n)kn \DD(P_1 \Vert P_0) \triangleq \Upsilon. \label{eq:U}
	\end{align} 
Thus, we have
	\begin{align}
		\PP_{P_1^{\otimes q}}\left(\sum_{i=1}^q \log\frac{P_1(Y_i)}{P_0(Y_i)} > \zeta \right) 
		&\le \PP_{P_1^{\otimes q}}\left(\sum_{i=1}^q \log\frac{P_1(Y_i)}{P_0(Y_i)} - q\DD(P_1 \Vert P_0) > \Upsilon \right) \label{eq:sh} \\
		&\le \exp\big(-c_7 n^{1/6} \big), \label{eq:hoe}
	\end{align}
where~\eqref{eq:sh} is obtained by subtracting $q\DD(P_1 \Vert P_0)$ from both sides and by the inequality in~\eqref{eq:U}, and~\eqref{eq:hoe} holds for some constant $c_7 > 0$ and is obtained by applying Hoeffding's inequality. Hence, the term in~\eqref{eq:weight2} is bounded from above by $\exp\left(-c_7 n^{1/6} \right)$. Furthermore, one can also show that $\sqrt{e^{\zeta}/K} \le \exp (-n^{1/6}\zeta )$. 

Therefore, by Lemma~\ref{lemma:CR}, for every message $m \in \M'$, there exists a $K$-type distribution $\tUm$ such that 
	\begin{align}
	\V\left(U'_m\wyxn, \tUm\wyxn \right) \le \exp \big(-c_7 n^{1/6} \big) +  \exp\big(-n^{1/6}\zeta \big) 
	\le \exp\big(-c_6 n^{1/6}\big),\notag
	\end{align}
	for some constant $c_6 > 0$ and all $n$ large enough. 
\end{proof}
In the following, we apply standard channel identification converse techniques to the code $\C'$. For any $m,m' \in \M'$ such that $m \ne m'$, we have 
\begin{align}
\V\left(U'_m \wyxn, U'_{m'} \wyxn \right) \ge U'_m \wyxn(\D_m) - U'_{m'} \wyxn(\D_m) \ge 1-(\lambda^{(1)}_n)^{1/2} - (\lambda^{(2)}_n)^{1/2}, \label{eq:contradiction}
\end{align} 
where the last inequality is due to Lemma~\ref{lemma:expurgation} which states that the error probabilities of $\C'$ satisfies $P_{\mathrm{err}}^{(1)} \le (\lambda_n^{(1)})^{1/2}$ and $P_{\mathrm{err}}^{(2)} \le (\lambda_n^{(2)})^{1/2}$.
Meanwhile, from Lemma~\ref{lemma:approximation} we know that there exists a set of  $K$-type distributions $\{\tUm \}_{m \in \M'}$ such that 
\begin{align}
\V(U'_m \wyxn, \tUm \wyxn) \le \exp\left(-c_6 n^{1/6} \right), \ \forall m \in \M'. \label{eq:contradiction2}
\end{align}
Combining~\eqref{eq:contradiction} and~\eqref{eq:contradiction2}, we have the following claim.
\begin{lemma} \label{claim}
	For sufficiently large $n$,	the distributions in $\{\tUm \}_{m \in \M'}$ are distinct, i.e., there does not exist $(m,m')$ with $m \ne m'$ such that $\tUm = \widetilde{U}_{m'}$.
\end{lemma}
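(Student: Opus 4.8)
The plan is to argue by contradiction using the triangle inequality for the variational distance, combining the ``separation'' bound~\eqref{eq:contradiction} with the ``approximation'' bound~\eqref{eq:contradiction2}. Suppose, for the sake of contradiction, that there exist two distinct messages $m, m' \in \M'$ with $m \ne m'$ such that $\tUm = \widetilde{U}_{m'}$. Then the induced output distributions on $\mathcal{Y}^n$ are also identical, i.e., $\tUm\wyxn = \widetilde{U}_{m'}\wyxn$, so in particular $\V(\tUm\wyxn, \widetilde{U}_{m'}\wyxn) = 0$.

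Next I would apply the triangle inequality to the three distributions $U'_m\wyxn$, $\tUm\wyxn = \widetilde{U}_{m'}\wyxn$, and $U'_{m'}\wyxn$:
\begin{align}
\V\left(U'_m\wyxn, U'_{m'}\wyxn\right)
&\le \V\left(U'_m\wyxn, \tUm\wyxn\right) + \V\left(\tUm\wyxn, U'_{m'}\wyxn\right) \notag \\
&= \V\left(U'_m\wyxn, \tUm\wyxn\right) + \V\left(\widetilde{U}_{m'}\wyxn, U'_{m'}\wyxn\right) \notag \\
&\le 2\exp\left(-c_6 n^{1/6}\right), \notag
\end{align}
where the final bound uses~\eqref{eq:contradiction2} applied to both $m$ and $m'$. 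On the other hand, the separation bound~\eqref{eq:contradiction} gives $\V\left(U'_m\wyxn, U'_{m'}\wyxn\right) \ge 1 - (\lambda_n^{(1)})^{1/2} - (\lambda_n^{(2)})^{1/2}$. Since $\lambda_n^{(1)}, \lambda_n^{(2)} \to 0$, the right-hand side tends to $1$, while $2\exp(-c_6 n^{1/6}) \to 0$; hence for all sufficiently large $n$ we obtain $1 - (\lambda_n^{(1)})^{1/2} - (\lambda_n^{(2)})^{1/2} \le 2\exp(-c_6 n^{1/6})$, which is a contradiction. Therefore no such pair $(m,m')$ exists, and the distributions $\{\tUm\}_{m \in \M'}$ are pairwise distinct for $n$ large enough.

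There is essentially no hard step here: the argument is a routine contradiction via the triangle inequality, and all the substantive work has already been done in Lemmas~\ref{lemma:expurgation} and~\ref{lemma:approximation}. The only minor point to be careful about is that $\tUm = \widetilde{U}_{m'}$ as distributions on $\mathcal{X}^n$ genuinely forces the output distributions through $\wyxn$ to coincide (which is immediate since the channel map is deterministic given the input distribution), and that the constants and vanishing sequences line up so that the two sides of the inequality are separated for large $n$. This lemma then feeds directly into the final counting step: since the number of distinct $K$-type distributions on $\mathcal{X}^n$ is at most $(K+1)^{2n}$ (or a comparable bound), distinctness of $\{\tUm\}_{m\in\M'}$ yields $|\M'| \le (K+1)^{2n}$, and hence $|\M| \le (n+1)(K+1)^{2n}$, from which the desired converse bound on $\limsup_n \log\log|\M|/\sqrt{n}$ follows by substituting the choice of $\log K$ and $k$.
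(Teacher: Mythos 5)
Your proof is correct and follows exactly the same approach as the paper: assume $\tUm = \widetilde{U}_{m'}$, apply the triangle inequality for variational distance through the common output distribution, bound both legs by $\exp(-c_6 n^{1/6})$ via Lemma~\ref{lemma:approximation}, and contradict the lower bound in~\eqref{eq:contradiction} for large $n$. (One small side remark on the closing paragraph, which is outside the lemma itself: the correct count of distinct $K$-type distributions on $\mathcal{X}^n$ is $|\mathcal{X}^n|^K = |\mathcal{X}|^{nK}$, as in the paper, not $(K+1)^{2n}$; the type-counting bound $(K+1)^{M}$ with $M=|\mathcal{X}^n|=2^n$ would be far too weak to yield the converse.)
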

\begin{proof}[Proof of Lemma~\ref{claim}]
	Suppose $\tUm = \widetilde{U}_{m'}$ for some $m \ne m'$. By the triangle inequality, we have 
	\begin{align}
	\V(U'_m \wyxn, U'_{m'} \wyxn)
	&\le \V(U'_m \wyxn, \tUm \wyxn) + \V(\tUm \wyxn, U'_{m'}\wyxn)\notag\\
	&= \V(U'_m \wyxn, \tUm \wyxn) + \V(\widetilde{U}_{m'}\wyxn, U'_{m'}\wyxn) \notag\\
	&\le 2\exp \big(-c_6 n^{1/6} \big),\notag
	\end{align}
	which contradicts~\eqref{eq:contradiction} for sufficiently large $n$.
\end{proof}
It is worth noting that the number of distinct $K$-type distributions on $\mathcal{X}^n$ is at most $|\mathcal{X}|^{nK}$. Thus, combining Lemma~\ref{lemma:approximation} and Lemma~\ref{claim}, we have 
\begin{align}
|\M'| \le |\mathcal{X}|^{nK},\notag
\end{align}
and by taking iterated logarithms on both sides, we have 
\begin{align}
\log \log |\M'| \le \log K + \log n + \log \log |\mathcal{X}|.\notag
\end{align}
Therefore, by recalling  that $|\M'| \ge |\M|/(n+1)$, $\log K = \lceil (1+n^{-1/6})^2(1+\kappa_n)kn\DD(P_1 \Vert P_0) \rceil$, and $k = \sqrt{\frac{2\delta}{\chi_2(Q_1 \Vert Q_0)}}\left(\frac{1}{\sqrt{n}} + \frac{c_5}{n} \right)$, we eventually obtain that 
\begin{align*}
\limsup_{n \to \infty} \frac{\log \log |\M|}{\sqrt{n}}
&\le \limsup_{n \to \infty} \frac{\log \log |\M'|}{\sqrt{n}} \\
&\le \limsup_{n \to \infty} \left(\frac{\log K}{\sqrt{n}} + \frac{\log n}{\sqrt{n}} + \frac{\log \log |\mathcal{X}|}{\sqrt{n}}  \right) \\
&= \sqrt{\frac{2\delta}{\chi_2(Q_1 \Vert Q_0)}} \DD(P_1 \Vert P_0) = C_{\delta}.
\end{align*}
This completes the proof of the converse part.

\section{Concluding Remarks} \label{sec:conclusion}

This work investigates the covert identification problem over binary-input discrete memoryless channels, showing that an ID message of size $\exp(\exp(\Theta(\sqrt{n})))$ can be reliably and covertly transmitted over $n$ channel uses. We also characterize the covert identification capacity and show that it equals the covert capacity in the standard covert communication problem. The covert identification capacity  can  be achieved  without any shared key.  

Finally, we put forth several directions that we believe are fertile avenues for future research.
\begin{itemize}
	\item Strictly speaking, the converse result established in Section~\ref{sec:converse} is commonly known as a \emph{weak converse} because all three error probabilities are allowed to vanish as $n$ grows. One would then expect that a \emph{strong converse} for the covert identification problem can be shown. This can perhaps be achieved following the lead of~\cite{han1992new} and~\cite[Chapter 6]{koga2013information} for the standard channel identification problem. The key limitation of our converse technique that prevents us from deriving the strong converse is the use of Lemma~\ref{lemma:expurgation} (Expurgation Lemma), wherein we expurgate  many high-weight sequences such that the error probabilities of the expurgated code increase significantly. Thus, a promising way to circumvent this issue might be developing a more general result for channel resolvability with stringent input constraints (i.e., extending the applicability of Lemma~\ref{lemma:approximation}), instead of applying the Expurgation Lemma.
	
	\item Having established the (first-order) fundamental limits, it is then natural to derive the \emph{error exponent} of the covert identification problem. One may follow the lead of the error exponent analysis for the standard identification problem by Ahlswede and Dueck~\cite{ahlswede1989identification}. However, due to the stringent input constraints mandated by the covertness constraints, this strategy requires special care and new analytical techniques to obtain closed-form expressions.
	
	\item In addition to the KL-divergence metric studied in this work, it is also worth considering alternative covertness metrics such as the variational distance and the probability of missed detection~\cite{tahmasbi2018first}. 
\end{itemize}

\appendices

\section{Proof of Lemma~\ref{claim:perr2}} \label{appendix:lemma3}
	For any PPM-sequence $\widetilde{\x} \in \mathcal{X}^n$, we have
	\begin{align}
	\E_{\ppmx}\left(\wyxn(\F_{\widetilde{\x}}|\X)\right)
	&= \sum_{\x}\ppmx(\x) \sum_{\y}\wyxn(\y|\x)\mathbbm{1}\left\{\log\frac{\wyxn(\y|\widetilde{\x})}{P_0^{\otimes n}(\y)} > \gamma\sqrt{n} \right\} \notag\\
	&= \sum_{\y} \frac{\ppmy(\y)}{P_0^{\otimes n}(\y)} P_0^{\otimes n}(\y) \mathbbm{1}\left\{\log\frac{\wyxn(\y|\widetilde{\x})}{P_0^{\otimes n}(\y)} > \gamma\sqrt{n} \right\} \notag\\
	&\le e^{-\gamma\sqrt{n}} \sum_{\y} \frac{\ppmy(\y)}{P_0^{\otimes n}(\y)} \wyxn(\y|\widetilde{\x}) \label{eq:ha1} \\
	&= e^{-\gamma\sqrt{n}} \sum_{\y} \frac{\left(\prod_{i=1}^l \pwy(\uy^{(i)})\right) \cdot P_0^{\otimes s}(y_{wl+1}^n)}{P_0^{\otimes n}(\y)}\wyxn(\y|\widetilde{\x}) \label{eq:ha2}\\
	&= e^{-\gamma\sqrt{n}} \left(\prod_{i=1}^l \sum_{\uy^{(i)}} \frac{\pwy(\uy^{(i)})}{P_0^{\otimes w}(\uy^{(i)})} \wyxw(\uy^{(i)}|\widetilde{\ux}^{(i)}) \right)  \left(\sum_{y_{wl+1}^n} \frac{P_0^{\otimes s}(y_{wl+1}^n)}{P_0^{\otimes s}(y_{wl+1}^n)}P_0^{\otimes s}(y_{wl+1}^n) \right), \label{eq:stand2}
	\end{align}
	where~\eqref{eq:ha1} holds since we only consider $\y$ that satisfies $\log\left(\wyxn(\y|\widetilde{\x})/ P_0^{\otimes n}(\y)\right) > \gamma\sqrt{n}$, and~\eqref{eq:ha2} follows from~\eqref{eq:mean}. Without loss of generality, we consider the first interval $[1:w]$ such that $\uy^{(1)} = [y_1, \ldots, y_w]$ and $\widetilde{\ux}^{(1)} = [\widetilde{x}_1, \ldots, \widetilde{x}_w]$, and by symmetry we further assume $\widetilde{x}_1 = 1$ and $\widetilde{x}_j = 0$ for $j \in [2:w]$. Thus, 
	\begin{align}
	\sum_{\uy^{(1)}} \frac{\pwy(\uy^{(1)})}{P_0^{\otimes w}(\uy^{(1)})} \wyxw(\uy^{(1)}|\widetilde{\ux}^{(1)}) 
	&= \sum_{\uy^{(1)}} \frac{\pwy(\uy^{(1)})}{P_0^{\otimes w}(\uy^{(1)})} \left(P_1(y_1) \prod_{j=2}^w P_0(y_j)\right)	\notag\\
	&= \sum_{y_1}\frac{(\pwy)_1(y_1)}{P_0(y_1)}P_1(y_1) \label{eq:july}\\
	&= \sum_{y_1}\frac{\frac{1}{w}P_1(y_1)+\frac{w-1}{w}P_0(y_1)}{P_0(y_1)}P_1(y_1) \notag\\
	&= 1 + \frac{1}{w}(\xi-1), \label{eq:stand}
	\end{align}
	where $(\pwy)_1$ in~\eqref{eq:july} stands for the marginal distribution of $\pwy$ which takes the form $\frac{1}{w}P_1+\frac{w-1}{w}P_0$. 
	Combining~\eqref{eq:stand2} and~\eqref{eq:stand} and applying the inequality $\log(1+x) \le x$, we have
	\begin{align}
	\E_{\ppmx}\left(\wyxn(\F_{\widetilde{\x}}|\X)\right) \le e^{-\gamma\sqrt{n}}\left(1 + \frac{1}{w}(\xi-1)\right)^l \le e^{-\gamma\sqrt{n}} e^{l(\xi-1)/w},\notag
	\end{align}
	which completes the proof.

\section{Proof of Lemma~\ref{lemma:6}} \label{appendix:lemma6}
	We first borrow a result from~\cite[Eq. (10)]{hou2013informational} and~\cite[Eq. (81)]{tahmasbi2018first} which states that  
\begin{align}
\E\left(\DD\left(\widehat{Q}^n_{\C} \Vert \ppmz \right)\right) \le \E_{\ppmx\wzxn}\left( \log\left(1+\frac{\wzxn(\Z|\X)}{|\M|N \ppmz(\Z)} \right)\right). \label{eq:combine1}
\end{align}
Let $\tau \triangleq 2t\DD(Q_1 \Vert Q_0)$ and $$\B_{\tau} \triangleq \left\{(\x,\z): \log(\wzxn(\z|\x)/Q_0^{\otimes n}(\z)) < \tau\sqrt{n} \right\}.$$ Then, by partitioning $(\x,\z)$ into $(\x,\z) \in \B_{\tau}$ and $(\x,\z) \notin \B_{\tau}$, the term in~\eqref{eq:combine1} can be expressed as  
\begin{align}
& \sum_{(\x,\z) \in \B_{\tau}}  \ppmx(\x)\wzxn(\z|\x) \log\left(1  + \frac{\wzxn(\z|\x)}{|\M|N Q_0^{\otimes n}(\z)} \frac{Q_0^{\otimes n}(\z)}{\ppmz(\z)} \right) + \sum_{(\x,\z) \notin \B_{\tau}} \ppmx(\x)\wzxn(\z|\x)\log\left(1 + \frac{\wzxn(\z|\x)}{|\M|N \ppmz(\z)} \right). \label{eq:sj}
\end{align}	
The first term of~\eqref{eq:sj} is bounded from above by
\begin{align}
\frac{e^{\tau\sqrt{n}}}{|\M|N}\sum_{(\x,\z) \in \B_{\tau}}\ppmx(\x)\wzxn(\z|\x)\frac{Q_0^{\otimes n}(\z)}{\ppmz(\z)} \le \frac{e^{\tau\sqrt{n}}}{|\M|N}, \label{eq:glass}
\end{align}
and the second term of~\eqref{eq:sj} is bounded from above by 
\begin{align}
\log\left(1 + \frac{1}{(|\M|N)\min_{\z:\ppmz(\z)>0}\ppmz(\z)} \right) \times\PP_{\ppmx\wzxn}\left(\log\frac{\wzxn(\Z|\X)}{Q_0^{\otimes n}(\Z)} \ge \tau\sqrt{n} \right). \label{eq:combine2}
\end{align}

We now consider the first term in~\eqref{eq:combine2}. Recall that it is assumed $Q_1 \ll Q_0$, and without loss of generality, we assume there does not exist a symbol $z$ such that $Q_1(z) = Q_0(z) = 0$ . Let $\mathcal{Z}'\triangleq \{z\in\mathcal{Z}: Q_1(z) = 0, Q_0(z) > 0 \}$ be the subset of symbols that are impossible to be induced by the input symbol $X = 1$. Let $\mathcal{I}(\z) \triangleq \{j \in [1:n]: z_j \in \mathcal{Z}' \}$ be the set of locations such that the corresponding elements belong to $\mathcal{Z}'$. Note that if $\z$ satisfies $\ppmz(\z) > 0$, the cardinality of $\mathcal{I}(\z)$ must satisfy $|\mathcal{I}(\z)| \le n - l$. For any $\z$ such that $\ppmz(\z) > 0$, one can always find an $\widetilde{\x}$ such that $\ppmx(\widetilde{\x}) > 0$ and $\mathcal{I}(\z) \cap \text{supp}(\widetilde{\x}) = \emptyset$; thus 
\begin{align}
\wzxn(\z|\widetilde{\x}) = \left(\prod_{j:\widetilde{x}_j = 1} P_1(z_j) \right) \left(\prod_{j:\widetilde{x}_j = 0} P_0(z_j) \right) \ge (\mu_1)^{l} (\mu_0)^{n - l} \ge \widetilde{\mu}^n, \label{eq:mu3}
\end{align}    
where $\mu_0 = \min_{z: Q_0(z)>0}Q_0(z)$, $\mu_1 = \min_{z: Q_1(z)>0}Q_1(z)$, and $\widetilde{\mu} = \min\{\mu_0, \mu_1\}$. Then, we have 
\begin{align}
(|\M|N) \min_{\z: \ppmz(\z)>0}\ppmz(\z) 
&= (|\M|N) \min_{\z: \ppmz(\z)>0} \sum_{\x} \ppmx(\x)\wzxn(\z|\x) \notag \\
&\ge \frac{(|\M|N)}{w^l} \min_{\z:\ppmz(\z)>0}  \sum_{\x: \ppmx(\x)>0} \wzxn(\z|\x) \label{eq:mu0} \\
&\ge \min_{\z:\ppmz(\z)>0}  \sum_{\x: \ppmx(\x)>0} \wzxn(\z|\x) \label{eq:mu1}\\*
&\ge \widetilde{\mu}^n. \label{eq:mu2}
\end{align}
where~\eqref{eq:mu1} holds since $|\M| = \exp\{e^{R\sqrt{n}} \}$ and $w^l = \exp\{\Theta(\sqrt{n}\log n) \}$, and~\eqref{eq:mu2} is true since we know from~\eqref{eq:mu3} that for every $\z$ such that $\ppmz(\z) > 0$, one can find an $\widetilde{\x}$ with $\ppmx(\widetilde{\x}) > 0$ to ensure 
\begin{align}
\sum_{\x: \ppmx(\x)>0}  \wzxn(\z|\x) \ge \wzxn(\z|\widetilde{\x}) \ge \widetilde{\mu}^n. \notag
\end{align}
Thus, we have 
\begin{align}
\log\left(1+ \frac{1}{(|\M|N)\min_{\z:\ppmz(\z)>0}\ppmz(\z) }\right) \le \log\left(1+ \widetilde{\mu}^n \right) \le \log\left((1+ \widetilde{\mu})^n \right) 
= n \log\left(1+ \widetilde{\mu} \right). \label{eq:combine3}
\end{align}

It then remains to consider the second term in~\eqref{eq:combine2}. Note that 
\begin{align}
\PP_{\ppmx\wzxn}\left(\log\frac{\wzxn(\Z|\X)}{Q_0^{\otimes n}(\Z)} \ge \tau\sqrt{n} \right) 
&= \sum_{\x} \ppmx(\x) \sum_{\z} \wzxn(\z|\x) \mathbbm{1}\left\{\log\frac{\wzxn(\z|\x)}{Q_0^{\otimes n}(\z)} \ge \tau\sqrt{n} \right\} \notag\\
&= \sum_{\z} \wzxn(\z|\x^*) \mathbbm{1}\left\{\sum_{j=1}^{n}\log\frac{\wzx(z_i|x^*_j)}{Q_0(z_j)} \ge \tau\sqrt{n} \right\} \label{eq:show1}\\
&= \PP_{Q_1^{\otimes l}}\left(\sum_{j=1}^{l}\log\frac{Q_1(Z_{(j-1)w+1})}{Q_0(Z_{(j-1)w+1})} \ge \tau\sqrt{n} \right), \label{eq:show2}
\end{align}
where~\eqref{eq:show1} is due to symmetry and recall that $\x^*$ is the weight-$l$ vector such that $x^*_{(j-1)w+1} = 1$ for $j \in [1:l]$. By noting that $\E(\sum_{j=1}^{l}\log\frac{Q_1(Z_{(j-1)w+1})}{Q_0(Z_{(j-1)w+1})}) = l\DD(Q_1 \Vert Q_0)$ and $\tau\sqrt{n} \triangleq 2l\DD(Q_1 \Vert Q_0)$, applying Hoeffding's inequality yields 
\begin{align}
\PP_{Q_1^{\otimes l}}\left(\sum_{j=1}^{l}\log\frac{Q_1(Z_{(j-1)w+1})}{Q_0(Z_{(j-1)w+1})} \ge \tau\sqrt{n} \right) \le 2e^{-c_4\sqrt{n}}  \label{eq:combine4}
\end{align}
for some constant $c_4 > 0$. By combining Eqns.~\eqref{eq:glass},~\eqref{eq:combine3} and~\eqref{eq:combine4}, we complete the proof of Lemma~\ref{lemma:6}.

\section{Proof of Lemma~\ref{lemma:CR}} \label{appendix:resolvability}
Let $\zeta > 0$, and we decompose $\widetilde{P}_{\Y}$ into two sub-distributions $\widetilde{P}^{(1)}_{\Y}$ and $\widetilde{P}^{(2)}_{\Y}$ such that
\begin{align}
&\widetilde{P}^{(1)}_{\Y}(\y) \triangleq \frac{1}{K}\sum_{i=1}^K \wyxn(\y|\x_i) \mathbbm{1}\left\{\log\frac{\wyxn(\y|\x_i)}{P'_{\Y}(\y)} > \zeta \right\}, \notag\\
&\widetilde{P}^{(2)}_{\Y}(\y) \triangleq \frac{1}{K}\sum_{i=1}^K \wyxn(\y|\x_i) \mathbbm{1}\left\{\log\frac{\wyxn(\y|\x_i)}{P'_{\Y}(\y)} \le \zeta \right\}.\notag
\end{align}
By noting that $P_{\Y}(\y) = \E(\widetilde{P}_{\Y}(\y))$, where the expectation is over the random generation of $\{\x_1, \ldots, \x_K \}$, we have 
\begin{align}
\E\left(\V\left(P_{\Y}, \widetilde{P}_{\Y}  \right) \right) 
&= \frac{1}{2}\E\left(\sum_{\y}\left|\E\left(\widetilde{P}_{\Y}(\y)\right) - \widetilde{P}_{\Y}(\y)  \right| \right) \notag\\
&\le \frac{1}{2}\E\left(\sum_{\y}\left|\E\left(\widetilde{P}^{(1)}_{\Y}(\y)\right) - \widetilde{P}^{(1)}_{\Y}(\y)  \right| \right) + \frac{1}{2}\E\left(\sum_{\y}\left|\E\left(\widetilde{P}^{(2)}_{\Y}(\y)\right) - \widetilde{P}^{(2)}_{\Y}(\y)  \right| \right). \label{eq:two}
\end{align}
The first term of~\eqref{eq:two} is bounded from above by 
\begin{align}
\sum_{\y}\E\left(\widetilde{P}^{(1)}_{\Y}(\y)\right)
&= \frac{1}{K}\sum_{i=1}^K \sum_{\y}\sum_{\x_i}P_{\X}(\x_i) \wyxn(\y|\x_i) \mathbbm{1}\left\{\log\frac{\wyxn(\y|\x_i)}{P'_{\Y}(\y)} > \zeta \right\} \notag\\
&= \PP_{P_{\X}\wyxn}\left(\log\frac{\wyxn(\Y|\X)}{P'_{\Y}(\Y)} > \zeta \right). \notag
\end{align} 
By applying Jensen's inequality, the second term of~\eqref{eq:two} is bounded from above by
\begin{align}
\frac{1}{2}\sum_{\y}\E\left(\sqrt{\left(\E\left(\widetilde{P}^{(2)}_{\Y}(\y)\right) - \widetilde{P}^{(2)}_{\Y}(\y)\right)^2} \right) &\le \frac{1}{2}\sum_{\y}\sqrt{\E\left[\left(\E\left(\widetilde{P}^{(2)}_{\Y}(\y)\right) - \widetilde{P}^{(2)}_{\Y}(\y)\right)^2\right]} \notag\\
&= \frac{1}{2}\sum_{\y} \sqrt{\text{Var}\left( \widetilde{P}^{(2)}_{\Y}(\y) \right) }, \label{eq:further}
\end{align}
and one can further show that
\begin{align}
\text{Var}\left( \widetilde{P}^{(2)}_{\Y}(\y) \right)
&= \frac{1}{K^2} \sum_{i=1}^K \text{Var}\left( \wyxn(\y|\X_i) \mathbbm{1}\left\{\log\frac{\wyxn(\y|\X_i)}{P'_{\Y}(\y)} \le \zeta \right\} \right) \notag\\
&\le \frac{1}{K^2} \sum_{i=1}^K \E\left( \wyxn(\y|\X_i)^2 \mathbbm{1}\left\{\log\frac{\wyxn(\y|\X_i)}{P'_{\Y}(\y)} \le \zeta \right\} \right) \notag\\
&\le \frac{1}{K^2} \sum_{i=1}^K \E\left( \wyxn(\y|\X_i) e^{\zeta} P'_{\Y}(\y) \right) \notag\\*
&= \frac{e^{\zeta}}{K} P'_{\Y}(\y) P_{\Y}(\y).\notag
\end{align}
By using the arithmetic-geometric mean inequality, we see that~\eqref{eq:further} is further bounded from above as 
\begin{align}
\frac{1}{2}\sum_{\y} \sqrt{\text{Var}\left( \widetilde{P}^{(2)}_{\Y}(\y) \right) } &\le \frac{1}{2}\sum_{\y} \sqrt{\frac{e^{\zeta}}{K}  P'_{\Y}(\y) P_{\Y}(\y)} \notag\\
&\le \frac{1}{2}\sqrt{\frac{e^{\zeta}}{K}} \sum_{\y} \frac{P'_{\Y}(\y) + P_{\Y}(\y)}{2}  \notag\\
&= \frac{1}{2}\sqrt{\frac{e^{\zeta}}{K}}.\notag
\end{align}

\ifCLASSOPTIONcaptionsoff
  \newpage
\fi

\bibliographystyle{IEEEtran}
\bibliography{reference}

\end{document}